\pdfoutput=1
\documentclass[11pt,fleqn]{article}

\usepackage[utf8]{inputenc}
\usepackage[T1]{fontenc}

\usepackage{comment}
\usepackage{fullpage}
\usepackage[indent,parfill]{parskip}
\usepackage{indentfirst}

\usepackage{amsmath,amsthm}
\usepackage{amssymb,amsfonts}
\usepackage{mathtools,thm-restate}

\usepackage{mathptmx}
\usepackage{esvect}  
\usepackage{stmaryrd}  
\usepackage[cal=pxtx,scr=boondox]{mathalpha}


\usepackage[style=alphabetic,natbib=true,sortcites=true,backref=true,maxnames=99,useprefix=true,maxalphanames=99]{biblatex}

\DefineBibliographyStrings{english}{%
  backrefpage = {$\Lsh$~p.},
  backrefpages = {$\Lsh$~pp.},
}
\setcounter{biburllcpenalty}{100}
\setcounter{biburlucpenalty}{200}
\setcounter{biburlnumpenalty}{100}
\addbibresource{main.bib}
\usepackage[colorlinks=true,citecolor=blue!65!black,linkcolor=red!75!black]{hyperref}
\usepackage[nameinlink]{cleveref}

\usepackage[basic,sanserif]{complexity}

\usepackage{booktabs,colortbl,multirow}
\usepackage{diagbox,enumitem}
\usepackage{framed,fancybox,ascmac}
\usepackage{subfig}
\usepackage{tablefootnote}
\usepackage{threeparttable}

\usepackage{comment}
\usepackage{url}

\usepackage[colorinlistoftodos,prependcaption]{todonotes}

\usepackage{xspace}
\usepackage{bm}
\usepackage{mleftright}
\mleftright

\crefname{theorem}{Theorem}{Theorems}
\crefname{proposition}{Proposition}{Propositions}
\crefname{lemma}{Lemma}{Lemmas}
\crefname{claim}{Claim}{Claims}
\crefname{corollary}{Corollary}{Corollaries}
\crefname{remark}{Remark}{Remarks}
\crefname{observation}{Observation}{Observations}
\crefname{hypothesis}{Hypothesis}{Hypotheses}
\crefname{fact}{Fact}{Facts}
\crefname{definition}{Definition}{Definitions}
\crefname{problem}{Problem}{Problems}
\crefname{example}{Example}{Examples}

\crefname{appendix}{Appendix}{Appendices}
\crefname{section}{Section}{Sections}
\crefname{equation}{Eq.}{Eqs.}
\crefname{figure}{Figure}{Figures}
\crefname{table}{Table}{Tables}
\crefname{algorithm}{Algorithm}{Algorithms}

\usepackage[ruled]{algorithm}
\usepackage[commentColor=blue]{algpseudocodex}

\renewcommand{\Return}{\textbf{return}\xspace}
\algrenewcommand\textproc{\textsl}

\renewcommand{\geq}{\geqslant}
\renewcommand{\leq}{\leqslant}
\renewcommand{\epsilon}{\varepsilon}
\renewcommand{\theta}{\vartheta}
\renewcommand{\phi}{\varphi}
\renewcommand{\bar}{\overline}

\renewcommand{\tilde}{\widetilde}

\newcommand{\prb}[1]{\textsc{#1}\xspace}

\newcommand{\sq}[1]{\vv{#1}}

\newcommand{\reco}{\leftrightsquigarrow}
\newcommand{\defeq}{\coloneq}

\let\E\relax\DeclareMathOperator*{\E}{\mathbb{E}}  
\let\Pr\relax\DeclareMathOperator*{\Pr}{\mathbb{Pr}}
\DeclareMathOperator{\bigO}{\mathcal{O}}
\DeclareMathOperator{\val}{\mathsf{val}}
\DeclareMathOperator{\opt}{\mathsf{opt}}


\newcommand{\sss}{\mathsf{start}}
\newcommand{\ttt}{\mathsf{end}}
\newcommand{\zo}{\{0,1\}}
\newcommand{\Yes}{\textsc{Yes}\xspace}
\newcommand{\No}{\textsc{No}\xspace}

\newcommand{\V}{\calV}
\renewcommand{\W}{\calW}
\newcommand{\X}{\calX}
\newcommand{\A}{\calA}
\newcommand{\pf}{\pi}
\newcommand{\qf}{\sigma}
\newcommand{\Pf}{\Pi}

\newcommand{\asg}{\alpha}
\newcommand{\bsg}{\beta}
\newcommand{\asgrnd}{\mathbf{A}}  

\newcommand{\stsqasg}{\scrA}


\newcommand{\gap}{g}
\newcommand{\rep}{\lambda}
\newcommand{\bal}{\mu}
\newcommand{\Vhorn}{\V_\mathrm{Horn}}
\newcommand{\scAND}{\textup{\texttt{AND}}\xspace}
\newcommand{\scOR}{\textup{\texttt{OR}}\xspace}
\newcommand{\scNOT}{\textup{\texttt{NOT}}\xspace}

\newcommand{\kSATReconf}{\prb{E$k$-SAT Reconfiguration}}
\newcommand{\MMkSATReconf}{\prb{Maxmin E$k$-SAT Reconfiguration}}
\newcommand{\treSATReconf}{\prb{E3-SAT Reconfiguration}}
\newcommand{\MMtreSATReconf}{\prb{Maxmin E3-SAT Reconfiguration}}

\newcommand{\calA}{\mathcal{A}}

\newcommand{\calI}{\mathcal{I}}

\newcommand{\calV}{\mathcal{V}}
\newcommand{\calW}{\mathcal{W}}
\newcommand{\calX}{\mathcal{X}}


\newcommand{\bbN}{\mathbb{N}}

\newcommand{\scrA}{\mathscr{A}}


\let\Pr\relax\DeclareMathOperator*{\Pr}{\mathbb{P}}

\newtheorem{theorem}{Theorem}[section]

\newtheorem{lemma}[theorem]{Lemma}
\newtheorem{claim}[theorem]{Claim}
\newtheorem{corollary}[theorem]{Corollary}

\newtheorem{fact}[theorem]{Fact}
\newtheorem{remark}[theorem]{Remark}
\theoremstyle{definition}
\newtheorem{definition}[theorem]{Definition}
\newtheorem{problem}[theorem]{Problem}
\newtheorem{example}[theorem]{Example}

\numberwithin{equation}{section}

\setcounter{tocdepth}{2}

\title{Asymptotically Optimal Inapproximability of \\ E$k$-SAT Reconfiguration}
\author{
Shuichi Hirahara \\
\small{National Institute of Informatics, Japan} \\
\small{\href{mailto:s_hirahara@nii.ac.jp}{\texttt{s\_hirahara@nii.ac.jp}}}
\and
Naoto Ohsaka \\
\small{CyberAgent, Inc., Japan} \\
\small{\href{mailto:ohsaka_naoto@cyberagent.co.jp}{\texttt{ohsaka\_naoto@cyberagent.co.jp}}}
}
\date{}

\begin{document}
\maketitle
\thispagestyle{empty}

\begin{abstract}In the \prb{Maxmin E$k$-SAT Reconfiguration} problem,
we are given a satisfiable $k$-CNF formula $\varphi$ where each clause contains exactly $k$ literals,
along with a pair of its satisfying assignments.
The objective is transform one satisfying assignment into the other
by repeatedly flipping the value of a single variable,
while maximizing the minimum fraction of satisfied clauses of $\varphi$ throughout the transformation.
In this paper, we demonstrate that the optimal approximation factor for \prb{Maxmin E$k$-SAT Reconfiguration}
is $1 - \Theta\left(\frac{1}{k}\right)$.
On the algorithmic side,
we develop a deterministic $\left(1-\frac{1}{k-1}-\frac{1}{k}\right)$-factor approximation algorithm for every $k \geq 3$.
On the hardness side, we show that
it is $\mathsf{PSPACE}$-hard to approximate this problem within a factor of $1-\frac{1}{10k}$
for every sufficiently large $k$.
Note that
an ``$\mathsf{NP}$ analogue'' of \prb{Maxmin E$k$-SAT Reconfiguration} is \prb{Max E$k$-SAT},
whose approximation threshold is $1-\frac{1}{2^k}$ shown by H\r{a}stad (JACM 2001).
To the best of our knowledge,
this is the first reconfiguration problem whose approximation threshold is
(asymptotically) \emph{worse} than that of its $\mathsf{NP}$ analogue.
To prove the hardness result,
we introduce a new ``non-monotone'' test,
which is specially tailored to reconfiguration problems, despite not being helpful in the PCP regime.
\end{abstract}

\section{Introduction}

\kSATReconf \cite{gopalan2009connectivity} is a canonical reconfiguration problem, defined as follows:
Let $\phi$ be a satisfiable \emph{E$k$-CNF formula}, where each clause contains exactly $k$ literals, over $n$ variables.
A sequence over assignments for $\phi$, denoted by $\sq{\asg} = (\asg^{(1)}, \ldots, \asg^{(T)})$, is
called a \emph{reconfiguration sequence} if
every adjacent pair of assignments $\asg^{(t)}$ and $\asg^{(t+1)}$ differ in a single variable.
In the \kSATReconf problem,
for a pair of satisfying assignments $\asg_\sss$ and $\asg_\ttt$ for $\phi$,
we are asked to decide if there exists a reconfiguration sequence $\sq{\asg}$ from $\asg_\sss$ to $\asg_\ttt$
consisting only of satisfying assignments for $\phi$.
In other words, \kSATReconf asks the $st$-connectivity question over
the \emph{solution space} of $\phi$,
which is the subgraph $G_\phi$ of the $n$-dimensional Boolean hypercube
induced by all satisfying assignments for $\phi$.
Studying \kSATReconf and its variants was originally motivated by the application to analyze
the structure of the solution space for Boolean formulas.
For a random instance $\phi$ of \prb{E$k$-SAT} (in a low-density regime),
the solution space $G_\phi$ breaks down into exponentially many ``clusters'' 
\cite{achlioptas2011solution,mezard2005clustering},
providing insight into the (empirical) performance of SAT solvers,
such as DPLL~\cite{achlioptas2004exponential} and Survey Propagation~\cite{mezard2002analytic}.
To shed light on the structure of the solution space in the \emph{worst case} scenario,
\citet[Theorem~2.9]{gopalan2009connectivity}
established a dichotomy theorem that classifies
the complexity of every reconfiguration problem over Boolean formulas
as $\cP$ or $\PSPACE$-complete; e.g.,
\kSATReconf is in $\cP$ if $k \leq 2$ and is $\PSPACE$-complete for every $k \geq 3$.
Moreover, the diameter of the connected components of $G_\phi$
can be exponential in the $\PSPACE$-complete case while
it is always linear in the $\cP$ case \cite[Theorem~2.10]{gopalan2009connectivity}.
See \cref{sec:related} for related work on other reconfiguration problems.

In this paper, we study \emph{approximability} of \kSATReconf.
Recently, approximability of reconfiguration problems has been studied from both hardness and algorithmic sides 
\cite{ohsaka2023gap,ohsaka2024gap,ohsaka2024alphabet,ohsaka2025approximate,hirahara2024probabilistically,hirahara2024optimal,karthikc.s.2023inapproximability,ohsaka2024tight,hirahara2025asymptotically,ohsaka2025yet}
(see also \cref{subsec:related:approx-reconf}).
In the approximate version of \kSATReconf, called \MMkSATReconf \cite{ito2011complexity},
for a satisfiable E$k$-CNF formula $\phi$ and
a pair of its satisfying assignments $\asg_\sss$ and $\asg_\ttt$,
we are asked to construct a reconfiguration sequence $\sq{\asg}$ from $\asg_\sss$ to $\asg_\ttt$
consisting of any (not necessarily satisfying) assignments for $\phi$.
The objective is to maximize the \emph{minimum} fraction of satisfied clauses of $\phi$,
where the minimum is taken over all assignments in $\sq{\asg}$.
Note that an ``$\NP$ analogue'' of \MMkSATReconf is \prb{Max E$k$-SAT}.
\begin{itembox}[l]{\MMkSATReconf}
\begin{tabular}{ll}
    \textbf{Input:}
    & a satisfiable E$k$-CNF formula $\phi$ and
    a pair of its satisfying assignments $\asg_\sss$ and $\asg_\ttt$.
    \\
    \textbf{Output:}
    & a reconfiguration sequence $\sq{\asg}$ from $\asg_\sss$ to $\asg_\ttt$.
    \\
    \textbf{Goal:}
    & maximize the minimum fraction of satisfied clauses of $\phi$ over all assignments in $\sq{\asg}$.
\end{tabular}
\end{itembox}

\noindent
Solving this problem, we may be able to find a ``reasonable'' reconfiguration sequence
consisting of almost-satisfying assignments, 
so that we can mange \No instances of \kSATReconf.
An example of \MMtreSATReconf is described as follows.

\begin{figure}[t]
    \centering
    \includegraphics[width=0.45\linewidth]{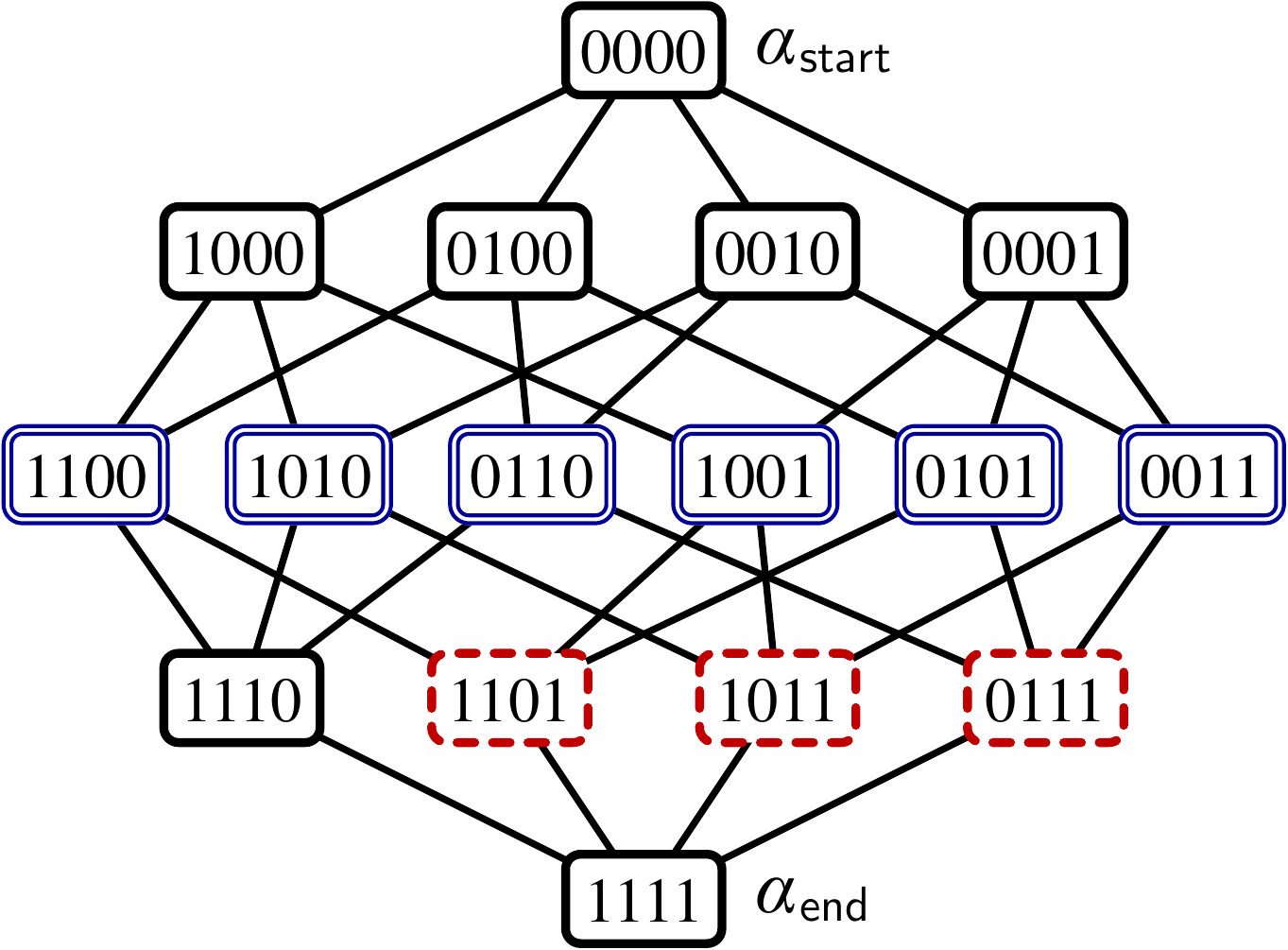}
    \caption{
        The solution space of \cref{ex}.
        Each assignment enclosed by a (blue) double line violates a single clause of an E$3$-CNF formula $\phi$,
        that enclosed by a (red) dashed line violates two clauses, and
        the other assignments satisfy $\phi$.
        Observe that we cannot transform $\asg_\sss$ into $\asg_\ttt$ without unsatisfying $\phi$;
        i.e., this is a \No instance of \treSATReconf.
        As an instance of \MMtreSATReconf,
        an optimal reconfiguration sequence is
        $(0000, 1000, 1100, 1110, 1111)$,
        whose objective value is $\frac{5}{6}$.
    }
    \label{fig:ex}
\end{figure}

\begin{oframed}
\begin{example}[\MMtreSATReconf]
\label{ex}
Let $\phi$
be an E$3$-CNF formula consisting of the following six clauses over four variables 
$x_1$, $x_2$, $x_3$, and $x_4$:
\begin{align}
\label{eq:ex}
\begin{aligned}
    C_1 & \defeq \bar{x_1}  \vee \bar{x_2} \vee x_3, &\quad
    C_4 & \defeq \bar{x_1} \vee x_2 \vee \bar{x_4}, \\
    C_2 & \defeq \bar{x_1}  \vee x_2 \vee \bar{x_3}, &\quad
    C_5 & \defeq \bar{x_2} \vee x_3 \vee \bar{x_4}, \\
    C_3 & \defeq x_1 \vee \bar{x_2} \vee \bar{x_3}, &\quad
    C_6 & \defeq x_1 \vee \bar{x_3} \vee \bar{x_4}.
\end{aligned}
\end{align}
Let $\asg_\sss \defeq 0000$ and $\asg_\ttt \defeq 1111$
be two satisfying assignments for $\phi$.
See \cref{fig:ex} for the solution space of $\phi$.
Observe that
$(\phi,\asg_\sss,\asg_\ttt)$ is a \No instance of \kSATReconf because
any reconfiguration sequence from $\asg_\sss$ to $\asg_\ttt$
passes through an assignment with exactly two $1$'s, which must violate one of the six clauses of $\phi$.
As an instance of \MMtreSATReconf,
any reconfiguration sequence from $\asg_\sss$ to $\asg_\ttt$ is considered feasible; e.g.,
$\sq{\asg} \defeq (0000, 0001, 0011, 0111, 1111)$
has the objective value $\frac{4}{6}$
since the fourth assignment $0111$ does not satisfy $C_3$ and $C_6$.
An optimal reconfiguration sequence is
$\sq{\asg}^* \defeq (0000, 1000, 1100, 1110, 1111)$,
whose objective value is $\frac{5}{6}$.
\end{example}
\end{oframed}

We review known results on the complexity of \MMkSATReconf.
For every $k \geq 3$, exactly solving \MMkSATReconf is $\PSPACE$-hard,
which follows from that of \kSATReconf \cite[Theorem~2.9]{gopalan2009connectivity}.
\citet[Theorem~5]{ito2011complexity} showed that \prb{Maxmin E$5$-SAT Reconfiguration}
is $\NP$-hard to approximate within a factor better than $\frac{15}{16}$.
For $\PSPACE$-hardness of approximation,
the \emph{Probabilistically Checkable Reconfiguration Proof} (PCRP) theorem
due to \citet[Theorem~1.5]{hirahara2024probabilistically} and \citet[Theorem~1]{karthikc.s.2023inapproximability},
along with a series of gap-preserving reductions due to \citet{ohsaka2023gap},
implies that
\MMtreSATReconf and \prb{Maxmin E2-SAT Reconfiguration} are
$\PSPACE$-hard to approximate within some constant factor.
So far, the \emph{asymptotic} behavior of approximability for \MMkSATReconf with respect to
the clause width $k$ is not well understood.

\subsection{Our Results}
\label{subsec:intro:results}
In this paper, we demonstrate that
the approximation threshold of \MMkSATReconf is $1 - \Theta\left(\frac{1}{k}\right)$.
On the algorithmic side,
we develop a deterministic $\left(1-\frac{1}{k-1}-\frac{1}{k}\right)$-factor approximation algorithm for every $k \geq 3$.

\begin{theorem}[informal; see \cref{thm:alg}]
\label{thm:intro:alg}
For an integer $k \geq 3$,
a satisfiable E$k$-CNF formula $\phi$, and
a pair of its satisfying assignments $\asg_\sss$ and $\asg_\ttt$,
there exists a polynomial-length reconfiguration sequence from $\asg_\sss$ to $\asg_\ttt$ in which
every assignment satisfies at least $\left(1-\frac{1}{k-1}-\frac{1}{k}\right)$-fraction of the clauses of $\phi$.
Moreover, such a reconfiguration sequence can be found by a deterministic polynomial-time algorithm.
In particular, this algorithm approximates \MMkSATReconf
within a factor of $1-\frac{1}{k-1}-\frac{1}{k}$.
\end{theorem}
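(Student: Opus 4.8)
The plan is to build the reconfiguration sequence explicitly and to bound, at every step, the number of clauses of $\phi$ that fail; the factor $1-\frac{1}{k-1}-\frac{1}{k}$ will emerge as the sum of two separate sources of loss.

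\emph{Reduction to ``critical'' clauses.} Let $D$ be the set of variables on which $\asg_\sss$ and $\asg_\ttt$ disagree. Call a clause $C$ \emph{safe} if some literal of $C$ is satisfied by both $\asg_\sss$ and $\asg_\ttt$; such a literal is necessarily over a variable outside $D$, so if our sequence only ever flips $D$-variables (plus occasional temporary toggles of variables outside $D$ that it then restores), a safe clause that we are careful not to disturb stays satisfied and can be ignored. For a non-safe clause $C$, let $A_C$ (resp.\ $B_C$) be the set of $D$-variables whose literal in $C$ is satisfied by $\asg_\sss$ (resp.\ by $\asg_\ttt$): both sets are nonempty, they are disjoint, and the remaining $k-|A_C|-|B_C|$ variables of $C$ — its \emph{buffers} — lie outside $D$ with their $C$-literal falsified by both endpoints. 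The partial assignment whose flipped set is $S\subseteq D$ falsifies $C$ exactly when $A_C\subseteq S$, $B_C\cap S=\varnothing$, and no buffer of $C$ is currently toggled. So it suffices to schedule the flips (and buffer toggles) so that at each step at most $\bigl(\frac{1}{k-1}+\frac{1}{k}\bigr)m$ non-safe clauses are simultaneously in this ``open-window'' state, where $m$ is the number of clauses.

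\emph{A random order, with rescues.} Draw a uniformly random linear order on $D$ and flip its variables in that order. For a non-safe clause $C$ with $|A_C|=a$, $|B_C|=b$, its window is open only from the flip of the last variable of $A_C$ to the flip of the first variable of $B_C$; at the step where a $\rho$-fraction of $D$ has been flipped, this happens with probability about $\rho^{a}(1-\rho)^{b}$. For a \emph{buffer-free} clause ($a+b=k$) this is $\rho^{a}(1-\rho)^{k-a}$, which is monotone in $a$ (the ratio of consecutive values is $\rho/(1-\rho)$), hence at most $\max\{\rho(1-\rho)^{k-1},\,\rho^{k-1}(1-\rho)\}\le\frac{1}{k}\bigl(1-\frac{1}{k}\bigr)^{k-1}<\frac{1}{k}$; summing over clauses, buffer-free clauses contribute at most a $\frac{1}{k}$-fraction of violated clauses at every step in expectation — the first term. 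For a clause that does have a buffer, we \emph{rescue} it: just before the flip that opens its window we toggle one of its buffers so as to satisfy $C$, and we restore that buffer once a variable of $B_C$ has been flipped (from which point $\asg_\ttt$'s literal supports $C$). Since a buffer is shared by many clauses, a toggle can endanger others; we batch the rescues and charge each toggle against a budget of $|A_C|+|B_C|\le k-1$ ``danger-carrying'' variables per buffered clause, and a counting argument is meant to show the resulting loss is at most a $\frac{1}{k-1}$-fraction at every step — the second term. Interleaving the rescues with the random-order flips gives a polynomial-length reconfiguration sequence from $\asg_\sss$ to $\asg_\ttt$ in which, at every step, all safe clauses, all buffer-free clauses outside their short windows, and all buffered clauses are satisfied, so at least a $\bigl(1-\frac{1}{k-1}-\frac{1}{k}\bigr)$-fraction of clauses always holds — in expectation at each of the polynomially many steps, hence, by the method of conditional expectations applied to a pessimistic estimator that also controls the maximum over steps (not merely the average), for a fixed order that the algorithm computes deterministically in polynomial time. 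As the algorithm always returns a valid reconfiguration sequence meeting this bound, it is a $\bigl(1-\frac{1}{k-1}-\frac{1}{k}\bigr)$-factor approximation for \MMkSATReconf.

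\emph{Where the difficulty lies.} The routine parts are the structural reduction and the buffer-free estimate. The real work is the rescue mechanism: (i)~a toggle that falsifies some clause currently resting on that buffer must itself be rescued, and this recursion must be shown to terminate and keep the sequence polynomial; (ii)~the clauses damaged by toggles must be charged tightly enough that the loss is exactly $\frac{1}{k-1}$ rather than some larger constant; (iii)~the buffer-free and buffered regimes must be made to interact only additively; and (iv)~the per-step expectation bound must be upgraded to one that holds simultaneously at every step of a single concrete sequence. Balancing all of this so the two losses sum to precisely $\frac{1}{k-1}+\frac{1}{k}$ — matching, up to the leading $\Theta(1/k)$, the hardness bound of $1-\frac{1}{10k}$ — is the crux; note in particular that a purely \emph{monotone} reconfiguration provably cannot do this (a complete ``digraph'' of critical clauses forces a $\tfrac14$-fraction to fail at the half-way step for large $k$), so the non-monotone buffer toggles are genuinely necessary.
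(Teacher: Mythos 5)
Your proposal is not a proof but a plan whose central components are missing, and you say so yourself: the rescue mechanism is exactly ``where the difficulty lies,'' and nothing in the text establishes (i) that cascading rescues terminate and keep the sequence polynomial (a buffer toggle can falsify \emph{safe} clauses and other buffered clauses resting on that variable, triggering further rescues), (ii) that the damage from toggles can be charged so that the loss is $\frac{1}{k-1}$ rather than some larger constant, or (iv) that the per-step \emph{expected} violated fraction can be turned into a bound on the \emph{maximum} over steps of a single concrete sequence. This last point is not a technicality: the quantity to be bounded is $\min_t \val_\phi(\asg^{(t)})$, and a bound of the form ``at each step the expected violated fraction is at most $\frac{1}{k}$'' does not control $\E[\max_t(\text{violated fraction})]$, nor does it yield a pessimistic estimator for the method of conditional expectations, since different clauses have their windows at different times. (For the buffer-free part this could be repaired: a buffer-free clause with $|A_C|=a$ is violated at \emph{some} step iff all of $A_C$ precedes all of $B_C$ in the random order, which has probability $1/\binom{k}{a}\le\frac{1}{k}$, a per-clause ``violated at some time'' bound that survives linearity of expectation. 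But no analogous statement is even formulated for the buffered clauses, and the claim that the rescue loss is ``exactly $\frac{1}{k-1}$'' is asserted, not argued.) As it stands, the two-term decomposition $\frac{1}{k-1}+\frac{1}{k}$ is a target, not a consequence of anything proved.

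For comparison, the paper's route is quite different and avoids all of these obstacles. It does not restrict the walk to the disagreement set with local rescues; instead it routes through a \emph{uniformly random full assignment} $\asgrnd$, flipping $\asg_\sss\triangle\asgrnd$ in random order and then $\asgrnd\triangle\asg_\ttt$ in random order. The key lemma (\cref{lem:alg:round}) bounds, for each clause $C$ separately, the probability that $C$ is satisfied by \emph{every} assignment in the whole sequence by $1-\frac{1}{k-1}-\frac{1}{k}$; this ``satisfied throughout'' indicator makes $\val_\phi(\sq{\asg})\ge\frac{1}{m}\sum_j \llbracket \sq{\asg}\text{ satisfies }C_j\rrbracket$, so linearity of expectation and a standard conditional-expectations derandomization immediately give the deterministic algorithm. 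The per-clause bound is obtained by a monotonicity claim (\cref{clm:alg:monotone}) reducing to the worst case where $\asg_\sss$ and $\asg_\ttt$ each satisfy a single literal of $C$, followed by an exact binomial-sum computation in the two cases $\asg_\sss\ne\asg_\ttt$ and $\asg_\sss=\asg_\ttt$. If you want to salvage your approach, you would at minimum need to replace the per-step expectation bounds by per-clause ``violated at some time'' bounds, fully specify the rescue schedule, and prove the $\frac{1}{k-1}$ charging bound together with termination of the rescue cascade --- none of which is present.
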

\noindent
\cref{thm:intro:alg} implies a structural property of the solution space that
every pair of satisfying assignments for an E$k$-CNF formula
can be connected only by almost-satisfying assignments.
For small $k$, the proposed algorithm has an approximation factor much better than $1-\frac{1}{k-1}-\frac{1}{k}$,
as shown in \cref{tab:approx}.

\begin{table}[t]
    \centering
    \caption{Approximation factor of \MMkSATReconf for $3 \leq k \leq 10$.}
    \label{tab:approx}
    \small
    \begin{tabular}{ccccccccc}
    \toprule
        $k$ & $3$ & $4$ & $5$ & $6$ & $7$ & $8$ & $9$ & $10$ \\
    \midrule
        \textbf{approximation factor} & $0.572$ & $0.631$ & $0.679$ & $0.718$ & $0.749$ & $0.775$ & $0.796$ & $0.814$ \\
    \bottomrule
    \end{tabular}
\end{table}

On the hardness side, we show the $\PSPACE$-hardness of $\left(1 - \frac{1}{10k}\right)$-factor approximation
for every sufficiently large $k$.\footnote{
In \cref{sec:hard9.333}, we show
the $\PSPACE$-hardness of $\left(1-\frac{3-\epsilon}{28k}\right)$-factor approximation,
which is slightly better than $1 - \frac{1}{10k}$.
}

\begin{theorem}[informal; see \cref{thm:hard9.333}]
\label{thm:intro:hard9.333}
There exists an integer $k_0 \in \bbN$ such that
for any integer $k \geq k_0$,
a satisfiable E$k$-CNF formula $\phi$, and
a pair of its satisfying assignments $\asg_\sss$ and $\asg_\ttt$,
it is $\PSPACE$-hard to distinguish between the following two cases\textup{:}
\begin{itemize}
    \item \textup{(}Completeness\textup{)}
        There exists a reconfiguration sequence from $\asg_\sss$ to $\asg_\ttt$
        consisting of satisfying assignments for $\phi$.\footnote{
            This is a \Yes instance of \kSATReconf.
        }
    \item \textup{(}Soundness\textup{)}
        Every reconfiguration sequence  from $\asg_\sss$ to $\asg_\ttt$ contains an assignment that
        violates more than a $\frac{1}{10k}$-fraction of the clauses of $\phi$.
\end{itemize}
In particular, \MMkSATReconf is $\PSPACE$-hard to approximate within a factor of $1-\frac{1}{10k}$ for every integer $k \geq k_0$.
\end{theorem}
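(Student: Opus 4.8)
The plan is to prove $\PSPACE$-hardness of approximation by a gap-preserving reduction to \MMkSATReconf from a constant-gap reconfiguration problem, arranging the reduction so that the unavoidable fraction of violated clauses scales like $\Theta(1/k)$. First I would fix a convenient ``outer'' source: a constraint-graph reconfiguration problem over a constant-size alphabet (e.g.\ a label-cover-style \prb{Maxmin 2-CSP Reconfiguration}) with \emph{perfect completeness} and soundness that is an \emph{arbitrarily small} constant $\epsilon$. Its $\PSPACE$-hardness of approximation follows by combining the PCRP theorem of \citet{hirahara2024probabilistically} and \citet{karthikc.s.2023inapproximability} with the gap-preserving reductions of \citet{ohsaka2023gap}, and pushing the soundness down to $\epsilon$ via the gap amplification for reconfiguration problems developed in \cite{hirahara2024optimal,hirahara2025asymptotically}. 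Perfect completeness is essential because the completeness case of the theorem is literally a \Yes instance of \kSATReconf.

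Next comes the heart of the argument: encode the label of each outer variable by a block of Boolean variables, and attach to each outer constraint a bundle of roughly $\MAGIC\cdot k$ clauses, each of width exactly $k$ --- this bundle is the ``non-monotone test.'' I would design the test with two properties. (i) If the two blocks incident to a constraint hold valid codes consistent with the constraint, then \emph{every} clause in the bundle is satisfied; moreover some clause in the bundle is violated whenever the decoded labels fail the constraint. (ii) A block can be re-encoded from the code of $a$ to the code of $b$ by a short sequence of single bit-flips during which, as long as the neighbouring blocks stay at valid codes, no clause of any incident bundle is violated. Property (ii) is exactly where non-monotonicity is used and where the standard (essentially monotone) long-code/dictatorship tests break: under a monotone test the only re-encoding paths climb through a long chain of intermediate block states, each violating many clauses, whereas a carefully non-monotone test admits a cheap ``shortcut.'' The clause count $\MAGIC\cdot k$ per bundle is then tuned so that the total number of clauses is about $\MAGIC\cdot k$ times the number of outer constraints; an outer reconfiguration forced to violate an $\epsilon$-fraction of constraints at some step then forces a $\tfrac{\epsilon}{\MAGIC\,k}$-fraction of violated clauses, which is larger than $\tfrac{1}{10k}$ (and, taking $\epsilon\to 1$, approaches $\tfrac{3}{28k}$ as in the footnote).

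Completeness is then immediate: given an outer reconfiguration through fully-satisfying assignments, change the labels one outer variable at a time, realizing each label change by the short re-encoding of property (ii); since every intermediate outer assignment still satisfies all outer constraints, each block --- except the single one currently being re-encoded, which is covered by (ii) --- holds a valid code of its current label, so all clauses stay satisfied throughout. For soundness I would take a hypothetical reconfiguration of the \prb{E$k$-SAT} instance along which every assignment violates fewer than a $\tfrac{1}{10k}$-fraction of clauses, and decode each assignment block-by-block (plurality/majority decoding, or influence-based list-decoding if a long code is used). Since fewer than a $\tfrac{1}{\MAGIC\,k}$-fraction of clauses are violated at every step, all but an $o(1)$-fraction of bundles are satisfied well enough to pin their two blocks to valid, mutually consistent codes; and a single bit-flip changes the decoded label of a block only rarely, so after subdividing we obtain a genuine outer reconfiguration sequence violating fewer than an $\epsilon$-fraction of constraints at every step --- contradicting the \No promise.

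The main obstacle is constructing the non-monotone test so that it simultaneously supports the ``shortcut'' re-encodings required for completeness (which rules out the usual tests and is the conceptual novelty of the paper) and is rigid enough that a bundle violated in a $<\tfrac{1}{\MAGIC\,k}$-fraction forces valid and consistent decoded labels for soundness (which rules out trivial tests), all while keeping every clause of width exactly $k$ and the bundle size as small as $\MAGIC\cdot k$ so the final inapproximability constant is as claimed. A secondary technical point is making the block decoding \emph{stable} along a reconfiguration sequence --- one bit-flip must not swing a decoded label --- which forces valid codes to sit far from the decoding boundaries and is handled with the robust-decoding tools developed for earlier PCRP-based reconfiguration hardness.
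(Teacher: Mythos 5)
There is a genuine gap, on two levels. First, the central object of your reduction --- the bundle of $\approx \MAGIC\cdot k$ width-$k$ clauses per outer constraint enjoying your properties (i) and (ii) together with a ``stable'' block decoding --- is never constructed; you only list the properties it should have, and you yourself flag its construction as the main obstacle. Since the whole theorem lives in that gadget, this is not a routine omission. Second, the outer hardness you start from is unavailable and, in the strong form you need, impossible. Your accounting (one violated clause per touched bundle of size $\MAGIC\cdot k$, hence violated-clause fraction at least $\eta/(\MAGIC\cdot k)$ when an $\eta$-fraction of outer constraints is violated) requires $\eta>0.93$ to beat $\frac{1}{10k}$, i.e.\ \prb{Maxmin 2-CSP Reconfiguration} over a constant alphabet with perfect completeness and soundness close to $0$. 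The PCRP theorem combined with \cite{ohsaka2023gap} yields only some fixed constant gap, the amplification machinery of \cite{hirahara2024optimal,hirahara2025asymptotically} concerns \prb{Set Cover}/\prb{$k$-Cut} reconfiguration and gives nothing of this kind for 2-CSPs, and near-zero soundness is ruled out outright because constant-alphabet \prb{Maxmin 2-CSP Reconfiguration} admits a constant-factor approximation algorithm (its threshold is $\Theta(1)$; see \cref{tab:summary}). The soundness direction suffers from the same arithmetic: a budget of $\frac{1}{10k}$ violated clauses still permits a $\approx 0.93$-fraction of bundles to contain a violated clause, so block-wise plurality decoding does not yield an outer assignment violating only a small fraction of constraints, and no contradiction with the outer promise is obtained.

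For comparison, the paper's proof uses no code-based composition at all. It takes a PCRP verifier for a $\PSPACE$-complete language (query complexity $q=3$, constant soundness, constant degree), pads the proof with an all-ones dummy string, and forms a combined verifier $\W$ that invokes the PCRP verifier only with probability $\Theta\left(\frac{1}{k}\right)$ and otherwise runs a dummy all-one verifier. Because the degree is constant, a single bit flip changes $\W$'s acceptance probability by $o\left(\frac{1}{k}\right)$, so in the soundness case every reconfiguration sequence contains a proof whose rejection probability is \emph{pinned} at $\Theta\left(\frac{1}{k}\right)$ from above and below --- this two-sided control is exactly what reconfiguration affords and PCPs do not. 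The Horn verifier then accepts iff $\W$ accepts or one of $\rep-1$ further dummy runs rejects; at the pinned proof it still rejects with probability $\Omega\left(\frac{1}{k}\right)$, and it has at most $2^q-1=7$ rejecting local views per random seed, so it is expressible as an E$k$-CNF formula with only a factor-$7$ loss, giving $1-\frac{3-\epsilon}{28k}$. In other words, the ``non-monotone test'' is a predicate over verifier outcomes exploited via this pinning argument, not a long-code-style gadget with shortcut re-encodings; your proposal would need to supply both the missing gadget and a currently nonexistent (indeed impossible) outer theorem to go through.
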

\noindent
We found this to be surprising.
For any E$k$-CNF formula $\phi$ over $n$ variables,
a random assignment $\asgrnd$ uniformly chosen from $\zo^n$ satisfies a $\left(1 - \frac{1}{2^k}\right)$-fraction of the clauses of $\phi$ in expectation.
By a concentration inequality,\footnote{
    We actually prove \cref{thm:intro:hard9.333} even for formulas $\phi$ such that each variable is read $o(|\phi|)$ times, and thus the read-$\tau$ concentration inequality is applicable.
}
this implies that only a $2^{-\Omega(n)}$-fraction of assignments do not satisfy a $\left(1 - \frac{1}{10k}\right)$-fraction of the clauses of $\phi$.
\cref{thm:intro:hard9.333} shows the $\PSPACE$-hardness of the $st$-connectivity question
over the subgraph of the $n$-dimensional Boolean hypercube
obtained by deleting only a $2^{-\Omega(n)}$-fraction of vertices.

As an immediate corollary of \cref{thm:intro:hard9.333},
we obtain the $\PSPACE$-hardness of $\left(1 - \Omega\left(\frac{1}{k}\right)\right)$-factor approximation for every $k \geq 3$.
\begin{corollary}[informal; see \cref{cor:hard0}]
\label{cor:intro:hard0}
There exists a universal constant $\delta_0 > 0$ such that
\MMkSATReconf is $\PSPACE$-hard to approximate
within a factor of $1-\frac{\delta_0}{k}$ for every integer $k \geq 3$.
\end{corollary}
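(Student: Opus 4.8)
The plan is to bootstrap \cref{cor:intro:hard0} from \cref{thm:intro:hard9.333}, which already covers all $k \ge k_0$, together with the known constant-factor inapproximability for the finitely many smaller widths $3 \le k < k_0$. The glue is the elementary monotonicity of gap-hardness: for a maximization problem, if it is $\PSPACE$-hard to distinguish value-$1$ instances from value-$\le(1-\epsilon)$ instances, then the same holds with $\epsilon$ replaced by any $0 < \epsilon' \le \epsilon$, since a value-$1$ (resp.\ value-$\le 1-\epsilon$) instance is \emph{a fortiori} a value-$1$ (resp.\ value-$\le 1-\epsilon'$) instance, so any algorithm for the $\epsilon'$-gap promise problem solves the $\epsilon$-gap one. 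Hence it suffices to exhibit, for every integer $k \ge 3$, a constant $\epsilon_k > 0$ such that $(1-\epsilon_k)$-factor approximation of \MMkSATReconf is $\PSPACE$-hard, and then set $\delta_0 \defeq \min\bigl\{k\epsilon_k : k \ge 3\bigr\}$, provided this is positive.

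For $k \ge k_0$, \cref{thm:intro:hard9.333} gives $\epsilon_k = \tfrac1{10k}$, so $k\epsilon_k = \tfrac1{10}$. For the finitely many remaining widths $3 \le k < k_0$, any positive constant per width suffices. For $k=3$, I would invoke the known $\PSPACE$-hardness of approximating \MMtreSATReconf within some absolute constant $1-\epsilon_3$ --- obtained by combining the PCRP theorem \cite{hirahara2024probabilistically,karthikc.s.2023inapproximability} with Ohsaka's gap-preserving (and perfect-completeness-preserving) reductions \cite{ohsaka2023gap}. To lift from width $3$ to any fixed $k$, I would use a trivial reduction: given the hard E$3$-CNF instance $(\phi,\asg_\sss,\asg_\ttt)$, add $k-3$ fresh variables $z_1,\dots,z_{k-3}$, replace each clause $C$ by the $2^{k-3}$ width-$k$ clauses $C \vee \ell_1 \vee \cdots \vee \ell_{k-3}$ over all $\ell_i \in \{z_i,\bar z_i\}$, and keep $\asg_\sss,\asg_\ttt$ with every $z_i$ set to $0$. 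For any assignment $(\asg,\vec{z})$, exactly one of the $2^{k-3}$ copies of $C$ has all extra literals false (the rest being satisfied), so its violated clauses form a copy of $\{C : \asg \not\models C\}$ and its violated \emph{fraction} is exactly $2^{-(k-3)}$ times that of $\asg$ in $\phi$; keeping $\vec{z} \equiv 0$ turns any reconfiguration sequence for $\phi$ into one for the new formula, and deleting the $z_i$ coordinates (merging equal consecutive assignments) reverses this. Thus the minimum over reconfiguration sequences of the maximum violated fraction is scaled by exactly $2^{-(k-3)}$ and perfect completeness is preserved, giving $\epsilon_k \ge \epsilon_3 \cdot 2^{-(k-3)} > 0$.

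Setting $\delta_0 \defeq \min\bigl\{\tfrac1{10},\ \min_{3 \le k < k_0} k\epsilon_k\bigr\} > 0$ then finishes the proof: by the monotonicity observation, $\PSPACE$-hardness of $(1-\epsilon_k)$-factor approximation implies $\PSPACE$-hardness of $(1-\tfrac{\delta_0}{k})$-factor approximation for every $k \ge 3$. I do not anticipate a real obstacle --- the corollary is genuinely ``immediate''. The only step meriting a careful line is the verification that the width-lifting reduction of the previous paragraph maps reconfiguration sequences to reconfiguration sequences and rescales the gap by precisely $2^{-(k-3)}$; since the $z_i$ are fresh this is routine, and since the reduction is applied to only finitely many widths the factor $2^{k-3}$ is harmless.
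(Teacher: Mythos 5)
Your proposal is correct, and it shares the overall skeleton of the paper's proof of \cref{cor:hard0} (invoke \cref{thm:hard9.333} for all $k \geq k_0$, handle the finitely many widths $3 \leq k < k_0$ separately, and use the trivial monotonicity of gap hardness), but it handles the small widths by a genuinely different reduction. The paper goes \emph{downward} in width: it proves \cref{clm:hard:gamma}, a clause-splitting gadget with auxiliary chain variables (in the style of the classical E$k$-SAT-to-E$3$-SAT reduction, adapted to reconfiguration following \cite{gopalan2009connectivity,ohsaka2023gap}), and uses it to transfer the hardness of \prb{Gap$_{1,1-\frac{1}{10k_0}}$ E$k_0$-SAT Reconfiguration} to each $k < k_0$ with only a $\Gamma \leq k_0$ loss, yielding the explicit constant $\delta_0 = \frac{1}{10 k_0^2}$. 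You instead go \emph{upward} in width: you start from the known constant-gap $\PSPACE$-hardness of \MMtreSATReconf (PCRP theorem plus \cite{ohsaka2023gap}, exactly the base fact the paper itself uses) and pad each width-$3$ clause with all $2^{k-3}$ sign patterns over fresh variables --- this is precisely the ``first attempt'' reduction from \cref{sec:overview}, and your verification that the violated fraction scales by exactly $2^{-(k-3)}$ and that sequences project/extend correctly is sound. Your route avoids introducing and proving the new splitting gadget \cref{clm:hard:gamma}, at the cost of an exponential-in-$k$ loss, which is indeed harmless since only the bounded range $3 \leq k < k_0$ is affected; the resulting universal constant $\delta_0$ is worse (roughly $\epsilon_3 \cdot 2^{-(k_0-3)}$ rather than $\frac{1}{10k_0^2}$), but the corollary only asserts existence of some $\delta_0 > 0$, so this is immaterial.
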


\cref{thm:intro:alg,thm:intro:hard9.333} provide asymptotically tight lower and upper bounds
for approximability of \MMkSATReconf.
Note that the approximation threshold of its $\NP$ analogue,
i.e., \prb{Max E$k$-SAT}, is $1 - \frac{1}{2^k}$ \cite[Theorems~6.5 and~6.14]{hastad2001some}.
To the best of our knowledge,
this is the first reconfiguration problem whose approximation threshold is
(asymptotically) \emph{worse} than that of its $\NP$ analogue.

Prior to this work,
any reconfiguration problem has been shown to be at least as ``easy'' as its $\NP$-analogue in terms of approximability.
For example,
the approximation threshold of \prb{Minmax Set Cover Reconfiguration}\footnote{
In the \prb{Minmax Set Cover Reconfiguration} problem, 
we are asked to transform a given cover of a set system into another
by repeatedly adding or removing a single set so as to minimize
the maximum size of any covers during transformation.
} is $2$ 
\cite{ito2011complexity,karthikc.s.2023inapproximability,hirahara2024optimal} while
that of \prb{Min Set Cover} is $\ln N$
\cite{dinur2014analytical,feige1998threshold,johnson1974approximation,lovasz1975ratio,chvatal1979greedy},
where $N$ is the universe size.
See also \cref{tab:summary} and \cref{subsec:related:approx-reconf}
for the approximation threshold of other reconfiguration problems.
This trend comes from the nature of reconfiguration problems that
a pair of feasible solutions are given as input: it is often the case that
we can construct a trivial reconfiguration sequence that passes through an ``intermediate'' solution between them.
For example, for a pair of covers, their \emph{union} is also a cover at most twice as large,
which implies a $2$-factor approximation algorithm for \prb{Minmax Set Cover Reconfiguration} \cite[Theorem~6]{ito2011complexity}.
Contrary to this trend,
\MMkSATReconf exhibits a smaller approximation threshold than its $\NP$ analogue.
This indicates that the techniques from the PCP literature are not directly applicable to reconfiguration problems,
which hence suggests the need to develop new techniques.

\begin{table}[t]
\begin{threeparttable}
    \centering
    \caption{
        Approximation thresholds of reconfiguration problems and $\NP$ analogues.
        For the first three maximization problems, the larger the better.
        For the last minimization problem, the smaller the better.
    }
    \label{tab:summary}
    \small
    \begin{tabular}{l|l|l|l}
    \toprule
        \multicolumn{1}{c|}{\textbf{problem}} &
        \multicolumn{1}{c|}{\textbf{approx.~threshold}} &
        \multicolumn{1}{c|}{\textbf{hardness}} &
        \multicolumn{1}{c}{\textbf{refs.}} \\
    \midrule
        \prb{Maxmin E$k$-SAT Reconf} & $1-\Theta\left(\frac{1}{k}\right)$ & $\PSPACE$-h.
            & (this paper) \\
        \prb{Max E$k$-SAT} & $1-\frac{1}{2^k}$ & $\NP$-h.
            & \cite{hastad2001some} \\
    \midrule
        \prb{Maxmin $k$-Cut Reconf} & $1-\Theta\left(\frac{1}{k}\right)$ & $\PSPACE$-h.
            & \cite{hirahara2025asymptotically} \\
        \prb{Max $k$-Cut} & $1-\Theta\left(\frac{1}{k}\right)$ & $\NP$-h.
            & \cite{frieze1997improved,kann1997hardness,guruswami2013improved,austrin2014new} \\
    \midrule
        \prb{Maxmin 2-CSP Reconf} & $\Theta(1)$ & $\PSPACE$-h.
            & \cite{karthikc.s.2023inapproximability,ohsaka2024gap,ohsaka2025approximate} \\
        \prb{Max 2-CSP} & $N^{-\frac{1}{3}}$ to $2^{-(\log N)^{1-o(1)}}$ \tnote{$\dagger$} 
            & $\NP$-h.
            & \cite{raz1998parallel,charikar2011improved} \\
    \midrule
        \prb{Minmax Set Cover Reconf} & $2$ & $\PSPACE$-h.
            & \cite{ito2011complexity,karthikc.s.2023inapproximability,hirahara2024optimal} \\
        \prb{Min Set Cover} & $\ln N$ \tnote{$\ddagger$}
            & $\NP$-h.
            & \cite{dinur2014analytical,feige1998threshold,johnson1974approximation,lovasz1975ratio,chvatal1979greedy} \\
    \bottomrule
    \end{tabular}
    \begin{tablenotes}
        \item[$\dagger$] $N$ is the size of an instance of \prb{2-CSP}, which is equal to
            the number of variables times the alphabet size.
        \item[$\ddagger$] $N$ is the universe size of an instance of \prb{Set Cover}.
    \end{tablenotes}
\end{threeparttable}
\end{table}

\subsection{Organization}
The rest of this paper is organized as follows.
In \cref{sec:overview},
    we present an overview of the proof of \cref{thm:intro:alg,thm:intro:hard9.333}.
In \cref{sec:related},
    we review related work on
    reconfiguration problems,
    relatives of \kSATReconf, and
    approximability of reconfiguration problems and \prb{Max $k$-SAT}.
In \cref{sec:pre},
    we formally define the \MMkSATReconf problem and 
    introduce the Probabilistically Checkable Reconfiguration Proof theorem 
    \cite{hirahara2024probabilistically,karthikc.s.2023inapproximability}.
In \cref{sec:alg},
    we develop a deterministic $\left(1-\frac{1}{k-1}-\frac{1}{k}\right)$-factor approximation algorithm for \MMkSATReconf.
In \cref{sec:hard9.333}, we prove
    the $\PSPACE$-hardness of $\left(1-\frac{1}{10k}\right)$-factor approximation for \MMkSATReconf.
In \cref{sec:NP},
    we present a complementary result that
    \MMkSATReconf is $\NP$-hard to approximate within a factor of $1-\frac{1}{8k}$.
Some technical proofs are deferred to \cref{app}.

\section{Proof Overview}
\label{sec:overview}

\subsection{Deterministic \texorpdfstring{$\left(1-\frac{1}{k-1}-\frac{1}{k}\right)$-factor}{\protect{(1-1/(k-1)-1/k)-factor}} Approximation Algorithm (\texorpdfstring{\cref{sec:alg}}{Section~\ref{sec:alg}})}
First, we give a highlight of the proof of \cref{thm:intro:alg}, i.e.,
a deterministic $\left(1-\frac{1}{k-1}-\frac{1}{k}\right)$-factor approximation algorithm for \MMkSATReconf.
Our algorithm uses a random reconfiguration sequence passing through a random assignment.
A similar strategy was used to approximate other reconfiguration problems, e.g.,
\cite{hirahara2025asymptotically,karthikc.s.2023inapproximability,ohsaka2025approximate}.
Let $\phi$ be a satisfiable E$k$-CNF formula consisting of $m$ clauses $C_1, \ldots, C_m$
over $n$ variables $x_1, \ldots, x_n$, and
$\asg_\sss, \asg_\ttt \colon \{x_1, \ldots, x_n\} \to \zo$ be a pair of its satisfying assignments.
Let $\asgrnd \colon \{x_1, \ldots, x_n\} \to \zo$ be a random assignment for $\phi$,
which satisfies a $\left(1-\frac{1}{2^k}\right)$-fraction of the clauses of $\phi$ in expectation.
Consider the following two random reconfiguration sequences:
\begin{itemize}
    \item
        a reconfiguration sequence $\sq{\asg_1}$ from $\asg_\sss$ to $\asgrnd$
        obtained by flipping the assignment to variables at which $\asg_\sss$ and $\asgrnd$ differ in a random order, and
    \item
        a reconfiguration sequence $\sq{\asg_2}$ from $\asgrnd$ to $\asg_\ttt$ 
        obtained by flipping the assignment to variables at which $\asgrnd$ and $\asg_\ttt$ differ in a random order.
\end{itemize}
Concatenation of $\sq{\asg_1}$ and $\sq{\asg_2}$ yields
a reconfiguration sequence from $\asg_\sss$ to $\asg_\ttt$ that passes though $\asgrnd$,
which is obtained by the following procedure.

\begin{itembox}[l]{\textbf{Generating a random reconfiguration sequence $\sq{\asg_1} \circ \sq{\asg_2}$ from $\asg_\sss$ to $\asg_\ttt$}}
\begin{algorithmic}[1]
    \State sample a uniformly random assignment $\asgrnd \colon \{x_1, \ldots, x_n\} \to \zo$ for $\phi$.
    \LComment{start with $\asg_\sss$.}
    \For{\textbf{each} variable $x_i$ such that $\asg_\sss(x_i) \neq \asgrnd(x_i)$ in a random order}
        \State flip $x_i$'s current assignment from $\asg_\sss(x_i)$ to $\asgrnd(x_i)$.
    \EndFor
    \LComment{obtain $\asgrnd$.}
    \For{\textbf{each} variable $x_i$ such that $\asgrnd(x_i) \neq \asg_\ttt(x_i)$ in a random order}
        \State flip $x_i$'s current assignment from $\asgrnd(x_i)$ to $\asg_\ttt(x_i)$.
    \EndFor
    \LComment{end with $\asg_\ttt$.}
\end{algorithmic}
\end{itembox}

The main lemma is the following.

\begin{lemma}[informal; see \cref{lem:alg:round}]
\label{lem:intro:alg:round}
For each clause $C_j$ of $\phi$,
all assignments in $\sq{\asg_1} \circ \sq{\asg_2}$ simultaneously satisfy 
$C_j$ with probability at least $1-\frac{1}{k-1}-\frac{1}{k}$.
\end{lemma}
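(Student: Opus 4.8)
The plan is to fix a single clause $C_j$, restrict attention to its $k$ variables, and analyze the two phases separately, closing with a union bound. By negating coordinates I may assume that the unique assignment to $C_j$'s variables falsifying $C_j$ is the all-zeros pattern $\vec 0$ (so $C_j$ becomes a disjunction of positive literals on those variables); this relabeling is a harmless bijection on assignments that preserves Hamming distances, the uniformity of $\asgrnd$, and the flip dynamics. Write $s, r, t \in \zo^k$ for the restrictions of $\asg_\sss, \asgrnd, \asg_\ttt$ to $C_j$'s variables: then $s \neq \vec 0$ and $t \neq \vec 0$ (both satisfy $C_j$), while $r$ is uniform over $\zo^k$. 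Since within either phase each variable is flipped at most once, the restriction to $C_j$'s variables of the sequence produced by that phase is a monotone lattice walk between its two endpoint patterns, so $C_j$ is falsified during the phase exactly when this walk visits $\vec 0$.

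First I would treat phase~$1$ (from $s$ to $r$). The walk visits $\vec 0$ iff $\vec 0$ lies coordinatewise between $s$ and $r$, which — since every coordinate in $\mathrm{supp}(s)$ carries a $1$ in $s$ — is equivalent to $r$ vanishing on $\mathrm{supp}(s)$; and conditioned on that, the walk hits $\vec 0$ precisely when $\mathrm{supp}(s)$ forms an initial segment of the uniformly random flip order restricted to the disjoint set $\mathrm{supp}(s) \sqcup \mathrm{supp}(r)$, an event of probability $\binom{|\mathrm{supp}(s)| + |\mathrm{supp}(r)|}{|\mathrm{supp}(s)|}^{-1}$. Summing over $r$ and grouping by $q := |\mathrm{supp}(r)|$ (there are $\binom{k-p}{q}$ such $r$, where $p := |\mathrm{supp}(s)| \geq 1$), the probability that phase~$1$ falsifies $C_j$ equals $2^{-k}\sum_{q=0}^{k-p}\binom{k-p}{q}\binom{p+q}{p}^{-1}$. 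Using $\binom{p+q}{p} \geq q+1$ for $p \geq 1$ together with the identity $\sum_{q=0}^{m}\binom{m}{q}(q+1)^{-1} = (2^{m+1}-1)/(m+1)$, this is at most $2^{-k}(2^{k-p+1}-1)/(k-p+1)$; since $\ell \mapsto (2^\ell-1)/\ell$ is increasing over the positive integers and $k-p+1 \leq k$, it is at most $2^{-k}(2^k-1)/k = (1-2^{-k})/k < 1/k$. Phase~$2$ (from $r$ to $t$) read backwards is the same situation with the fixed non-falsifying start $s$ replaced by the fixed non-falsifying pattern $t$ — the reversal of a uniformly random flip order is again uniformly random — so the identical computation bounds the probability that phase~$2$ falsifies $C_j$ by $1/k$ as well.

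Finally, $\asg_\sss$ and $\asg_\ttt$ always satisfy $C_j$ and $\asgrnd$ is the last assignment of phase~$1$, so the event that some assignment in $\sq{\asg_1}\circ\sq{\asg_2}$ falsifies $C_j$ is the union of the two per-phase events; a union bound gives probability less than $1/k + 1/k = 2/k \leq \frac{1}{k-1} + \frac{1}{k}$, i.e., $C_j$ is satisfied throughout with probability at least $1 - \frac{1}{k-1} - \frac{1}{k}$, as claimed. I expect the main obstacle to be the second step: pinning down exactly which intermediate assignments can falsify $C_j$ — namely that falsification during a phase reduces to "$r$ vanishes on $\mathrm{supp}(s)$" followed by a prefix event in the random flip order — and then pushing the elementary but slightly fiddly binomial-sum estimate through cleanly, including the monotonicity claims; the remaining ingredients (the reduction to $\vec 0$, reversing time, the union bound) are routine.
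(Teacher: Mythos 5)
Your proposal is correct, and it takes a genuinely different route from the paper. The paper first proves a coupling/monotonicity statement (\cref{clm:alg:monotone}) showing the success probability only decreases as the endpoints satisfy fewer literals, reduces to the extremal situation where $\asg_\sss$ and $\asg_\ttt$ each make a single literal true, and then computes the success probability \emph{exactly} in the two sub-cases $\asg_\sss \neq \asg_\ttt$ and $\asg_\sss = \asg_\ttt$ (\cref{clm:alg:neq,clm:alg:eq}) via the binomial identities of \cref{fct:alg:binom}. You skip the monotonicity reduction entirely: for arbitrary satisfying endpoint patterns you bound each phase separately, showing that the restricted monotone walk hits the unique falsifying pattern with probability exactly $2^{-k}\sum_{q}\binom{k-p}{q}\binom{p+q}{p}^{-1} \leq 2^{-k}\frac{2^{k-p+1}-1}{k-p+1} \leq \frac{1-2^{-k}}{k} < \frac{1}{k}$ (your prefix-of-a-random-order computation is correct, as are the estimates $\binom{p+q}{p}\geq q+1$ for $p\geq 1$ and the monotonicity of $\ell \mapsto (2^{\ell}-1)/\ell$), and then a union bound over the two phases gives failure probability less than $\frac{2}{k} \leq \frac{1}{k-1}+\frac{1}{k}$. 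One tiny phrasing slip: the walk visiting $\vec{0}$ is not \emph{equivalent} to $\vec{0}$ lying between $s$ and $r$ — that is only the necessary condition — but your very next clause supplies the correct prefix criterion, so the computation is unaffected. What the two approaches buy: yours is shorter, avoids the coupling argument, and in fact yields the uniformly stronger bound $1-\frac{2(1-2^{-k})}{k} > 1-\frac{2}{k} > 1-\frac{1}{k-1}-\frac{1}{k}$; the paper's exact extremal-case formulas are what produce the sharper small-$k$ approximation factors reported in \cref{tab:approx} (cf.\ the remark after \cref{clm:alg:eq}), which a union bound does not directly reproduce.
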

\noindent
The key insight in the proof of \cref{lem:intro:alg:round} is that
the probability of interest attains the minimum
when both $\asg_\sss$ and $\asg_\ttt$ make a single literal of $C_j$ true.
Thus, it is sufficient to bound from below the probability of interest only when
$\asg_\sss$ and $\asg_\ttt$ make a single literal of $C_j$ true and
($\asg_\sss \neq \asg_\ttt$ or $\asg_\sss = \asg_\ttt$),
which can be exactly calculated by exhaustion.
Derandomization can be done by a standard application of the method of conditional expectations \cite{alon2016probabilistic}.

\subsection{$\PSPACE$-hardness of \texorpdfstring{$\left(1-\frac{1}{10k}\right)$-factor}{(1-1/10k)-factor} Approximation (\texorpdfstring{\cref{sec:hard9.333}}{Section~\ref{sec:hard9.333}})}
Second, we present a proof overview of \cref{thm:intro:hard9.333}, i.e.,
$\PSPACE$-hardness of $\left(1-\frac{1}{10k}\right)$-factor approximation for \MMkSATReconf.
For a satisfiable E$k$-CNF formula $\phi$ and a pair of its satisfying assignments $\asg_\sss$ and $\asg_\ttt$,
let $\opt_\phi(\asg_\sss \reco \asg_\ttt)$ denote
the optimal value of \MMkSATReconf; namely,
the maximum value among all possible reconfiguration sequences from $\asg_\sss$ to $\asg_\ttt$,
where the \emph{value} of a reconfiguration sequence $\sq{\asg}$
is defined as the minimum fraction of satisfied clauses of $\phi$
over all assignments in $\sq{\asg}$.
For any reals $0 \leq s \leq c \leq 1$,
\prb{Gap$_{c,s}$ \kSATReconf} is a promise problem that requires to determine whether
$\opt_\phi(\asg_\sss \reco \asg_\ttt) \geq c$ or
$\opt_\phi(\asg_\sss \reco \asg_\ttt) < s$.
See \cref{subsec:pre:kSATReconf} for the formal definition.

\subsubsection{First Attempt: A Simple Proof of $\left(1-\Omega\left(\frac{1}{2^k}\right)\right)$-factor Inapproximability}
For starters,
we show the $\PSPACE$-hardness of $\left(1-\Omega\left(\frac{1}{2^k}\right)\right)$-factor approximation
for \MMkSATReconf.
The proof is based on
a simple gap-preserving reduction from \MMtreSATReconf to \MMkSATReconf,
which mimics that from \prb{Max E3-SAT} to \prb{Max E$k$-SAT}, e.g., \cite[Theorem~6.14]{hastad2001some}.
Let $\phi$ be a satisfiable E$3$-CNF formula over $n$ variables $x_1, \ldots, x_n$ and
$\asg_\sss, \asg_\ttt \colon \{x_1, \ldots, x_n\} \to \zo$ be a pair of its satisfying assignments.
Create fresh $K$ variables $y_1, \ldots, y_K$, where $K \defeq k-3$.
Construct an E$k$-CNF formula $\psi$
by appending the $2^K$ possible clauses over $y_1, \ldots, y_K$ to each clause of $\phi$.
Define two satisfying assignments
$\bsg_\sss,\bsg_\ttt \colon \{x_1, \ldots, x_n, y_1, \ldots, y_K\} \to \zo$ for $\psi$ such that
$\bsg_\sss|_{\{x_1, \ldots, x_n\}} \defeq \asg_\sss$,
$\bsg_\sss|_{\{y_1, \ldots, y_K\}} \defeq 0^K$,
$\bsg_\ttt|_{\{x_1, \ldots, x_n\}} \defeq \asg_\ttt$, and
$\bsg_\ttt|_{\{y_1, \ldots, y_K\}} \defeq 0^K$,
which completes the description of the reduction.
Observe easily that the following completeness and soundness hold:
\begin{itemize}
    \item (Completeness)
        If $\opt_\phi\bigl(\asg_\sss \reco \asg_\ttt\bigr) = 1$,
        then $\opt_\psi\bigl(\bsg_\sss \reco \bsg_\ttt\bigr) = 1$.
    \item (Soundness)
        If $\opt_\phi\bigl(\asg_\sss \reco \asg_\ttt\bigr) < 1-\epsilon$,
        then $\opt_\psi\bigl(\bsg_\sss \reco \bsg_\ttt\bigr) < 1-\frac{\epsilon}{2^{k-3}}$.
\end{itemize}
Since \prb{Gap$_{1,1-\epsilon}$ \treSATReconf} is $\PSPACE$-hard for some real $\epsilon > 0$
\cite{hirahara2024probabilistically,karthikc.s.2023inapproximability,ohsaka2023gap},
so is \prb{Gap$_{1,1-\frac{\epsilon}{2^{k-3}}}$ \kSATReconf}.
In particular,
\MMkSATReconf is $\PSPACE$-hard to approximate within a factor of $1 - \Omega\left(\frac{1}{2^k}\right)$.
To improve the inapproximability factor to $1 - \Omega\left(\frac{1}{k}\right)$,
we need to exploit some property that is possessed by \MMkSATReconf but not by \prb{Max E$k$-SAT}.
We achieve this by using a ``non-monotone'' test described next.

\subsubsection{The Power of Non-monotone Tests: $\left(1-\Omega\left(\frac{1}{1.913^k}\right)\right)$-factor Inapproximability}
We introduce the ``non-monotone'' test to prove
the $\PSPACE$-hardness of $1-\Omega\left(\frac{1}{1.913^k}\right)$-factor approximation for \MMkSATReconf
(for every $k$ divisible by $3$).
Let $\phi$ be a satisfiable E$3$-CNF formula consisting of $m$ clauses $C_1,\ldots, C_m$ over
$n$ variables $x_1, \ldots, x_n$ and
$\asg_\sss,\asg_\ttt \colon \{x_1, \ldots, x_n\} \to \zo$ be a pair of its satisfying assignments.
Let $\rep \geq 2$ be an integer and $k \defeq 3\rep$.
The \emph{Horn verifier} $\Vhorn$,
given oracle access to an assignment $\asg \colon \{x_1, \ldots, x_n\} \to \zo$,
selects $\rep$ clauses of $\phi$ randomly,
denoted by $C_{i_1}, \ldots, C_{i_{\rep}}$, and
accepts if the Horn-like condition
$C_{i_1} \vee \bar{C_{i_2}} \vee \cdots \vee \bar{C_{i_{\rep}}}$
is satisfied by $\asg$, as described below.

\begin{itembox}[l]{\textbf{$3\rep$-query Horn verifier $\Vhorn$ for an E$3$-CNF formula $\phi$}}
\begin{algorithmic}[1]
    \item[\textbf{Input:}]
        an E$3$-CNF formula $\phi = C_1 \wedge \cdots \wedge C_m$ over $n$ variables $x_1, \ldots, x_n$ and
        an integer $\rep \geq 2$.
    \item[\textbf{Oracle access:}]
        an assignment $\asg \colon \{x_1, \ldots, x_n\} \to \zo$.
    \State sample $i_1,\ldots,i_{\rep} \sim [m]$ uniformly at random.
    \If{$C_{i_1} \vee \bar{C_{i_2}} \vee \cdots \vee \bar{C_{i_{\rep}}}$ is satisfied by $\asg$}
        \Comment{(at most) $3\rep$ locations of $\asg$ are queried.}
        \State \Return $1$.
    \Else
        \State \Return $0$.
    \EndIf
\end{algorithmic}
\end{itembox}

\noindent
Intuitively, $\Vhorn$ thinks of each clause of $\phi$ as a new variable and
creates a kind of Horn clause on the fly.
If $\asg$ violates exactly $\epsilon$-fraction of the clauses of $\phi$,
then $\Vhorn$ rejects with probability $\epsilon(1-\epsilon)^{\rep-1}$,
which is ``non-monotone'' in $\epsilon$ and
attains the maximum at $\epsilon = \frac{1}{\rep}$ (see also \cref{fig:Horn}).
Let $\opt_{\Vhorn}(\asg_\sss \reco \asg_\ttt)$ denote the maximum value
among all possible reconfiguration sequences from $\asg_\sss$ to $\asg_\ttt$,
where the \emph{value} of a reconfiguration sequence $\sq{\asg}$
is defined as $\Vhorn$'s minimum acceptance probability over all assignments in $\sq{\asg}$.
The Horn verifier $\Vhorn$ has the following completeness and soundness:

\begin{figure}[t]
    \centering
    \includegraphics[width=0.55\linewidth]{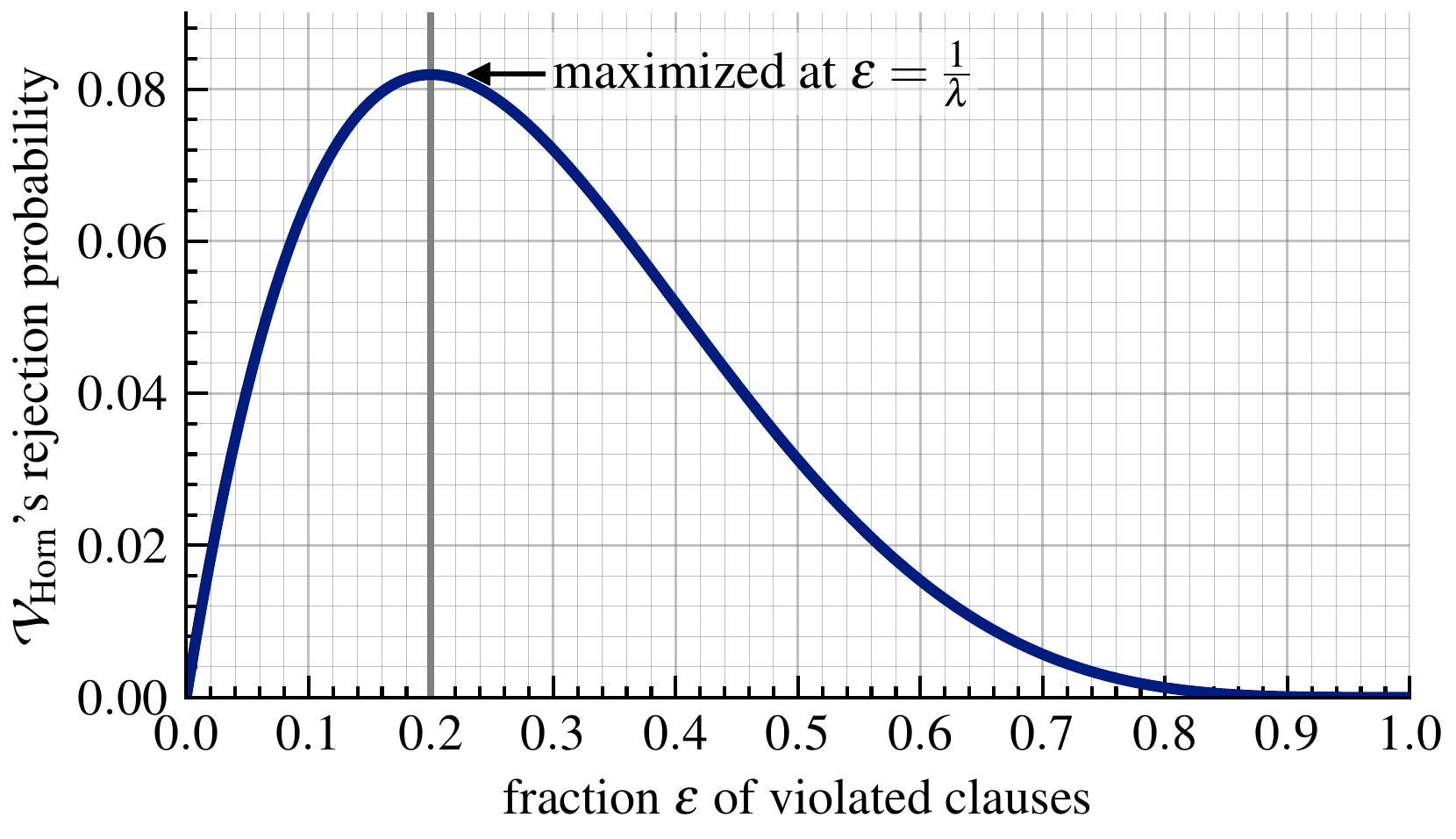}
    \caption{
        The rejection probability $\epsilon (1-\epsilon)^{\rep-1}$ of $\Vhorn$ parameterized by
        the fraction $\epsilon$ of violated clauses of an E$3$-CNF formula $\phi$
        (when $\rep = 5$).
        Obviously, $\epsilon (1-\epsilon)^{\rep-1}$ is not monotone in $\epsilon$ and
        attains the maximum at $\epsilon = \frac{1}{\rep}$.
        On the other hand,
        if an assignment violates most of the clauses of $\phi$ (i.e., $\epsilon \approx 1$),
        then $\Vhorn$ rejects it with only a tiny probability.
    }
    \label{fig:Horn}
\end{figure}

\begin{itemize}
    \item (Completeness)
        If $\opt_\phi\bigl(\asg_\sss \reco \asg_\ttt\bigr) = 1$,
        then $\opt_{\Vhorn}\bigl(\asg_\sss \reco \asg_\ttt\bigr) = 1$.
        This is immediate from the acceptance condition of $\Vhorn$.
    \item (Soundness)
        If $\opt_\phi\bigl(\asg_\sss \reco \asg_\ttt\bigr) < 1-\epsilon$,
        then $\opt_{\Vhorn}\bigl(\asg_\sss \reco \asg_\ttt\bigr) < 1 - \Omega\left(\frac{\epsilon}{\rep}\right)$.
        To see why this is true,
        let $\sq{\asg}$ be any reconfiguration sequence from $\asg_\sss$ to $\asg_\ttt$.
        By the soundness assumption,
        in order to transform $\asg_\sss$ into $\asg_\ttt$,
        we must violate more than $\epsilon$-fraction of the clauses of $\phi$ at some point.
        With this fact, we can show that
        $\sq{\asg}$ must contain some assignment $\asg^\circ$ that violates
        $\approx \frac{\epsilon}{\rep}$-fraction of the clauses of $\phi$.\footnote{
            In fact, we use \cite[Theorem~3.1]{ohsaka2023gap} to ensure that each variable of $\phi$ appears in a constant number of the clauses.
        }
        Such an assignment $\asg^\circ$ would be rejected by $\Vhorn$ with probability
        \begin{align}
            \Omega\left(\tfrac{\epsilon}{\rep} \cdot \left(1-\tfrac{\epsilon}{\rep}\right)^{\rep-1}\right)
            = \Omega\left(\frac{\epsilon}{\rep}\right).
        \end{align}
        See also \cref{fig:illust} for illustration.
\end{itemize}

\begin{figure}[t]
    \centering
    \includegraphics[width=0.55\linewidth]{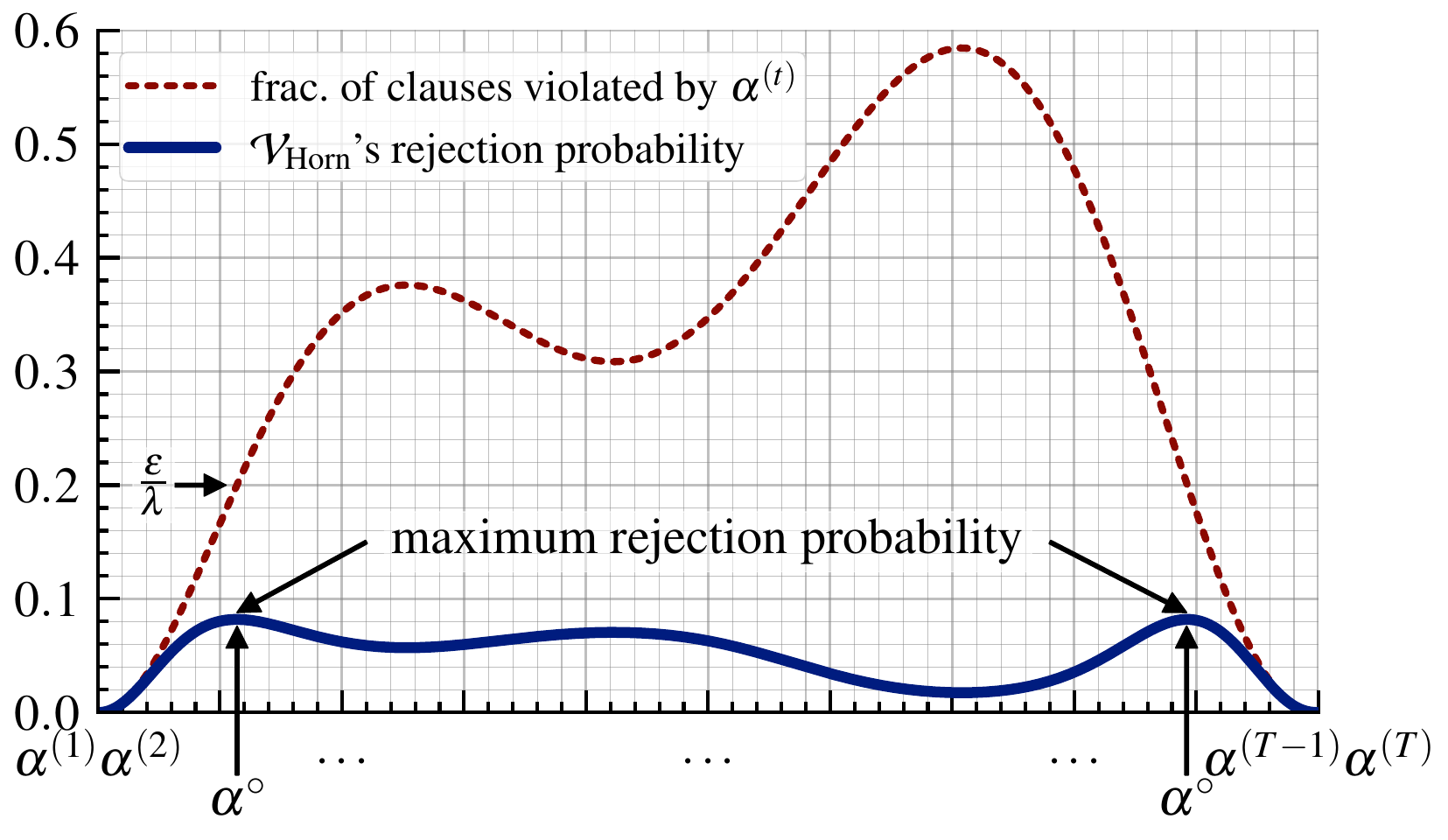}
    \caption{
        An example of the transition of
        the fraction of violated clauses and
        $\Vhorn$'s rejection probability (when $\rep = 5$).
        Let $\phi$ be a satisfiable E$3$-CNF formula,
        $\asg_\sss$ and $\asg_\ttt$ be a pair of its satisfying assignments, and
        $(\asg^{(1)}, \ldots, \asg^{(T)})$ be a reconfiguration sequence from $\asg_\sss$ to $\asg_\ttt$.
        The dotted (red) line represents the fraction of clauses of $\phi$ violated by $\asg^{(t)}$, and
        the solid (blue) line represents the probability that $\Vhorn$ rejects $\asg^{(t)}$.
        If $\opt_\phi(\asg_\sss \reco \asg_\ttt) < 1-\epsilon$,
        any reconfiguration sequence must contain some assignment $\asg^\circ$
        that violates $\approx \frac{\epsilon}{\rep}$-fraction of clauses of $\phi$,
        which would be rejected by $\Vhorn$ with probability $\Omega\left(\frac{\epsilon}{\rep}\right)$.
    }
    \label{fig:illust}
\end{figure}

Subsequently, we represent $\Vhorn$ by an E$k$-CNF formula.
For this purpose,
it is sufficient to ``emulate'' $\Vhorn$ by an \emph{\scOR-predicate verifier} $\X$,
which is allowed to generate a query sequence $I$ and a partial assignment $\tilde{\asg} \in \zo^I$, and
accepts if the local view $\asg|_I$ is \emph{not} equal to $\tilde{\asg}$.
The acceptance condition of $\X$ is equivalent to the following \scOR predicate:
$\bigvee_{i \in I} (\asg(i) \neq \tilde{\asg}(i))$.
Recall that $\Vhorn$ rejects if 
the Horn-like condition $C_{i_1} \vee \bar{C_{i_2}} \vee \cdots \vee \bar{C_{i_{\rep}}}$
is unsatisfied by $\asg$; namely,
its \emph{negation} is satisfied:
\begin{align}
\label{eq:intro:Horn-neg}
    \bar{C_{i_1}} \wedge C_{i_2} \wedge \cdots \wedge C_{i_{\rep}}.
\end{align}
There are $7^{\rep-1}$ possible (partial) assignments over $\zo^I$
that satisfy \cref{eq:intro:Horn-neg},
where $I$ is the set of variables appearing in $C_{i_1}, \ldots, C_{i_{\rep}}$.\footnote{
    In order for $I$ to contain \emph{exactly} $3\rep$ variables,
    the selected $\rep$ clauses $C_{i_1}, \ldots, C_{i_{\rep}}$ should not share common variables.
    Such an undesirable event occurs with negligible probability.
}
Since $\X$ can reject only a \emph{single} local view at a time,
it samples a partial assignment $\tilde{\asg} \in \zo^I$ satisfying \cref{eq:intro:Horn-neg}
uniformly at random and rejects if $\asg|_I = \tilde{\asg}$,
as described below.

\begin{itembox}[l]{\textbf{$3\rep$-query \scOR-predicate verifier $\X$ emulating $\Vhorn$}}
\begin{algorithmic}[1]
    \item[\textbf{Input:}]
        an E$3$-CNF formula $\phi = C_1 \wedge \cdots \wedge C_m$ over $n$ variables $x_1, \ldots, x_n$ and
        an integer $\rep \geq 2$.
    \item[\textbf{Oracle access:}]
        an assignment $\asg \colon \{x_1, \ldots, x_n\} \to \zo$.
    \State sample $i_1,\ldots,i_{\rep} \sim [m]$ uniformly at random.
    \State let $I$ be the set of variables appearing in $C_{i_1}, \ldots, C_{i_{\rep}}$.
    \State sample a partial assignment $\tilde{\asg} \in \zo^I$
    that satisfies \cref{eq:intro:Horn-neg} uniformly at random.
    \If{$\asg|_I \neq \tilde{\asg}$}
        \State \Return $1$.
    \Else
        \State \Return $0$.
    \EndIf
\end{algorithmic}
\end{itembox}

\noindent
The \scOR-predicate verifier $\X$ has the following completeness and soundness:
\begin{itemize}
    \item
        (Completeness)
        If $\opt_\phi\bigl(\asg_\sss \reco \asg_\ttt\bigr) = 1$, then
        $\opt_\X\bigl(\asg_\sss \reco \asg_\ttt\bigr) = 1$.
        This is immediate from the definition of $\X$.
    \item (Soundness)
        If $\opt_\phi\bigl(\asg_\sss \reco \asg_\ttt\bigr) < 1-\epsilon$, then
        $\opt_\X\bigl(\asg_\sss \reco \asg_\ttt\bigr) < 1 - \Omega\left(\frac{\epsilon}{1.913^k}\right)$.
        To see why this is true,
        let $\sq{\asg}$ be a reconfiguration sequence from $\asg_\sss$ to $\asg_\ttt$.
        By the soundness property of $\Vhorn$,
        there must be some assignment $\asg^\circ$ in $\sq{\asg}$ that is rejected by $\Vhorn$
        with probability $\Omega\left(\frac{\epsilon}{\rep}\right)$.
        Suppose that $\Vhorn$ rejects $\asg^\circ$ when examining the condition
        $C_{i_1} \vee \bar{C_{i_2}} \vee \cdots \vee \bar{C_{i_{\rep}}}$.
        Conditioned on this event, 
        we find $\X$ to reject $\asg^\circ$ with probability $\frac{1}{7^{\rep-1}}$
        since there are $7^{\rep-1}$ partial assignments that satisfy \cref{eq:intro:Horn-neg}.
        Therefore, the overall rejection probability of $\X$ is
        \begin{align}
        \label{eq:overview:PSPACE:emulate}
            \Pr\bigl[ \X \text{ rejects } \asg^\circ \bigr]
            = \underbrace{\Pr\bigl[ \Vhorn \text{ rejects } \asg^\circ \bigr]}_{= \Omega\left(\frac{\epsilon}{\rep}\right)}
                \cdot \frac{1}{7^{\rep-1}}
            = \Omega\left(\frac{\epsilon}{k \cdot 7^{\frac{1}{3}k}}\right)
            \underbrace{=}_{7^{\frac{1}{3}} < 1.913} \Omega\left(\frac{\epsilon}{1.913^k}\right).
        \end{align}
\end{itemize}

Consequently,
\prb{Gap$_{1,1-\epsilon}$ \treSATReconf} is reduced to
\prb{Gap$_{1,1-\Omega\left(\frac{\epsilon}{1.913^k}\right)}$ \kSATReconf}
for any real $\epsilon > 0$.
In particular,
\MMkSATReconf is $\PSPACE$-hard to approximate within a factor of
$1 - \Omega\left(\frac{1}{1.913^k}\right)$,
which is an exponential improvement over 
$1-\Omega\left(\frac{1}{2^k}\right)$.

\subsubsection{Getting $\left(1-\Omega\left(\frac{1}{k}\right)\right)$-factor Inapproximability}
To further reduce the inapproximability factor to $1-\Omega\left(\frac{1}{k}\right)$
as claimed in \cref{thm:intro:hard9.333},
we need to get rid of the $7^{\rep-1}$-factor appearing in \cref{eq:overview:PSPACE:emulate},
which is the number of partial assignments that satisfy \cref{eq:intro:Horn-neg},
i.e., $\bar{C_{i_1}} \wedge C_{i_2} \wedge \cdots \wedge C_{i_{\rep}}$.
For this purpose,
we shall replace each of $C_{i_2}, \ldots, C_{i_{\rep}}$ by
a DNF \emph{term} in the form of $\ell_1 \wedge \ell_2 \wedge \ell_3$
instead of a CNF \emph{clause} in the form of $\ell_1 \vee \ell_2 \vee \ell_3$,
so that the number of partial assignments
is reduced from $7^{\rep-1}$ to $\bigO(1)$, implying that
for any assignment $\asg \colon \{x_1, \ldots, x_n\} \to \zo$,
\begin{align}
\label{eq:intro:X-Vhorn}
    \Pr\bigl[ \X \text{ rejects } \asg \bigr]
    = \Omega\Bigl(\Pr\bigl[ \Vhorn \text{ rejects } \asg \bigr]\Bigr).
\end{align}
If this is the case,
$\opt_\phi(\asg_\sss \reco \asg_\ttt) < 1-\epsilon$
implies
$\opt_\X(\asg_\sss \reco \asg_\ttt) < 1 - \Omega\left(\frac{\epsilon}{k}\right)$.
We achieve this improvement by redesigning the Horn verifier $\Vhorn$ so as to
execute a PCRP system for \prb{Gap$_{1,1-\epsilon}$ \treSATReconf} and
a \emph{dummy verifier} $\A$, which accepts only a single prescribed string, say $1^n$,
with a carefully chosen probability.
Specifically, we develop the following three verifiers
(see \cref{sec:hard9.333} for the details):

\begin{itemize}
    \item
    The first verifier is the $3$-query \emph{combined verifier} $\W$.
    Given oracle access to a pair of 
    an assignment $\asg \colon \{x_1, \ldots, x_n\} \to \zo$ for $\phi$ and
    a proof $\qf \in \zo^n$,
    $\W$ performs the following:
    (1) with probability $\Theta\left(\frac{1}{k}\right)$,
    it selects a clause $C_i$ of $\phi$ randomly and 
    accepts if $C_i$ is satisfied by $\asg$, and
    (2) with probability $1-\Theta\left(\frac{1}{k}\right)$,
    it runs the dummy verifier $\A$ on $\qf$.
    The two proofs are defined as 
    $\Pf_\sss \defeq \asg_\sss \circ 1^n$ and
    $\Pf_\ttt \defeq \asg_\ttt \circ 1^n$.
    Observe easily that
    if $\opt_\phi(\asg_\sss \reco \asg_\ttt) = 1$,
    then $\opt_{\W}(\Pf_\sss \reco \Pf_\ttt) = 1$, and
    if $\opt_\phi(\asg_\sss \reco \asg_\ttt) < 1-\epsilon$,
    then $\opt_{\W}(\Pf_\sss \reco \Pf_\ttt) < 1-\Omega\left(\frac{\epsilon}{k}\right)$.

    \item
    The second verifier is the (modified) $k$-query \emph{Horn verifier} $\Vhorn$,
    which independently runs $\W$ once and
    runs $\A$ $\rep-1$ times.
    Then, $\Vhorn$ accepts if $\W$ accepts or any of the $\rep-1$ runs of $\A$ \emph{rejects}.
    Similarly to the discussion in the previous section, we can show that 
    if $\opt_\phi(\asg_\sss \reco \asg_\ttt) = 1$,
    then $\opt_{\Vhorn}(\Pf_\sss \reco \Pf_\ttt) = 1$, and
    if $\opt_\phi(\asg_\sss \reco \asg_\ttt) < 1-\epsilon$,
    then $\opt_{\Vhorn}(\Pf_\sss \reco \Pf_\ttt) < 1-\Omega\left(\frac{\epsilon}{k}\right)$.
    Note that the number of \emph{rejecting} local views of $\Vhorn$ is $\bigO(1)$.

    \item
    The final verifier is the (modified) $k$-query \emph{\scOR-predicate verifier} $\X$,
    which is used to ``emulate'' $\Vhorn$ as in the previous section.
    Owing to the changes made to $\Vhorn$,
    there is a linear relation between the rejection probabilities of $\X$ and $\Vhorn$ similar to \cref{eq:intro:X-Vhorn},
    implying that 
    \prb{Gap$_{1,1-\epsilon}$ \treSATReconf} can be reduced to
    \prb{Gap$_{1,1-\Omega\left(\frac{\epsilon}{k}\right)}$ \kSATReconf}
    for any real $\epsilon > 0$.
\end{itemize}

\subsubsection{Perspective and Open Problem}
In this study, we found that
a reconfiguration problem may have a worse approximation threshold than its $\NP$ analogue.
In the hardness proof, we developed the Horn verifier to
exemplify the usefulness of its ``non-monotone'' behavior.
Here, we clarify what monotone and non-monotone verifiers are and
why the non-monotonicity can be useful in the reconfiguration regime.
Suppose that there are two verifiers $\V$ and $\W$,
which have oracle access to the same proof $\pf \in \zo^n$.
For example, $\V$ is a $3$-query verifier for \MMtreSATReconf and
$\W$ is the $3\rep$-query Horn verifier,
as we saw in the previous sections.
Suppose also that
$\W$'s rejection probability is bounded from below by
the value of some function $f \colon [0,1] \to [0,1]$ evaluated at $\V$'s rejection probability; namely,
\begin{align}
    \forall \pf \in \zo^n,\quad
    \Pr\bigl[ \W \text{ rejects } \pf \bigr] \geq f\Bigl(\Pr\bigl[ \V \text{ rejects } \pf \bigr]\Bigr).
\end{align}
For example, $f(\epsilon) = \epsilon(1-\epsilon)^{\rep-1}$ in the case of the $3\rep$-query Horn verifier.
We say that $\W$ is \emph{monotone} if $f$ is monotonically increasing.
In the PCP regime, the soundness property typically requires the following condition:
\begin{align}
\begin{aligned}
    & \forall \pf \in \zo^n,\quad
    \Pr\bigl[ \V \text{ rejects } \pf \bigr] \geq \epsilon, \\
    \implies 
    & \forall \pf \in \zo^n,\quad
    \Pr\bigl[ \W \text{ rejects } \pf \bigr] \geq f(\epsilon).
\end{aligned}
\end{align}
Since we are concerned with bounding from below the \emph{minimum} rejection probability of the verifier,
$\W$ should be monotone in general; i.e.,
the non-monotonicity is not helpful in showing the (better) soundness.

By contrast, in the reconfiguration regime, $\W$ does \emph{not} need to be monotone in deriving the soundness.
Suppose that
every reconfiguration sequence $\sq{\pf}$ from $\pf_\sss$ to $\pf_\ttt$
contains a proof $\pf^\circ$ that is rejected by $\V$ with probability (approximately) $\epsilon$.
Then, regardless of whether $\W$ is monotone or not,
for every reconfiguration sequence $\sq{\pf}$ from $\pf_\sss$ to $\pf_\ttt$,
the \emph{maximum} rejection probability of $\W$ over all proofs in $\sq{\pf}$
is (approximately) greater than $f(\epsilon)$; namely,
\begin{align}
\begin{aligned}
    & \forall \sq{\pf} = (\pf_\sss, \ldots, \pf_\ttt),\quad
    \exists \pf^\circ \in \sq{\pf},\quad
    \Pr\bigl[ \V \text{ rejects } \pf^\circ \bigr] \approx \epsilon, \\
    \implies
    & \forall \sq{\pf} = (\pf_\sss, \ldots, \pf_\ttt),\quad
    \max_{\text{all } \pf^\circ \in \sq{\pf}}
    \Bigl\{ \Pr\bigl[ \W \text{ rejects } \pf^\circ \bigr] \Bigr\} \gtrapprox f(\epsilon).
\end{aligned}
\end{align}
As a result, there are more possible choices for the verifier $\W$ that can be used in the reduction.

We believe that the concept of non-monotone verifiers will find further applications
in $\PSPACE$-hardness of approximation for reconfiguration problems other than \MMkSATReconf.
An immediate open problem is to elucidate for which $\NP$ problem
its reconfiguration analogue becomes ``harder'' in terms of approximability.
Specifically, for what class of Boolean relations
does \prb{Maxmin Satisfiability Reconfiguration}
have a worse approximation threshold than \prb{Max Satisfiability}?

\section{Related Work}
\label{sec:related}

\subsection{Reconfiguration Problems}

In the field of \emph{combinatorial reconfiguration},
we study algorithmic problems and structural properties over the space of feasible solutions.
In the unified framework due to \citet{ito2011complexity},
a \emph{reconfiguration problem} is defined with respect to 
a combinatorial problem $\Pi$ called the \emph{source problem} and
a transformation rule $R$ over the feasible solutions of $\Pi$.
For an instance $\calI$ of $\Pi$ and 
a pair of its feasible solutions $S_\sss$ and $S_\ttt$,
the reconfiguration problem asks if $S_\sss$ can be transformed into $S_\ttt$
by repeatedly applying the transformation rule $R$ while
always preserving the feasibility of any intermediate solution.
Speaking differently, the reconfiguration problem concerns the $st$-connectivity over the \emph{configuration graph},
which is an (undirected) graph $G_{\calI,R}$ where
each node corresponds to a feasible solution of the given instance $\calI$ and
each link represents that its endpoints can be transformed into each other by applying $R$.
A pair of $S_\sss$ and $S_\ttt$ is a \Yes instance of the reconfiguration problem
if and only if there is an (undirected) path from $S_\sss$ to $S_\ttt$ on $G_{\calI,R}$.
Such a sequence of feasible solutions that forms a path on the configuration graph
is called a \emph{reconfiguration sequence}.
Reconfiguration problems may date back to motion planning \cite{hopcroft1984complexity} and classical puzzles, including 15 puzzles \cite{johnson1879notes} and Rubik's Cube.
Over the past two decades,
reconfiguration problems have been defined from many source problems.
For example,
reconfiguration problems of
\prb{3-SAT} \cite{gopalan2009connectivity},
\prb{4-Coloring} \cite{bonsma2009finding},
\prb{Independent Set} \cite{hearn2005pspace,hearn2009games,kaminski2012complexity}, and
\prb{Shortest Path} \cite{bonsma2013complexity}
are $\PSPACE$-complete, whereas
those of
\prb{2-SAT} \cite{gopalan2009connectivity},
\prb{3-Coloring} \cite{cereceda2011finding},
\prb{Matching} \cite{ito2011complexity}, and
\prb{Spanning Tree} \cite{ito2011complexity}
belong to $\cP$.
We refer the reader to the surveys by \citet{nishimura2018introduction,heuvel2013complexity,mynhardt2019reconfiguration,bousquet2024survey}
as well as
the Combinatorial Reconfiguration wiki \cite{hoang2024combinatorial} for
more algorithmic, hardness, and structural results of reconfiguration problems.

\subsection{Relatives of \kSATReconf}
\citet{gopalan2009connectivity}
initiated a systematic study on the reconfiguration problem of Boolean satisfiability.
By extending Schaefer's dichotomy theorem \cite{schaefer1978complexity},
which classifies the complexity of every \prb{Satisfiability} problem
as $\cP$ or $\NP$-complete,
\cite[Theorem 2.9]{gopalan2009connectivity} proved the following dichotomy theorem
for every \prb{Satisfiability Reconfiguration} problem:
the reconfiguration problem for Boolean formulas is
in $\cP$ if the formulas are built from \emph{tight relations} and 
is $\PSPACE$-complete otherwise.
Schaefer relations are tight but not vice versa, and thus,
the $\NP$-hardness of a particular \prb{Satisfiability} problem does \emph{not} necessarily imply
the $\PSPACE$-hardness of the corresponding \prb{Satisfiability Reconfiguration} problem; e.g.,
\prb{1-in-3 SAT Reconfiguration} is in $\cP$,
even though \prb{1-in-3 SAT} is $\NP$-complete.\footnote{
In the \prb{1-in-3 SAT} problem,
each clause of an input formula contains three literals, and
it is deemed satisfied if exactly one of the three literals is true.
}

Other than $st$-connectivity problems,
there are several types of reconfiguration problems         \cite{mouawad2015reconfiguration,nishimura2018introduction,heuvel2013complexity}.
One is \emph{connectivity problems} \cite{gopalan2009connectivity,makino2010boolean,makino2011exact},
which ask if the configuration graph is connected; i.e.,
every pair of satisfying assignments are reachable from each other.
There exists a trichotomy result that
determines whether the connectivity problem of \prb{Satisfiability}
is $\cP$, $\coNP$-complete, or $\PSPACE$-complete
\cite{gopalan2009connectivity,makino2010boolean,schwerdtfeger2012computational}.
Other algorithmic and structural problems related to \kSATReconf
include finding the shortest reconfiguration sequence \cite{mouawad2017shortest} and
investigating the diameter of the configuration graph \cite{gopalan2009connectivity}.

\subsection{Approximability of Reconfiguration Problems}
\label{subsec:related:approx-reconf}
For a reconfiguration problem,
its \emph{approximate version} allows to use infeasible solutions,
but requires to optimize the ``worst'' feasibility throughout the reconfiguration sequence.
In the language of configuration graphs, we would like to relax the feasibility until a given pair of feasible solutions become connected.
\citet[Theorems~4~and~5]{ito2011complexity}
showed that several reconfiguration problems are $\NP$-hard to approximate.
Since most reconfiguration problems are $\PSPACE$-complete \cite{nishimura2018introduction},
$\NP$-hardness results are not optimal.
The significance of showing $\PSPACE$-hardness compared to $\NP$-hardness is that
it disproves the existence of a witness (in particular, a reconfiguration sequence) of polynomial length assuming that $\NP \neq \PSPACE$, and
it rules out any polynomial-time algorithm under the weak assumption that $\cP \neq \PSPACE$.
\cite{ito2011complexity} posed
the $\PSPACE$-hardness of approximation for reconfiguration problems as an open problem.
\citet{ohsaka2023gap} postulated a reconfiguration analogue of the PCP theorem \cite{arora1998probabilistic,arora1998proof},
called the \emph{Reconfiguration Inapproximability Hypothesis} (RIH), and
proved that assuming RIH, approximate versions of several reconfiguration problems are $\PSPACE$-hard to approximate,
including those of 
\prb{3-SAT}, \prb{Independent Set}, \prb{Vertex Cover}, \prb{Clique}, and \prb{Set Cover}.
\citet[Theorem~1.5]{hirahara2024probabilistically} and
\citet[Theorem~1]{karthikc.s.2023inapproximability}
independently gave a proof of RIH by establishing the \emph{Probabilistically Checkable Reconfiguration Proof} (PCRP) theorem,
which provides a PCP-type characterization of $\PSPACE$.
The PCRP theorem, along with a series of gap-preserving reductions
\cite{ohsaka2023gap,ohsaka2024gap,ohsaka2024alphabet,hirahara2024probabilistically,hirahara2024optimal},
implies unconditional $\PSPACE$-hardness of approximation results for the reconfiguration problems listed above,
thereby resolving the open problem of \cite{ito2011complexity} affirmatively.

Since the PCRP theorem itself only implies $\PSPACE$-hardness of approximation within some constant factor,
\emph{explicit} factors of inapproximability have begun to be investigated for reconfiguration problems.
In the $\NP$ regime,
the \emph{parallel repetition theorem} of \citet{raz1998parallel} can be used to derive
many strong inapproximability results, e.g.,
\cite{hastad1999clique,hastad2001some,feige1998threshold,bellare1998free,zuckerman2007linear}.
However, for a reconfiguration analogue of two-prover games,
a naive parallel repetition does not reduce its soundness error \cite{ohsaka2025approximate}.
\citet{ohsaka2024gap} adapted Dinur's gap amplification \cite{dinur2007pcp,radhakrishnan2006gap,radhakrishnan2007dinurs} to
show that
\prb{Maxmin 2-CSP Reconfiguration} and \prb{Minmax Set Cover Reconfiguration} are 
$\PSPACE$-hard to approximate within a factor of $0.9942$ and $1.0029$, respectively.
\citet[Theorems~3 and~4]{karthikc.s.2023inapproximability} proved the $\NP$-hardness
of $\left(\frac{1}{2}+\epsilon\right)$-factor approximation for \prb{Maxmin 2-CSP Reconfiguration} and
of $(2-\epsilon)$-factor approximation for \prb{Minmax Set Cover Reconfiguration}
for any real $\epsilon > 0$.
These results are numerically tight because
\prb{Maxmin 2-CSP Reconfiguration} admits
    a $\left(\frac{1}{2} - \epsilon\right)$-factor approximation \cite[Theorem~6]{karthikc.s.2023inapproximability} and
\prb{Minmax Set Cover Reconfiguration} admits
    a $2$-factor approximation \cite[Theorem~6]{ito2011complexity}.
\citet{hirahara2024optimal} proved that \prb{Minmax Set Cover Reconfiguration} is
$\PSPACE$-hard to approximate within a factor of $2-o(1)$,
improving upon \cite{karthikc.s.2023inapproximability,ohsaka2024gap}.
This is the first optimal $\PSPACE$-hardness result for approximability of any reconfiguration problem.
\citet{hirahara2025asymptotically} showed that
the approximation threshold of \prb{Maxmin $k$-Cut Reconfiguration} lies in $1-\Theta\left(\frac{1}{k}\right)$.
Other reconfiguration problems for which approximation algorithms were developed include
\prb{Subset Sum Reconfiguration} \cite{ito2014approximability} and
\prb{Submodular Reconfiguration} \cite{ohsaka2022reconfiguration}.
\cref{tab:summary} summarizes existing approximation thresholds for reconfiguration problems
and their source problems.
Except for \MMkSATReconf,
every reconfiguration problem is at least as ``easy'' as its source problem in terms of approximability.

\subsection{Approximability of \prb{Max E$k$-SAT}}
The \prb{Max E$k$-SAT} problem
seeks an assignment for an E$k$-CNF formula that satisfies the maximum number of clauses.
Observe easily that a random assignment makes a $\left(1-\frac{1}{2^k}\right)$-fraction of 
clauses satisfied in expectation.
\citet[Theorems~6.5 and~6.14]{hastad1999clique} proved that this is tight; namely,
for every $k \geq 3$,
it is $\NP$-hard to approximate \prb{Max E$k$-SAT} within a factor of $1-\frac{1}{2^k}+\epsilon$
for any real $\epsilon > 0$.
For the special case of $k = 2$,
the best known approximation ratio of \prb{Max $2$-SAT} is 
$\beta_{\mathrm{LLZ}} \approx 0.940$ due to \citet{lewin2002improved}.
Under the Unique Games Conjecture \cite{khot2002power},
\prb{Max $2$-SAT} cannot be approximated in polynomial time within a factor of
$\beta_{\mathrm{LLZ}} + \epsilon$ for any real $\epsilon > 0$
\cite{austrin2007balanced,brakensiek2024tight}.

\section{Preliminaries}
\label{sec:pre}
Let $\bbN \defeq \{0,1,2,3,\ldots\}$ denote the set of all nonnegative integers.
For a nonnegative integer $n \in \bbN$, let $[n] \defeq \{1,2,3,\ldots,n\}$.
The base of logarithms is $2$.
For a (finite) set $S$ and a nonnegative integer $k \in \bbN$,
we write $\binom{S}{k}$ for the family of all size-$k$ subsets of $S$.
We use the Iverson bracket $\llbracket \cdot \rrbracket$; i.e.,
for a statement $P$,
we define $\llbracket P \rrbracket$ as $1$ if $P$ is true and $0$ otherwise.
A \emph{sequence} of a finite number of elements $a^{(1)}, \ldots, a^{(T)}$
is denoted by $\sq{a} = (a^{(1)}, \ldots, a^{(T)})$, and
we write $a \in \sq{a}$ to indicate that $a$ appears in $\sq{a}$ (at least once).
The symbol $\circ$ stands for a concatenation of two sequences or functions.
For a set $S$, we write $X \sim S$ to mean that 
$X$ is a random variable uniformly drawn from $S$.
For a function $f \colon D \to R$ over a finite domain $D$ and its subset $I \subset D$,
we use $f|_I \colon I \to R$ to denote the \emph{restriction} of $f$ to $I$.
We write
$0^n$ for $\underbrace{0 \cdots 0}_{n \text{ times}}$ and
$1^n$ for $\underbrace{1 \cdots 1}_{n \text{ times}}$.

\subsection{Definition of \MMkSATReconf}
\label{subsec:pre:kSATReconf}
We define \kSATReconf and its approximate version.
We use the standard terminology and notation of Boolean satisfiability.
A \emph{Boolean formula} $\phi$ consists of Boolean variables, denoted by $x_1, \ldots, x_n$, and
the logical operators, denoted by \scAND ($\wedge$), \scOR ($\vee$), and \scNOT ($\neg$).
An \emph{assignment} for Boolean formula $\phi$ is defined as
a mapping $\asg \colon \{x_1, \ldots, x_n\} \to \zo$ that assigns
a truth value of $\zo$ to each variable $x_i$ of $\phi$.
We say that $\asg$ \emph{satisfies} $\phi$ if $\phi$ evaluates to $1$ when each
variable $x_i$ is assigned the truth value specified by $\asg(x_i)$.
We say that $\phi$ is \emph{satisfiable} if 
there exists an assignment $\asg$ that satisfies $\phi$.
A \emph{literal} is either a variable $x_i$ or its negation $\bar{x_i}$, and
a \emph{clause} is a disjunction of literals.
A Boolean formula is in \emph{conjunctive normal form} (CNF) 
if it is a conjunction of clauses.
By abuse of notation,
for an assignment $\asg \colon \{x_1, \ldots, x_n\} \to \zo$,
we write $\asg(\bar{x_i}) \defeq \bar{\asg(x_i)}$ for a negative literal $\bar{x_i}$, and
write $\asg(\ell_1, \ldots, \ell_k) \defeq (\asg(\ell_1), \ldots, \asg(\ell_k))$ for $k$ literals $\ell_1, \ldots, \ell_k$.
The \emph{width} of a clause is defined as the number of literals in it.
A \emph{$k$-CNF formula} is a CNF formula of width at most $k$, and
an \emph{E$k$-CNF formula} is a CNF formula of which every clause has width exactly $k$.

For a CNF formula $\phi$ over $n$ variables $x_1, \ldots, x_n$ and
a pair of its assignments $\asg_\sss, \asg_\ttt \colon \{x_1, \ldots, x_n\} \to \zo$,
a \emph{reconfiguration sequence} from $\asg_\sss$ to $\asg_\ttt$
is defined as a sequence 
$\sq{\asg} = (\asg^{(1)}, \ldots, \asg^{(T)})$
over assignments for $\phi$ such that
$\asg^{(1)} = \asg_\sss$,
$\asg^{(T)} = \asg_\ttt$, 
and
every adjacent pair of assignments differ in at most one variable
(i.e., $\asg^{(t)}(x_i) = \asg^{(t+1)}(x_i)$ for all but at most one variable $x_i$).
We sometimes call $\asg_\sss$ and $\asg_\ttt$ the \emph{starting} and \emph{ending} assignments.
In the \kSATReconf problem \cite{gopalan2009connectivity},
for a satisfiable E$k$-CNF formula $\phi$ and
a pair of its satisfying assignments $\asg_\sss$ and $\asg_\ttt$,
we are asked to decide if there exists a reconfiguration sequence from $\asg_\sss$ to $\asg_\ttt$
consisting only of satisfying assignments for $\phi$.
Note that \kSATReconf is $\PSPACE$-complete for every $k \geq 3$ \cite{gopalan2009connectivity}.

We formulate an approximate version of \kSATReconf.
Let $\phi$ be a CNF formula consisting of $m$ clauses $C_1, \ldots, C_m$ over $n$ variables $x_1, \ldots, x_n$.
The \emph{value} of an assignment $\asg \colon \{x_1, \ldots, x_n\} \to \zo$ for $\phi$,
denoted by $\val_\phi(\asg)$,
is defined as the fraction of clauses of $\phi$ satisfied by $\asg$; namely,
\begin{align}
    \val_\phi(\asg) \defeq \frac{1}{m}\cdot
    \left|\bigl\{
        j \in [m] \bigm| \asg \text{ satisfies } C_j
    \bigr\}\right|
    = \Pr_{j \sim [m]}\bigl[
        \asg \text{ satisfies } C_j
    \bigr].
\end{align}
The \emph{value} of a reconfiguration sequence $\sq{\asg} = (\asg^{(1)}, \ldots, \asg^{(T)})$ for $\phi$,
denoted by $\val_\phi(\sq{\asg})$,
is defined as the minimum fraction of satisfied clauses of $\phi$
over all assignments in $\sq{\asg}$; namely,
\begin{align}
    \val_\phi(\sq{\asg}) \defeq 
    \min_{1 \leq t \leq T} \val_\phi\bigl(\asg^{(t)}\bigr).
\end{align}
The \MMkSATReconf problem is defined as follows.
\begin{problem}
For a satisfiable E$k$-CNF formula $\phi$ and a pair of its satisfying assignments $\asg_\sss$ and $\asg_\ttt$,
\MMkSATReconf requires
to find a reconfiguration sequence $\sq{\asg}$ from $\asg_\sss$ to $\asg_\ttt$ such that
$\val_\phi(\sq{\asg})$ is maximized.
\end{problem}
\noindent
Let $\opt_\phi(\asg_\sss \reco \asg_\ttt)$
denote the optimal value of \MMkSATReconf,
which is the maximum of $\val_\phi(\sq{\asg})$
over all possible reconfiguration sequences
$\sq{\asg}$ from $\asg_\sss$ to $\asg_\ttt$; namely,
\begin{align}
    \opt_\phi\bigl(\asg_\sss \reco \asg_\ttt\bigr)
    \defeq \max_{\sq{\asg} = (\asg_\sss, \ldots, \asg_\ttt)} \val_\phi(\sq{\asg}).
\end{align}
The gap version of \MMkSATReconf is defined as follows.
\begin{problem}
For any integer $k \in \bbN$ and
any reals $c$ and $s$ with $0 \leq s \leq c \leq 1$,
\prb{Gap$_{c,s}$ \kSATReconf} requires to determine
for a satisfiable E$k$-CNF formula $\phi$ and a pair of its satisfying assignments $\asg_\sss$ and $\asg_\ttt$,
whether $\opt_\phi(\asg_\sss \reco \asg_\ttt) \geq c$ or
$\opt_\phi(\asg_\sss \reco \asg_\ttt) < s$.
\end{problem}
\noindent
In particular, the case of $s=c=1$ reduces to \kSATReconf.

\subsection{Probabilistically Checkable Reconfiguration Proofs}
First, we formalize the notion of \emph{verifier}.

\begin{definition}
A \emph{verifier} with
\emph{randomness complexity} $r \colon \bbN \to \bbN$ and
\emph{query complexity} $q \colon \bbN \to \bbN$
is a probabilistic polynomial-time algorithm $\V$
that given an input $x \in \zo^*$,
draws $r \defeq r(|x|)$ random bits $R \in \zo^r$ and uses
$R$ to generate a sequence of $q \defeq q(|x|)$ queries
$I = (i_1, \ldots, i_q)$ and a circuit $D \colon \zo^q \to \zo$.
We write $(I,D) \sim \V(x)$ to denote the random variable
for a pair of the query sequence and circuit generated by $\V$ on
input $x \in \zo^*$ and $r$ random bits.
Given an input $x \in \zo^*$ and oracle access to a \emph{proof} $\pf \in \zo^*$,
we define $\V$'s (randomized) output as 
a random variable $\V^\pf(x) \coloneq D(\pf|_I)$ for $(I,D) \sim \V(x)$ over the randomness of $R$.
We say that $\V(x)$ \emph{accepts} $\pf$ or simply $\V^{\pf}(x)$ \emph{accepts} if 
$\V^{\pf}(x) = 1$, and that
$\V^{\pf}(x)$ \emph{rejects} if $\V^{\pf}(x) = 0$.
\end{definition}

Then, we introduce the \emph{Probabilistically Checkable Reconfiguration Proof} (PCRP) theorem due to 
\citet{hirahara2024probabilistically,karthikc.s.2023inapproximability},
which offers a PCP-type characterization of $\PSPACE$.
A \emph{PCRP system} is defined as a triplet of
a verifier $\V$ and polynomial-time computable proofs $\pf_\sss,\pf_\ttt \colon \zo^* \to \zo^*$.
For a pair of \emph{starting} and \emph{ending} proofs $\pf_\sss, \pf_\ttt \in \zo^\ell$,
a \emph{reconfiguration sequence} from $\pf_\sss$ to $\pf_\ttt$
is defined as a sequence
$(\pf^{(1)}, \ldots, \pf^{(T)})$ over $\zo^\ell$ such that
$\pf^{(1)} = \pf_\sss$,
$\pf^{(T)} = \pf_\ttt$, and 
$\pf^{(t)}$ and $\pf^{(t+1)}$ differ in at most one bit for every $t \in [T-1]$.

\begin{theorem}[Probabilistically Checkable Reconfiguration Proof theorem \cite{hirahara2024probabilistically,karthikc.s.2023inapproximability}]
\label{thm:PCRP}
A language $L \subseteq \zo^*$ is in $\PSPACE$
if and only if
there exists a verifier $\V$ with
randomness complexity $r(n) = \bigO(\log n)$ and
query complexity $q(n) = \bigO(1)$,
coupled with polynomial-time computable proofs
$\pf_\sss, \pf_\ttt \colon \zo^* \to \zo^*$, such that
the following hold for every input $x \in \zo^*$\textup{:}
\begin{itemize}
    \item \textup{(}Completeness\textup{)}
    If $x \in L$, then there exists a reconfiguration sequence
    $\sq{\pf} = ( \pf^{(1)}, \ldots, \pf^{(T)} )$
    from $\pf_\sss(x)$ to $\pf_\ttt(x)$ such that
    $\V(x)$ accepts every proof in $\sq{\pf}$ with probability $1$\textup{;} namely,
    \begin{align}
        \forall t \in [T], \quad \Pr\Bigl[ \V^{\pf^{(t)}}(x) = 1 \Bigr] = 1.
    \end{align}
    \item \textup{(}Soundness\textup{)}
    If $x \notin L$, then every reconfiguration sequence 
    $\sq{\pf} = ( \pf^{(1)}, \ldots, \pf^{(T)} )$
    from $\pf_\sss(x)$ to $\pf_\ttt(x)$
    contains some proof that is rejected by $\V(x)$ with probability more than $\frac{1}{2}$\textup{;} namely,
    \begin{align}
        \exists t \in [T], \quad \Pr\Bigl[ V^{\pf^{(t)}}(x) = 1 \Bigr] < \frac{1}{2}.
    \end{align}
\end{itemize}    
\end{theorem}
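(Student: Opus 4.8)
I would prove the two implications separately, with essentially all of the work in the forward ($\Rightarrow$) direction.

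For the reverse direction, suppose a verifier $\V$ with randomness complexity $\bigO(\log n)$ and query complexity $\bigO(1)$, together with proofs $\pf_\sss,\pf_\ttt$, satisfies the stated completeness and soundness. Then $x \in L$ if and only if there is a reconfiguration sequence from $\pf_\sss(x)$ to $\pf_\ttt(x)$ all of whose proofs are accepted by $\V(x)$ with probability $1$: the ``only if'' is completeness, and the ``if'' is the contrapositive of soundness, since a proof accepted with probability $1$ is rejected with probability $0 \le \tfrac12$. Because the randomness complexity is $\bigO(\log n)$, membership in the set of such ``good'' proofs is decidable in deterministic polynomial time by enumerating all $\poly(n)$ random strings. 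It remains to decide $st$-connectivity in the configuration graph on $\zo^\ell$ with $\ell = \poly(n)$, whose edges join good proofs at Hamming distance at most $1$. This is reachability in a graph of $2^\ell$ vertices with polynomial-time-testable vertices and edges, which lies in $\PSPACE$ by the standard recursive midpoint (Savitch-type) search using space $\bigO(\ell \cdot \log 2^\ell) = \poly(n)$. Hence $L \in \PSPACE$.

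For the forward direction I would adapt the machinery behind the PCP theorem to the reconfiguration setting. I would start from a $\PSPACE$-complete problem in a reconfiguration-friendly normal form --- for instance, deciding whether one configuration of a polynomial-space Turing machine reaches another, recast as reconfiguration of satisfying assignments of a succinctly described CNF whose moves are structured. Next, arithmetize this reachability claim following the route behind Shamir's $\mathsf{IP}=\PSPACE$ theorem: it reduces to a polynomial identity checkable by a sum-check computation interleaved with degree-reduction operators for the quantifiers. Turn this into a proof system in which the proof $\pf$ at each point carries the low-degree extension of the current machine configuration together with the partial-sum tables certifying the transition relation and the degree-reduced formula, and the verifier spot-checks low-degreeness (via a line or plane test) and local consistency of these tables. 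Taking the honest auxiliary data (padded by a fixed string) gives a proof accepted with probability $1$ at each vertex of the Turing reachability path. To reconfigure the proofs, observe that consecutive honest proofs along that path differ in many bits --- whole tables get rewritten --- so one inserts intermediate proofs flipping a single bit at a time, and the verifier must be robust against these half-rewritten proofs; I would achieve this by routing every bit-by-bit rewrite through a dedicated scratch region (keeping the active part of the proof globally consistent) and exploiting robustness of the low-degree and sum-check tests, so that in the \Yes case the entire path through proof space stays inside the ``accepted with probability $1$'' region, while in the \No case every path is forced through a proof far from any consistent low-degree encoding. Finally, I would bring query and alphabet complexity down to $\bigO(1)$ by proof composition in the style of Arora--Safra / Dinur, where the inner PCRP verifier for the (by then small) outer computation must itself expose a reconfiguration of its inner proofs tracking the outer rewrites, and amplify the soundness deficiency from a small constant to beyond $\tfrac12$; since ordinary parallel repetition does not reduce the reconfiguration soundness error (see \cref{subsec:related:approx-reconf}), I would instead use a reconfiguration-compatible amplification such as a bounded number of serial repetitions, which keeps query complexity $\bigO(1)$.

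The main obstacle --- and the reason this needs a dedicated argument rather than a black-box invocation of the PCP theorem --- is the interaction between the reconfiguration structure and the PCP machinery. In the classical PCP theorem there is no notion of an intermediate proof, whereas here the proof is a long, highly structured object that must be rewritten one bit at a time along the reconfiguration sequence, and the verifier must simultaneously accept the honest proof with probability $1$ everywhere on some path in the \Yes case, reject with probability greater than $\tfrac12$ somewhere on \emph{every} path in the \No case, and use only $\bigO(1)$ queries after composition. Guaranteeing robustness against the half-rewritten intermediate proofs while still composing down to constant query complexity, and doing so with a soundness amplification that survives the reconfiguration setting, is the crux of the proof.
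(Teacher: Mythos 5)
You are trying to prove a statement that this paper does not prove at all: \cref{thm:PCRP} is imported verbatim from \cite{hirahara2024probabilistically,karthikc.s.2023inapproximability}, and the paper's ``proof'' is simply the citation. So there is no in-paper argument to compare against; the question is whether your sketch would itself establish the theorem. Your reverse direction is correct and routine: with randomness complexity $\bigO(\log n)$ the set of proofs accepted with probability $1$ (equivalently, with probability at least $\tfrac12$) is decidable in polynomial time by enumerating all random strings, the proofs have polynomial length because $\pf_\sss,\pf_\ttt$ are polynomial-time computable and the verifier runs in polynomial time, and $st$-connectivity over the induced subgraph of the proof hypercube is in $\PSPACE$ by a Savitch-type search. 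This matches the easy direction of the cited works.

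The forward direction, however, is the entire content of the theorem, and what you give is a research plan, not a proof. The steps you yourself flag as the crux --- keeping the verifier's acceptance probability at $1$ on every \emph{half-rewritten} intermediate proof along a bit-by-bit rewrite of low-degree/sum-check tables (your ``scratch region'' device is named but not constructed, and it is not shown that partially updated scratch contents never trigger rejection), forcing every path in the \No case through a proof far from all consistent encodings, and carrying out proof composition so that the \emph{inner} proofs can also be reconfigured in lockstep with the outer rewrites --- are exactly the technical core of \cite{hirahara2024probabilistically,karthikc.s.2023inapproximability}, and your proposal asserts rather than proves that they can be handled. (The final amplification step is comparatively benign: since soundness here is a pointwise statement about some proof in the sequence, a constant number of independent serial runs does boost rejection probability from a constant to beyond $\tfrac12$ with $\bigO(1)$ queries; the gap is in constructing the constant-query, constant-soundness-gap reconfiguration verifier in the first place.) As written, the proposal would not establish \cref{thm:PCRP}; within this paper the correct treatment is to invoke the cited works, and a self-contained proof would require supplying the robustness and composition arguments you have only outlined.
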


For a verifier $\V$ and a reconfiguration sequence $\sq{\pf} = (\pf^{(1)}, \ldots, \pf^{(T)})$,
let $\val_\V(\sq{\pf})$ denote the minimum acceptance probability of $\V$
over all proofs in $\sq{\pf}$; namely,
\begin{align}
    \val_\V(\sq{\pf})
    \defeq \min_{1 \leq t \leq T} \Pr\bigl[ \V \text{ accepts } \pf^{(t)} \bigr].
\end{align}
For a verifier $\V$ and a pair of proofs $\pf_\sss,\pf_\ttt \in \zo^*$,
let $\opt_\V(\pf_\sss \reco \pf_\ttt)$
denote the maximum of $\val_\V(\sq{\pf})$
over all possible reconfiguration sequences $\sq{\pf}$ from $\pf_\sss$ to $\pf_\ttt$; namely,
\begin{align}
    \opt_\V\bigl(\pf_\sss \reco \pf_\ttt\bigr)
    \defeq \max_{\sq{\pf} = (\pf_\sss, \ldots, \pf_\ttt)} \val_\V(\sq{\pf}).
\end{align}
We say that
a PCRP system $(\V,\pf_\sss,\pf_\ttt)$ for a language $L \subseteq \zo^*$
has \emph{completeness} $c \colon \bbN \to \bbN$ and \emph{soundness} $s \colon \bbN \to \bbN$ if
the following hold for every input $x \in \zo^*$:
\begin{itemize}
    \item If $x \in L$, then 
        $\opt_{\V(x)}\bigl(\pf_\sss(x) \reco \pf_\ttt(x)\bigr) \geq c(|x|)$.
    \item If $x \notin L$, then 
        $\opt_{\V(x)}\bigl(\pf_\sss(x) \reco \pf_\ttt(x)\bigr) < s(|x|)$.
\end{itemize}
Note that the PCRP system of \cref{thm:PCRP} has
perfect completeness $c(n) = 1$ and soundness $s(n) = \frac{1}{2}$.

For a verifier $\V$ with randomness complexity $r \colon \bbN \to \bbN$ and
an input $x \in \zo^*$,
the \emph{degree} of a proof location $i$ is defined as
the number of random bit strings $R \in \zo^{r(|x|)}$ on which $\V(x)$ queries $i$; namely,
\begin{align}
    \bigl|\bigl\{ R \in \zo^{r(|x|)} \bigm| i \in I_R \bigr\}\bigr|
    = \Pr_{(I,D) \sim \V(x)}\bigl[i \in I\bigr] \cdot 2^{r(|x|)},
\end{align}
where
$I_R$ is the query sequence generated by $\V(x)$ over randomness $R$.
We say that $\V$ has the \emph{degree} $\Delta \colon \bbN \to \bbN$ if 
for every input $x \in \zo^*$,
each proof location has degree at most $\Delta(|x|)$.

\section{Deterministic \texorpdfstring{$\left(1-\frac{1}{k-1}-\frac{1}{k}\right)$-factor}{(1-1/(k-1)-1/k)-factor} Approximation Algorithm for \MMkSATReconf}
\label{sec:alg}

In this section, we prove \cref{thm:intro:alg}; i.e.,
we develop a deterministic $\left(1-\frac{1}{k-1}-\frac{1}{k}\right)$-factor approximation algorithm
for \MMkSATReconf for every $k \geq 3$.

\begin{theorem}
\label{thm:alg}
For an integer $k \geq 3$,
a satisfiable E$k$-CNF formula $\phi$, and
a pair of its satisfying assignments $\asg_\sss$ and $\asg_\ttt$,
there exists a polynomial-length reconfiguration sequence $\sq{\asg}$ from $\asg_\sss$ to $\asg_\ttt$ such that
\begin{align}
\begin{aligned}
    \val_\phi(\sq{\asg})
    \geq 1 - \frac{1}{k-1} - \frac{1}{k}.
\end{aligned}
\end{align}
Moreover, such $\sq{\asg}$ can be found by a deterministic polynomial-time algorithm.
In particular, this is a deterministic $\left(1-\frac{1}{k-1}-\frac{1}{k}\right)$-factor
approximation algorithm for \MMkSATReconf.
\end{theorem}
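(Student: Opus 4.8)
The plan is to analyze, and then derandomize, the randomized procedure from \cref{sec:overview}: sample a uniformly random assignment $\asgrnd\colon\{x_1,\dots,x_n\}\to\zo$, let $\sq{\asg_1}$ flip the variables on which $\asg_\sss$ and $\asgrnd$ disagree in a uniformly random order, let $\sq{\asg_2}$ flip the variables on which $\asgrnd$ and $\asg_\ttt$ disagree in a uniformly random order, and output $\sq{\asg}\defeq\sq{\asg_1}\circ\sq{\asg_2}$. Since $\sq{\asg}$ has at most $2n+1$ assignments, polynomial length is automatic, so the theorem reduces to two tasks: (i) proving \cref{lem:alg:round}, namely that for every clause $C_j$ of $\phi$ the probability that \emph{all} assignments of $\sq{\asg}$ satisfy $C_j$ is at least $1-\frac1{k-1}-\frac1k$; and (ii) derandomizing.

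For (i), fix a clause $C_j$ with literals $\ell_1,\dots,\ell_k$ on distinct variables and put $u\defeq\asg_\sss(\ell_1,\dots,\ell_k)$, $v\defeq\asg_\ttt(\ell_1,\dots,\ell_k)\in\zo^k$; both are nonzero since $\asg_\sss,\asg_\ttt$ satisfy $\phi$. Because $\asgrnd|_{\mathrm{vars}(C_j)}$ is uniform and the order in which $C_j$'s flipped variables appear within each phase is uniformly random, all mutually independent and independent of the rest, the probability that some assignment of $\sq{\asg}$ violates $C_j$ is a function $p(u,v)$ alone. The key step --- and, I expect, the main obstacle --- is to show that $p(u,v)$ is \emph{maximized} when $|u|=|v|=1$, i.e.\ when each endpoint makes exactly one literal of $C_j$ true, matching the ``key insight'' stated in \cref{sec:overview}. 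I would prove this by a coupling: if a coordinate of $u$ equals $1$, flip that coordinate to $0$ while keeping $\asgrnd$, the flip-times of all other literals, and the (simultaneously present-or-absent) phase-$2$ flip-time of the affected literal identical across the two instances; then tracking the truth value of the affected literal along $\sq{\asg}$, a short case analysis over the four patterns of (start value, $\asgrnd$-value, end value) shows that its set of ``false'' time steps only grows, so every violation of $C_j$ survives, $p$ does not decrease, and iterating (symmetrically on $v$) reduces to $|u|=|v|=1$. Morally this is just ``more satisfied literals only help,'' but making the realignment of the one affected literal's flip times legitimate is the delicate part.

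With $|u|=|v|=1$ there are two cases up to renaming literals: (A) $\asg_\sss,\asg_\ttt$ make the same literal $\ell_1$ true; (B) they make distinct literals $\ell_1,\ell_2$ true. In both, a direct inspection shows $C_j$ is violated during phase $1$ exactly when the variable of $\ell_1$ is flipped in phase $1$ (probability $\frac12$) \emph{and} it is the first of $C_j$'s phase-$1$-flipped variables in the random order; conditioned on the former, the number of those variables is $1+B$ for a $\mathrm{Binomial}(k-1,\tfrac12)$ variable $B$, so the phase-$1$ violation probability is $\tfrac12\E[\tfrac1{B+1}]=\tfrac{1-2^{-k}}{k}$, using $\sum_{b=0}^{k-1}\binom{k-1}{b}\tfrac1{b+1}=\tfrac{2^{k}-1}{k}$, and an analogous computation (with ``first'' replaced by ``last'' and $\ell_1$ by the literal true at $\asg_\ttt$) gives the same value for phase $2$. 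A union bound over the two phases then yields $p(u,v)\le\tfrac{2(1-2^{-k})}{k}<\tfrac2k\le\tfrac1{k-1}+\tfrac1k$, proving \cref{lem:alg:round}; retaining the intersection term in this calculation yields the sharper factors in \cref{tab:approx}.

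For (ii), let $N\defeq|\{j\in[m]:C_j\text{ satisfied by every assignment of }\sq{\asg}\}|$. Every assignment of $\sq{\asg}$ satisfies at least $N$ clauses, so $\val_\phi(\sq{\asg})\ge N/m$, and by linearity of expectation and \cref{lem:alg:round}, $\E[N]\ge(1-\tfrac1{k-1}-\tfrac1k)m$; hence it suffices to deterministically produce $\asgrnd$ and the two orderings with $N\ge(1-\tfrac1{k-1}-\tfrac1k)m$. This is a standard application of the method of conditional expectations: reveal the bits of $\asgrnd$ one at a time and then the orderings element by element, always keeping the conditional expectation of $N=\sum_j\llbracket C_j\text{ always satisfied}\rrbracket$ from decreasing; each term $\Pr[C_j\text{ always satisfied}\mid\text{choices so far}]$ depends only on already-revealed local data of $C_j$ and is evaluated in polynomial time using the explicit per-clause formulas from (i). Combining (i) and (ii) establishes \cref{thm:alg}.
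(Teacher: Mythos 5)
Your proposal is correct, and its overall skeleton coincides with the paper's: the same randomized procedure through a uniformly random midpoint, a reduction to the case where each endpoint makes exactly one literal of the clause true, and derandomization by the method of conditional expectations. Your monotonicity step is exactly the paper's \cref{clm:alg:monotone}; the "delicate part" you flag (realigning the affected variable's flip time when the flip sets have different sizes) is handled there by sampling a uniform order on the larger difference set, deleting the affected variable to get a uniform order on the smaller set, and observing that the primed sequence dominates coordinatewise, which is the same coupling you sketch. Where you genuinely diverge is the quantitative finish: instead of the paper's exact four-case computation (\cref{clm:alg:neq,clm:alg:eq}, using both identities of \cref{fct:alg:binom}), you observe that in the single-true-literal case a violation in phase~$1$ (resp.\ phase~$2$) occurs exactly when the endpoint's true variable is flipped in that phase and is first (resp.\ last) among the clause's flipped variables, giving per-phase violation probability $\frac{1}{2}\E\bigl[\frac{1}{B+1}\bigr]=\frac{1-2^{-k}}{k}$ with $B\sim\mathrm{Binomial}(k-1,\tfrac12)$, and a union bound then yields violation at most $\frac{2(1-2^{-k})}{k}<\frac{2}{k}\leq\frac{1}{k-1}+\frac{1}{k}$. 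This is a valid and simpler argument (needing only the first identity of \cref{fct:alg:binom}), and it even gives a marginally stronger per-clause bound than the $1-\frac{1}{k-1}-\frac{1}{k}$ stated in \cref{lem:alg:round}; what it gives up is the exact per-clause probabilities that the paper uses to report the sharper small-$k$ factors of \cref{tab:approx}, though, as you note, retaining the intersection term recovers them. Your treatment of the derandomization is at the same level of detail as the paper's (a "standard application"); if you wanted to be more careful you would note that the conditional probabilities must be computed for the \emph{actual} endpoints restricted to each clause (not just the single-true-literal case) under partial revelation of $\asgrnd$ and of the orders, which is still a polynomial-time local computation, but this gap is no larger than the one the paper itself leaves.
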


Some definitions are further introduced.
Let $\phi$ be a CNF formula consisting of $m$ clauses $C_1, \ldots, C_m$ over $n$ variables $x_1, \ldots, x_n$.
For a reconfiguration sequence $\sq{\asg}$ over assignments for $\phi$,
we say that $\sq{\asg}$ \emph{satisfies} a clause $C_j$ of $\phi$
if every assignment in $\sq{\asg}$ satisfies $C_j$.
For two assignments $\asg, \bsg \colon \{x_1, \ldots, x_n\} \to \zo$ for $\phi$,
let $\asg \triangle \bsg$ denote
the set of variables at which $\asg$ and $\bsg$ differ; namely,
\begin{align}
    \asg \triangle \bsg \defeq \bigl\{ x_i \bigm| \asg(x_i) \neq \bsg(x_i) \bigr\}.
\end{align}
For a pair of assignments $\asg_\sss,\asg_\ttt \colon \{x_1, \ldots, x_n\} \to \zo$ for $\phi$,
we say that
a reconfiguration sequence $\sq{\asg} = (\asg^{(1)}, \ldots, \asg^{(T)})$ from $\asg_\sss$ to $\asg_\ttt$ is \emph{irredundant} if
no adjacent pair of assignments in $\sq{\asg}$ are identical, and
for each variable $x_i$,
there is an index $t_i \in [T]$ such that
\begin{align}
    \asg^{(t)}(x_i) = 
    \begin{cases}
        \asg_\sss(x_i) & \text{if } t \leq t_i, \\
        \asg_\ttt(x_i) & \text{if } t > t_i.
    \end{cases}
\end{align}
In other words,
$\sq{\asg}$ is obtained by flipping the assignments to variables of $\asg_\sss \triangle \asg_\ttt$ exactly once in some order.
For two assignments $\asg_1, \asg_2 \colon \{x_1, \ldots, x_n\} \to \zo$,
let $\stsqasg(\asg_1 \reco \asg_2)$ denote
the set of all irredundant reconfiguration sequences from $\asg_1$ to $\asg_2$.
For three assignments $\asg_1, \asg_2, \asg_3 \colon \{x_1, \ldots, x_n\} \to \zo$,
let $\stsqasg(\asg_1 \reco \asg_2 \reco \asg_3)$ denote
the set of reconfiguration sequences from $\asg_1$ to $\asg_3$ obtained by concatenating
all possible pairs of irredundant reconfiguration sequences of
$\stsqasg(\asg_1 \reco \asg_2)$ and $\stsqasg(\asg_2 \reco \asg_3)$.

The proof of \cref{thm:alg} relies on the following lemma,
which states that
a random reconfiguration sequence 
that passes through a random assignment satisfies each clause
with probability $1 - \frac{1}{k-1} - \frac{1}{k}$.

\begin{lemma}
\label{lem:alg:round}
Let $k \geq 3$ be an integer,
$x_1, \ldots, x_k$ be $k$ variables,
$C = \ell_1 \vee \cdots \vee \ell_k$ be a clause of width $k$ over $x_1, \ldots, x_k$, and
$\asg_\sss, \asg_\ttt \colon \{x_1, \ldots, x_k\} \to \zo$ be a pair of satisfying assignments for $C$.
Consider a uniformly random assignment $\asgrnd \colon \{x_1, \ldots, x_k\} \to \zo$ and
a random reconfiguration sequence $\sq{\asg}$ from $\asg_\sss$ to $\asg_\ttt$
uniformly chosen from $\stsqasg(\asg_\sss \reco \asgrnd \reco \asg_\ttt)$.
Then, $\sq{\asg}$ satisfies $C$
with probability at least $1 - \frac{1}{k-1} - \frac{1}{k}$\textup{;} namely,
\begin{align}
\label{eq:alg:round}
\begin{aligned}
    \Pr_{\asgrnd, \sq{\asg}}\bigl[
        \sq{\asg} \text{ satisfies } C
    \bigr]
    \geq 1 - \frac{1}{k-1} - \frac{1}{k}.
\end{aligned}
\end{align}
\end{lemma}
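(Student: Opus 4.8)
The plan is to pass to a convenient coordinate system, reduce the problem to a one‑parameter family of worst cases, and evaluate those by hand.

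\emph{Coordinates and the basic formula.} First I would use that $C$, having width exactly $k$ over $k$ distinct variables, is falsified by a \emph{unique} assignment $\gamma$ — the one making every literal $\ell_i$ false. Applying the affine bijection $\asg\mapsto(\asg(\ell_1),\dots,\asg(\ell_k))$, I may assume $\gamma=0^k$ and that an assignment satisfies $C$ iff it is $\neq 0^k$; this preserves the Hamming metric, the uniform law of $\asgrnd$, and the uniform law on irredundant reconfiguration sequences. Write $\sq{\asg}=\sq{\asg_1}\circ\sq{\asg_2}$, where phase $1$ flips the coordinates of $\asg_\sss\triangle\asgrnd$ one at a time in a uniformly random order, phase $2$ flips those of $\asgrnd\triangle\asg_\ttt$, and the two orders are independent given $\asgrnd$; then $\sq{\asg}$ satisfies $C$ iff neither phase visits $0^k$. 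Conditioning on $\asgrnd$ and letting $a,b,c$ be the numbers of literals of $C$ made true by $\asg_\sss,\asg_\ttt,\asgrnd$ respectively (so $a,b\geq1$): within $\sq{\asg_1}$ each coordinate flips at most once, so the running assignment equals $0^k$ precisely when the set flipped so far equals the true‑literal set $P$ of $\asg_\sss$; this forces $P$ disjoint from $\asgrnd$'s true‑literal set, and then it occurs iff the random order lists all of $P$ (of size $a$) before the remaining $c$ flips, i.e.\ with probability $\binom{a+c}{a}^{-1}$. Hence the conditional probability that $\sq{\asg_1}$ avoids $0^k$ is $x_{\asgrnd}:=1-\llbracket\asg_\sss,\asgrnd\text{ share no true literal of }C\rrbracket\cdot\binom{a+c}{a}^{-1}$ (which also correctly equals $0$ when $\asgrnd=0^k$); symmetrically $\sq{\asg_2}$ avoids $0^k$ with probability $y_{\asgrnd}:=1-\llbracket\asg_\ttt,\asgrnd\text{ share no true literal}\rrbracket\cdot\binom{b+c}{b}^{-1}$, and these events are conditionally independent given $\asgrnd$, so $\Pr_{\asgrnd,\sq{\asg}}\bigl[\sq{\asg}\text{ satisfies }C\bigr]=2^{-k}\sum_{\asgrnd}x_{\asgrnd}\,y_{\asgrnd}$.

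\emph{Reducing to $a=b=1$.} Next I would show that replacing $\asg_\sss$ by an assignment $\asg_\sss'$ making true exactly one of $\asg_\sss$'s true literals cannot increase $\Pr[\sq{\asg}\text{ satisfies }C]$, and symmetrically for $\asg_\ttt$; thus it suffices to handle the case that $\asg_\sss$ and $\asg_\ttt$ each make exactly one literal true, of which there are, up to relabeling coordinates, only two: $\asg_\sss=\asg_\ttt$, or $\asg_\sss$ and $\asg_\ttt$ differ in exactly two variables. By the factorized formula, the reduction comes down to the pointwise inequality $x_{\asgrnd}\geq x'_{\asgrnd}$ for every $\asgrnd$ (then multiply by $y_{\asgrnd}\geq0$ and sum): if $\asgrnd$ makes the kept literal true, both sides are $1$; otherwise the inequality reduces to $\binom{a+c}{a}\geq c+1$, which holds for all $a\geq1$, $c\geq0$.

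\emph{Finishing, and the main obstacle.} In the two remaining cases $x_{\asgrnd}$ equals $1$ if $\asgrnd$ satisfies $\asg_\sss$'s distinguished literal and $1-\tfrac1{c+1}$ otherwise (likewise $y_{\asgrnd}$). Grouping the $2^k$ choices of $\asgrnd$ by $c$ and by which of the (one or two) distinguished literals they satisfy turns $\sum_{\asgrnd}x_{\asgrnd}y_{\asgrnd}$ into an explicit combination of binomial sums; I would evaluate these using elementary identities such as $\sum_{c}\binom{n}{c}\tfrac1{c+1}=\tfrac{2^{n+1}-1}{n+1}$ and its shift $\sum_c\binom{n}{c}\tfrac1{c+2}=\tfrac{2^{n+2}-1}{n+2}-\tfrac{2^{n+1}-1}{n+1}$, together with the bound $\tfrac{2c+1}{(c+1)^2}\leq\tfrac2{c+1}$, which yields $\Pr[\sq{\asg}\text{ satisfies }C]\geq1-\tfrac2k$ in both cases; since $\tfrac2k\leq\tfrac1{k-1}+\tfrac1k$, this proves the lemma (and retaining the $\tfrac{2c+1}{(c+1)^2}$ terms exactly recovers the sharper constants of \cref{tab:approx}). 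The substantive step is the reduction to $a=b=1$: the whole argument hinges on correctly identifying, conditioned on $\asgrnd$, when and with what probability a phase reaches $0^k$, so that the pointwise domination $x_{\asgrnd}\geq x'_{\asgrnd}$ goes through; the concluding binomial bookkeeping is routine, if a little tedious.
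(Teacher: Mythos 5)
Your proposal is correct, and its overall architecture matches the paper's: condition on the random midpoint $\asgrnd$, factor the event over the two (conditionally independent) phases, reduce to the case where $\asg_\sss$ and $\asg_\ttt$ each make exactly one literal of $C$ true, and finish with binomial sums. The difference lies in how the reduction is established. The paper's \cref{clm:alg:monotone} proves monotonicity by a coupling of random flip orders (inserting/removing the distinguished variable from the ordering), whereas you first derive the closed-form conditional hitting probability $\binom{a+c}{a}^{-1}$ for a phase to reach the unique falsifying assignment, after which monotonicity is the pointwise-in-$\asgrnd$ inequality $\binom{a+c}{a}\geq c+1$; this is a cleaner route that also reproduces directly the probabilities $\frac{j}{j+1}$ and $\frac{j+1}{j+2}$ appearing in \cref{clm:alg:neq,clm:alg:eq}. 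Your endgame also differs slightly: you claim the bound $1-\frac{2}{k}$ in \emph{both} residual cases, whereas the paper only obtains $1-\frac{1}{k-1}-\frac{1}{k}$ when $\asg_\sss\neq\asg_\ttt$. Your stronger claim is in fact true: with $K=k-2$, the failure probability in that case is $2^{-k}\sum_{j}\binom{K}{j}\bigl[\tfrac{2j+1}{(j+1)^2}+\tfrac{2}{j+2}\bigr]\leq 2^{-k}\sum_{j}\binom{K}{j}\bigl[\tfrac{2}{j+1}+\tfrac{2}{j+2}\bigr]=\tfrac{2-2^{1-k}}{k}<\tfrac{2}{k}$ by \cref{fct:alg:binom}, so your route yields a marginally sharper constant and certainly implies the lemma since $\frac{2}{k}\leq\frac{1}{k-1}+\frac{1}{k}$. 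The only part left implicit is the final bookkeeping, but the identities you invoke are correct (your expression for $\sum_{c}\binom{n}{c}\frac{1}{c+2}$ coincides with the paper's \cref{fct:alg:binom}).
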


By using \cref{lem:alg:round}, we can prove \cref{thm:alg}.

\begin{proof}[Proof of \cref{thm:alg}]
Let $\phi$ be a satisfiable E$k$-CNF formula consisting of $m$ clauses
$C_1, \ldots, C_m$ over $n$ variables $x_1, \ldots, x_k$, and
$\asg_\sss,\asg_\ttt \colon \{x_1, \ldots, x_k\} \to \zo$ be 
a pair of satisfying assignments for $\phi$.
Let
$\asgrnd \colon \{x_1, \ldots, x_k\} \to \zo$ be a uniformly random assignment and
$\sq{\asg}$ be a random reconfiguration sequence uniformly chosen from
$\stsqasg(\asg_\sss \reco \asgrnd \reco \asg_\ttt)$.
By linearity of expectation and \cref{lem:alg:round},
we derive
\begin{align}
    \E_{\asgrnd, \sq{\asg}}\bigl[
        \val_\phi(\sq{\asg})
    \bigr]
    \geq \frac{1}{m} \cdot \sum_{1 \leq j \leq m} \Pr_{\asgrnd, \sq{\asg}}\bigl[
        \sq{\asg} \text{ satisfies } C_j
    \bigr]
    \geq 1-\frac{1}{k-1}-\frac{1}{k}.
\end{align}
By a standard application of the method of conditional expectations \cite{alon2016probabilistic},
we can construct a reconfiguration sequence $\sq{\asg}^*$ from $\asg_\sss$ to $\asg_\ttt$ such that
\begin{align}
    \val_\phi(\sq{\asg}^*) \geq 1-\frac{1}{k-1}-\frac{1}{k}
\end{align}
in deterministic polynomial time, which accomplishes the proof.
\end{proof}

The remainder of this section is devoted to the proof of \cref{lem:alg:round}.
Hereafter,
we fix the set of $k$ variables, denoted by $V \defeq \{x_1, \ldots, x_k\}$, and
fix a clause of width $k$, denoted by $C = \ell_1 \vee \cdots \vee \ell_k$.
For the sake of simplicity, we assume that each literal $\ell_i$ is either $x_i$ or $\bar{x_i}$.
We first show that the probability of interest---the left-hand side of \cref{eq:alg:round}---is \emph{monotone}
with respect to $\asg_\sss$ and $\asg_\ttt$.
\begin{claim}
\label{clm:alg:monotone}
Let $\asg_\sss, \asg_\ttt, \asg'_\sss, \asg'_\ttt \colon V \to \zo$ be
satisfying assignments for $C$.
Suppose that for each literal $\ell_i$ of $C$,
$\asg_\sss(\ell_i) = 1$ implies $\asg'_\sss(\ell_i) = 1$ and
$\asg_\ttt(\ell_i) = 1$ implies $\asg'_\ttt(\ell_i) = 1$\textup{;} namely,
$\asg_\sss(\ell_i) \leq \asg'_\sss(\ell_i)$ and $\asg_\ttt(\ell_i) \leq \asg'_\ttt(\ell_i)$.
For a uniformly random assignment
$\asgrnd \colon V \to \zo$ and
four random irredundant reconfiguration sequences
$\sq{\asg_1} \sim \stsqasg(\asg_\sss \reco \asgrnd)$,
$\sq{\asg_2} \sim \stsqasg(\asgrnd \reco \asg_\ttt)$,
$\sq{\asg_1}' \sim \stsqasg(\asg'_\sss \reco \asgrnd)$, and
$\sq{\asg_2}' \sim \stsqasg(\asgrnd \reco \asg'_\ttt)$,
it holds that
\begin{align}
\label{eq:alg:monotone}
    \Pr_{\substack{
        \asgrnd, \sq{\asg_1}, \sq{\asg_2}
    }}\bigl[
        \sq{\asg_1} \circ \sq{\asg_2} \text{ satisfies } C
    \bigr]
    \leq 
    \Pr_{\substack{
        \asgrnd, \sq{\asg_1}', \sq{\asg_2}'
    }}\bigl[
        \sq{\asg_1}' \circ \sq{\asg_2}' \text{ satisfies } C
    \bigr].
\end{align}
\end{claim}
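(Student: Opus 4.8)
The plan is to reduce \cref{eq:alg:monotone} to the case where $(\asg'_\sss,\asg'_\ttt)$ is obtained from $(\asg_\sss,\asg_\ttt)$ by flipping a single variable, and then to prove that case by an explicit coupling of the two random processes (which share the same random target $\asgrnd$). For the reduction: since $\asg_\sss(\ell_i)\le\asg'_\sss(\ell_i)$ for every literal $\ell_i$ of $C$, one can pass from $\asg_\sss$ to $\asg'_\sss$ by flipping, one at a time, each variable $x_i$ on which they disagree; on every such $x_i$ one necessarily has $\asg_\sss(\ell_i)=0$ and $\asg'_\sss(\ell_i)=1$, so every assignment along the way has at least all the true literals of $\asg_\sss$ and hence still satisfies $C$. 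Performing the analogous flips afterwards to turn $\asg_\ttt$ into $\asg'_\ttt$ produces a chain of pairs $(\asg_\sss,\asg_\ttt)\to\cdots\to(\asg'_\sss,\asg'_\ttt)$, all satisfying $C$, in which each step flips one variable of the \emph{starting} assignment or one variable of the \emph{ending} assignment, always in the direction turning the corresponding literal from false to true; by transitivity it suffices to establish \cref{eq:alg:monotone} across one such step. Moreover, reversing a reconfiguration sequence is a bijection between $\stsqasg(\asg_\sss\reco\asgrnd\reco\asg_\ttt)$ and $\stsqasg(\asg_\ttt\reco\asgrnd\reco\asg_\sss)$ that preserves the multiset of assignments and the uniform law of $\asgrnd$, so the ``ending-side'' step reduces to the ``starting-side'' one. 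Hence it remains to treat $\asg'_\ttt=\asg_\ttt$ and $\asg'_\sss\triangle\asg_\sss=\{x_j\}$ with $\asg_\sss(\ell_j)=0$ and $\asg'_\sss(\ell_j)=1$.

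For the single-flip step I would condition on $\asgrnd$ (common to both processes) and couple the two second-phase orderings, over $\asgrnd\triangle\asg_\ttt=\asgrnd\triangle\asg'_\ttt$, to be identical, so that the two sequences differ only in their first phases. If $\asgrnd(\ell_j)=1$, then $x_j\notin\asg'_\sss\triangle\asgrnd$, so in the primed process the literal $\ell_j$ is true at every assignment of the first phase, which therefore satisfies $C$ outright; since the second phases agree, the primed sequence satisfies $C$ whenever the unprimed one does. If $\asgrnd(\ell_j)=0$, write $D\defeq\asg_\sss\triangle\asgrnd$, so $x_j\notin D$ and $\asg'_\sss\triangle\asgrnd=D\cup\{x_j\}$; I would couple the first-phase orderings by drawing a uniformly random ordering $\pi^+$ of $D\cup\{x_j\}$ for the primed process and letting $\pi$ be $\pi^+$ with $x_j$ deleted (a uniformly random ordering of $D$) for the unprimed process. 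If $x_j$ is the $q$-th flip of $\pi^+$, then the assignments of the primed first phase preceding this flip all have $\ell_j$ true, hence satisfy $C$, while the assignments occurring after this flip coincide exactly with the assignments of the unprimed first phase from step $q-1$ onward. Thus ``the primed first phase satisfies $C$'' is equivalent to ``the unprimed first phase satisfies $C$ from step $q-1$ onward'', which is implied by ``the unprimed first phase satisfies $C$''; since the second phases are identical, the primed full sequence satisfies $C$ whenever the unprimed one does. Averaging the conditional inequality over $\asgrnd$ gives \cref{eq:alg:monotone} for the single-flip step, and the reduction above then yields the claim.

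I expect the only delicate point to be the choice of coupling in the $\asgrnd(\ell_j)=0$ case: inserting $x_j$ at a fixed slot of the ordering (first or last) is bijective but far too lossy, yielding the inequality only up to a spurious factor of $|D|+1$. One must insert $x_j$ at a \emph{uniformly random} position so that the marginal laws match, and then verify the elementwise correspondence between the two first-phase sequences --- that before $x_j$ is flipped the literal $\ell_j$ keeps $C$ satisfied in the primed sequence, and that after $x_j$ is flipped the primed sequence simply retraces the unprimed one up to the relabelling caused by deleting $x_j$. Everything else is a routine unwinding of the definitions of $\stsqasg(\cdot)$ and of ``$\sq{\asg}$ satisfies $C$''.
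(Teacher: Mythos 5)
Your proof is correct and takes essentially the same route as the paper's: reduce to the case where the two pairs differ in a single variable, condition on $\asgrnd$, dispose of the case $\asgrnd(\ell_j)=1$ trivially, and in the case $\asgrnd(\ell_j)=0$ use exactly the paper's coupling that deletes $x_j$ from a uniformly random ordering of $\asg'_\sss \triangle \asgrnd$ to obtain a uniformly random ordering of $\asg_\sss \triangle \asgrnd$, so that the primed first phase satisfies $C$ whenever the unprimed one does. Your explicit justification of the chain reduction (intermediate assignments remain satisfying) and the reversal symmetry for the ending-side step merely spell out what the paper handles as ``without loss of generality.''
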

\begin{proof}
It is sufficient to show \cref{eq:alg:monotone} when
$(\asg_\sss, \asg'_\sss)$ and $(\asg_\ttt, \asg'_\ttt)$ differ in a single variable.
Without loss of generality, we can assume that
$\asg_\sss \neq \asg'_\sss$ and $\asg_\ttt = \asg'_\ttt$.
By reordering the $k$ literals,
we can assume that
$\asg_\sss(\ell_1) = 0$,
$\asg'_\sss(\ell_1) = 1$, and
$\asg_\sss(\ell_i) = \asg'_\sss(\ell_i)$ for every $i \neq 1$.
Conditioned on the random assignment $\asgrnd$, we have
\begin{alignat}{3}
    & \Pr_{\sq{\asg_1}, \sq{\asg_2}}\bigr[
        \sq{\asg_1} \circ \sq{\asg_2} \text{ satisfies } C \bigm| \asgrnd
    \bigr]
    && = \Pr_{\sq{\asg_1}}\bigl[
        \sq{\asg_1} \text{ satisfies } C \bigm| \asgrnd
    \bigr]
    && \cdot
    \Pr_{\sq{\asg_2}}\bigl[
        \sq{\asg_2} \text{ satisfies } C \bigm| \asgrnd
    \bigr], \\
    & \Pr_{\sq{\asg_1}', \sq{\asg_2}'}\bigr[
        \sq{\asg_1}' \circ \sq{\asg_2}' \text{ satisfies } C \bigm| \asgrnd
    \bigr]
    && = \Pr_{\sq{\asg_1}'}\bigl[
        \sq{\asg_1}' \text{ satisfies } C \bigm| \asgrnd
    \bigr]
    && \cdot
    \Pr_{\sq{\asg_2}'}\bigl[
        \sq{\asg_2}' \text{ satisfies } C \bigm| \asgrnd
    \bigr].
\end{alignat}
Since $\sq{\asg_2}$ and $\sq{\asg_2}'$ follow the same distribution as $\stsqasg(\asg_\ttt \reco \asgrnd) = \stsqasg(\asg'_\ttt \reco \asgrnd)$, it holds that
\begin{align}
    \Pr_{\sq{\asg_2}}\bigl[\sq{\asg_2} \text{ satisfies } C \bigm| \asgrnd\bigr]
    = \Pr_{\sq{\asg_2}'}\bigl[\sq{\asg_2}' \text{ satisfies } C \bigm| \asgrnd\bigr].
\end{align}
Consider the following case analysis on $\asgrnd(\ell_1)$:

\begin{description}
    \item[(Case 1)] $\asgrnd(\ell_1) = 0$. \\
        Fix any irredundant reconfiguration sequence $\sq{\asg_1}'$ from $\asg'_\sss$ to $\asgrnd$.
        There exists a unique ordering $\sigma'$ over $\asg'_\sss \triangle \asgrnd$ such that
        starting from $\asg'_\sss$,
        we obtain $\sq{\asg_1}'$ by flipping the assignments to $\sigma'(1), \sigma'(2), \ldots,$ in this order.
        Letting $\sigma$ be an ordering over $\asg_\sss \triangle \asgrnd$
        obtained by removing $x_1$ from $\sigma'$,
        we define $\sq{\bsg_1}$
        as an (irredundant) reconfiguration sequence from $\asg_\sss$ to $\asgrnd$
        obtained by flipping the assignments to $\sigma(1), \sigma(2), \ldots,$ in this order.
        By construction,
        if $\sq{\bsg_1}$ satisfies $C$, then $\sq{\asg_1}'$ also satisfies $C$.
        Moreover, if $\sq{\asg_1}'$ is uniformly distributed over $\stsqasg(\asg'_\sss \reco \asgrnd)$,
        then $\sq{\bsg_1}$ is uniformly distributed over $\stsqasg(\asg_\sss \reco \asgrnd)$.
        Therefore, we derive
        \begin{align}
        \begin{aligned}
            \Pr_{\sq{\asg_1}}\bigl[
                \sq{\asg_1} \text{ satisfies } C \bigm| \asgrnd
            \bigr]
            = \Pr_{\sq{\bsg_1}}\bigl[
                \sq{\bsg_1} \text{ satisfies } C \bigm| \asgrnd
            \bigr]
            \leq \Pr_{\sq{\asg_1}'}\bigl[
                \sq{\asg_1}' \text{ satisfies } C \bigm| \asgrnd
            \bigr].
        \end{aligned}
        \end{align}
    \item[(Case 2)] $\asgrnd(\ell_1) = 1$. \\
        Since any irredundant reconfiguration sequence $\sq{\asg_1}'$ from $\asg'_\sss$ to $\asgrnd$
        does not flip $x_1$'s assignment, we have
        \begin{align}
            \Pr_{\sq{\asg_1}'}\bigl[
                \sq{\asg_1}' \text{ satisfies } C \bigm| \asgrnd
            \bigr] 
            = 1 \geq
            \Pr_{\sq{\asg_1}}\bigl[
                \sq{\asg_1} \text{ satisfies } C  \bigm| \asgrnd
            \bigr].
        \end{align}
\end{description}
In either case, it holds that
\begin{align}
    \Pr_{\sq{\asg_1}}\bigl[
        \sq{\asg_1} \text{ satisfies } C  \bigm| \asgrnd
    \bigr]
    \leq
    \Pr_{\sq{\asg_1}'}\bigl[
        \sq{\asg_1}' \text{ satisfies } C \bigm| \asgrnd
    \bigr].
\end{align}
Consequently, we obtain
\begin{align}
    \Pr_{\sq{\asg_1}, \sq{\asg_2}}\bigl[
        \sq{\asg_1} \circ \sq{\asg_2} \text{ satisfies } C \bigm| \asgrnd
    \bigr]
    \leq \Pr_{\sq{\asg_1}', \sq{\asg_2}'}\bigl[
        \sq{\asg_1}' \circ \sq{\asg_2}' \text{ satisfies } C \bigm| \asgrnd
    \bigr],
\end{align}
which implies \cref{eq:alg:monotone}, as desired.
\end{proof}

By \cref{clm:alg:monotone},
it is sufficient to prove \cref{eq:alg:round} only when
both $\asg_\sss$ and $\asg_\ttt$ make a single literal of $C$ true.
Thus, we shall bound the left-hand side of \cref{eq:alg:round} in each case of 
$\asg_\sss \neq \asg_\ttt$ and $\asg_\sss = \asg_\ttt$.
Without loss of generality,
we can safely assume that the clause $C$ is \emph{positive}; i.e., $C = x_1 \vee \cdots \vee x_k$.
Hereafter,
let $\asgrnd \colon V \to \zo$ be a uniformly random assignment, and
let $\sq{\asg_1}$ and $\sq{\asg_2}$ be two random irredundant reconfiguration sequences
uniformly chosen from
$\stsqasg(\asg_\sss \reco \asgrnd)$ and $\stsqasg(\asgrnd \reco \asg_\ttt)$,
respectively.
We will show the following two claims.

\begin{claim}
\label{clm:alg:neq}
Suppose that
$\asg_\sss$ and $\asg_\ttt$ make a single literal of $C$ true and
$\asg_\sss \neq \asg_\ttt$.
Then, it holds that
\begin{align}
\label{eq:alg:neq}
\begin{aligned}
    \Pr_{\asgrnd, \sq{\asg_1}, \sq{\asg_2}}\bigl[
        \sq{\asg_1} \circ \sq{\asg_2} \text{ satisfies } C
    \bigr]
    & = \sum_{0 \leq j \leq K} \frac{1}{4} \cdot \frac{\binom{K}{j}}{2^K} \cdot \left[
        \left(\frac{j}{j+1}\right)^2 + \frac{j+1}{j+2} + \frac{j+1}{j+2} + 1
    \right] \\
    & \geq 1 - \frac{1}{k-1} - \frac{1}{k},
\end{aligned}
\end{align}
where $K \defeq k-2$.
\end{claim}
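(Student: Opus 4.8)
The plan is to reduce the event ``$\sq{\asg_1}\circ\sq{\asg_2}$ satisfies $C$'' to the event that the concatenated reconfiguration sequence avoids the all-zeros assignment $\vec{0}$; this is legitimate because, after the reductions already made, $C = x_1\vee\cdots\vee x_k$ is positive and is therefore violated exactly at $\vec{0}$. By \cref{clm:alg:monotone} I may assume that $\asg_\sss$ and $\asg_\ttt$ each set exactly one variable to $1$, and since $\asg_\sss\neq\asg_\ttt$ these variables are distinct, so after reindexing $\asg_\sss$ is the indicator of $x_1$ and $\asg_\ttt$ the indicator of $x_2$. Conditioning on the random midpoint $\asgrnd$, the random orderings used to form $\sq{\asg_1}$ and $\sq{\asg_2}$ are independent, and ``$\sq{\asg_1}$ avoids $\vec{0}$'' depends only on the first ordering while ``$\sq{\asg_2}$ avoids $\vec{0}$'' depends only on the second; hence $\Pr\bigl[\sq{\asg_1}\circ\sq{\asg_2}\text{ avoids }\vec{0}\mid\asgrnd\bigr]$ factors as a product of two ``survival'' probabilities $p_1(\asgrnd)\cdot p_2(\asgrnd)$.

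The next step is to evaluate $p_1$ and $p_2$. Along $\sq{\asg_1}$ the variable $x_1$ is flipped ($1\to 0$) at most once and every other flipped variable is flipped $0\to 1$ exactly once, so the number of satisfied literals of $C$ changes by $\pm1$ per step and drops below $1$ only when $x_1$ is flipped while nothing has yet been turned on; consequently $\sq{\asg_1}$ hits $\vec{0}$ iff $\asgrnd(x_1)=0$ and $x_1$ is the \emph{first} variable flipped, which happens with probability $1/|\asg_\sss\triangle\asgrnd|$. Symmetrically, $\sq{\asg_2}$ hits $\vec{0}$ iff $\asgrnd(x_2)=0$ and $x_2$ is the \emph{last} variable flipped in $\sq{\asg_2}$. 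Writing $j$ for $|\{\, 3\le i\le k : \asgrnd(x_i)=1\,\}|$, both $|\asg_\sss\triangle\asgrnd|$ and $|\asgrnd\triangle\asg_\ttt|$ — and hence $p_1,p_2$ — depend on $\asgrnd$ only through $(\asgrnd(x_1),\asgrnd(x_2),j)$. Running through the four values of $(\asgrnd(x_1),\asgrnd(x_2))$ gives $p_1 p_2$ equal to $(j/(j+1))^2$, $(j+1)/(j+2)$, $(j+1)/(j+2)$, and $1$ respectively; since each of the four combinations has probability $\tfrac14$ and $j$ follows the binomial distribution $\mathrm{Bin}(K,\tfrac12)$ with $K=k-2$, averaging over $\asgrnd$ yields exactly the displayed identity in \cref{eq:alg:neq}.

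It remains to prove the inequality. Expanding the bracket as $4 - \tfrac{2}{j+1} - \tfrac{2}{j+2} + \tfrac{1}{(j+1)^2}$, the sum equals $1 - \tfrac12\E[\tfrac1{J+1}] - \tfrac12\E[\tfrac1{J+2}] + \tfrac14\E[\tfrac1{(J+1)^2}]$ for $J\sim\mathrm{Bin}(K,\tfrac12)$, which is at least $1 - \tfrac12\E[\tfrac1{J+1}] - \tfrac12\E[\tfrac1{J+2}]$. Using the identities $\binom{K}{j}\tfrac1{j+1} = \tfrac1{K+1}\binom{K+1}{j+1}$ and $\binom{K}{j}\tfrac1{(j+1)(j+2)} = \tfrac1{(K+1)(K+2)}\binom{K+2}{j+2}$, one gets $\E[\tfrac1{J+1}] = \tfrac{2^{K+1}-1}{(K+1)2^K}$ and $\E[\tfrac1{(J+1)(J+2)}] = \tfrac{2^{K+2}-K-3}{(K+1)(K+2)2^K}$, and since $\tfrac1{J+2} = \tfrac1{J+1} - \tfrac1{(J+1)(J+2)}$ the quantity above reduces to $1 - \tfrac{4-2^{-K}}{2(K+2)}\geq 1 - \tfrac{2}{K+2}$. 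Finally $\tfrac{2}{K+2}\le\tfrac{2K+3}{(K+1)(K+2)} = \tfrac1{k-1}+\tfrac1k$ because $2(K+1)\le 2K+3$, which gives the claimed bound.

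I expect the main subtlety to be the exact determination of $p_1$ and $p_2$: one must verify that the number of satisfied literals changes by $\pm1$ per flip and drops below $1$ only if a single distinguished variable ($x_1$ for the first leg, $x_2$ for the second) is flipped first, resp.\ last. Once that structural fact is established, enumerating the four cases and carrying out the binomial-sum estimate are routine, modulo keeping track of the lower-order $2^{-K}$ terms, which only work in our favor.
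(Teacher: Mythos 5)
Your proposal is correct and follows essentially the same route as the paper: the same case analysis on $\asgrnd$ restricted to the two distinguished variables, the same first-flip/last-flip observation giving the conditional factors $\left(\frac{j}{j+1}\right)^2$, $\frac{j+1}{j+2}$, $\frac{j+1}{j+2}$, and $1$, and the same binomial identities for the final estimate. The only cosmetic difference is in the last step, where you evaluate $1-\frac{1}{2}\E\bigl[\frac{1}{J+1}\bigr]-\frac{1}{2}\E\bigl[\frac{1}{J+2}\bigr]$ exactly and obtain the slightly sharper intermediate bound $1-\frac{2}{k}$ before relaxing, whereas the paper bounds the two binomial sums separately to land on $1-\frac{1}{k-1}-\frac{1}{k}$ directly.
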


\begin{claim}
\label{clm:alg:eq}
Suppose that
$\asg_\sss$ and $\asg_\ttt$ make a single literal of $C$ true and
$\asg_\sss = \asg_\ttt$.
Then, it holds that
\begin{align}
\label{eq:alg:eq}
\begin{aligned}
    \Pr_{\asgrnd, \sq{\asg_1}, \sq{\asg_2}}\bigl[
        \sq{\asg_1} \circ \sq{\asg_2} \text{ satisfies } C
    \bigr]
    & = \sum_{0 \leq j \leq K} \frac{1}{2} \cdot \frac{\binom{K}{j}}{2^K} \cdot
    \left[\left(\frac{j}{j+1}\right)^2 + 1\right] \\
    & \geq 1 - \frac{2}{k},
\end{aligned}
\end{align}
where $K \defeq k-1$.
\end{claim}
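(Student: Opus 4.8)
The plan is to follow the same template as the (pending) proof of \cref{clm:alg:neq}, but with a single ``special'' variable rather than two. By the reduction preceding the claim (using \cref{clm:alg:monotone} and the normalization to a positive clause), we may assume $C = x_1 \vee \cdots \vee x_k$ and that $\asg_\sss = \asg_\ttt$ assigns $1$ to exactly one variable; reordering literals, say $\asg_\sss(x_1) = \asg_\ttt(x_1) = 1$ and $\asg_\sss(x_i) = \asg_\ttt(x_i) = 0$ for $i \geq 2$, so that $K = k-1$ counts the variables $x_2, \ldots, x_k$. I would condition on the random assignment $\asgrnd$ and split into two cases according to the value $\asgrnd(x_1)$.

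In the first case, $\asgrnd(x_1) = 1$ (which has probability $\frac12$): then $x_1$ agrees across $\asg_\sss$, $\asgrnd$, and $\asg_\ttt$, so it is never flipped along $\sq{\asg_1}\circ\sq{\asg_2}$, every assignment has $x_1 = 1$, and $C$ is satisfied throughout; this contributes the ``$+1$'' term. In the second case, $\asgrnd(x_1) = 0$ (probability $\frac12$): let $j$ be the number of variables among $x_2, \ldots, x_k$ that $\asgrnd$ sets to $1$, so $j$ is distributed as $\mathrm{Bin}(K,\frac12)$, contributing the factor $\binom{K}{j}/2^K$. Conditioned on $\asgrnd$, the sequences $\sq{\asg_1}$ and $\sq{\asg_2}$ are independent; and since $\asg_\sss = \asg_\ttt$, reversing a uniformly random element of $\stsqasg(\asg_\sss \reco \asgrnd)$ yields a uniformly random element of $\stsqasg(\asgrnd \reco \asg_\ttt)$, while the property ``satisfies $C$'' is reversal-invariant, so $\sq{\asg_2}$ and the reverse of $\sq{\asg_1}$ are identically distributed. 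Hence $\Pr[\sq{\asg_1}\circ\sq{\asg_2}\text{ satisfies }C \mid \asgrnd] = \Pr[\sq{\asg_1}\text{ satisfies }C \mid \asgrnd]^2$. To evaluate the latter factor: along $\sq{\asg_1}$ the flipped variables are exactly $x_1$ (going $1\to 0$) and the $j$ variables that $\asgrnd$ sets to $1$ (going $0\to 1$), applied in a uniformly random order, and the all-zero assignment occurs if and only if $x_1$ is the first of these $j+1$ flips; thus $C$ is satisfied throughout $\sq{\asg_1}$ with probability $\frac{j}{j+1}$. Squaring and summing over the two cases and over $j$ gives exactly the claimed identity.

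For the inequality I would bound $\left(\frac{j}{j+1}\right)^2 \geq 1 - \frac{2}{j+1}$ and use the standard identity $\sum_{j=0}^{K}\binom{K}{j}\frac{1}{j+1} = \frac{1}{K+1}\sum_{i=1}^{K+1}\binom{K+1}{i} = \frac{2^{K+1}-1}{K+1}$, which gives $\sum_{j}\frac{\binom{K}{j}}{2^K}\cdot\frac{1}{j+1} = \frac{2 - 2^{-K}}{K+1} < \frac{2}{K+1} = \frac{2}{k}$. Therefore $\sum_{j}\frac{\binom{K}{j}}{2^K}\left(\frac{j}{j+1}\right)^2 \geq 1 - \frac{4}{k}$, and the whole expression is at least $\frac12 + \frac12\left(1 - \frac4k\right) = 1 - \frac2k$, as required.

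The routine parts are the binomial identity and the ``$x_1$ is flipped first'' counting argument. The only place that needs care — the ``main obstacle'', such as it is — is the squaring step: one must check that the reversal map sends the uniform distribution on $\stsqasg(\asg_\sss \reco \asgrnd)$ to the uniform distribution on $\stsqasg(\asgrnd \reco \asg_\ttt)$ (using $\asg_\sss = \asg_\ttt$) and that ``satisfies $C$'' is invariant under reversal. Both are immediate from the definitions, but together they are what forces the two conditional probabilities to be literally equal, rather than merely both equal to $\frac{j}{j+1}$ by two separate computations — which is what makes the exponent $2$ appear cleanly in the stated formula.
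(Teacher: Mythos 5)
Your proof is correct and takes essentially the same route as the paper's: the same case analysis on the value of the special variable under $\asgrnd$, the same ``special variable is not flipped first'' argument giving the factor $\frac{j}{j+1}$ for each half (the paper simply computes both factors directly and multiplies by conditional independence, so your reversal-symmetry observation is a harmless extra), and the same binomial identity of \cref{fct:alg:binom} to obtain the bound $1-\frac{2}{k}$.
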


\cref{lem:alg:round} follows from \cref{clm:alg:monotone,clm:alg:neq,clm:alg:eq}.
\begin{remark}
By numerically evaluating \cref{eq:alg:neq,eq:alg:eq},
we obtain approximation factors better than $1-\frac{1}{k-1}-\frac{1}{k}$ for small $k$,
as shown by \cref{tab:approx} in \cref{subsec:intro:results}.
\end{remark}
In the proof of \cref{clm:alg:neq,clm:alg:eq},
we use the following equality for the sum of binomial coefficients,
whose proof is deferred to \cref{app}.

\begin{fact}[$*$]
\label{fct:alg:binom}
    For any integers $k$ and $n$ with $0 \leq k \leq n$, it holds that
    \begin{align}
        \sum_{0 \leq k \leq n} \binom{n}{k} \frac{1}{k+1} & = \frac{2^{n+1}-1}{n+1}, \\
        \sum_{0 \leq k \leq n} \binom{n}{k} \frac{1}{k+2} & = \frac{2^{n+1} \cdot n + 1}{(n+1)(n+2)}.
    \end{align}
\end{fact}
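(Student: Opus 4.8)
The plan is to prove both identities via the elementary fact that $\frac{1}{a+1} = \int_0^1 t^a\,dt$ for every nonnegative integer $a$, which lets us pull the binomial sum inside an integral. For the first identity I would write $\frac{1}{k+1} = \int_0^1 t^k\,dt$, interchange the (finite) sum with the integral, invoke the binomial theorem to obtain $\int_0^1 (1+t)^n\,dt$, and evaluate this integral to $\frac{2^{n+1}-1}{n+1}$. For the second identity I would instead use $\frac{1}{k+2} = \int_0^1 t^{k+1}\,dt$; after the same manipulation the claim reduces to evaluating the single integral $\int_0^1 t(1+t)^n\,dt$, and the substitution $u = 1+t$ turns this into $\int_1^2 (u^{n+1} - u^n)\,du$. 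Taking antiderivatives and collecting the result over the common denominator $(n+1)(n+2)$ then yields $\frac{2^{n+1} n + 1}{(n+1)(n+2)}$, as required.

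An equivalent, calculus-free route uses the absorption identity $\frac{1}{k+1}\binom{n}{k} = \frac{1}{n+1}\binom{n+1}{k+1}$: summing the right-hand side over $0 \leq k \leq n$ gives $\frac{1}{n+1}\sum_{j=1}^{n+1}\binom{n+1}{j} = \frac{2^{n+1}-1}{n+1}$, which is the first identity. Applying absorption a second time yields $\frac{1}{(k+1)(k+2)}\binom{n}{k} = \frac{1}{(n+1)(n+2)}\binom{n+2}{k+2}$, hence $\sum_{0 \leq k \leq n} \frac{\binom{n}{k}}{(k+1)(k+2)} = \frac{2^{n+2} - n - 3}{(n+1)(n+2)}$; combining this with the partial-fraction identity $\frac{1}{k+2} = \frac{1}{k+1} - \frac{1}{(k+1)(k+2)}$ and the first identity recovers the second identity after one line of algebra.

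I do not anticipate any genuine obstacle: every step is standard. The only place that demands a little care is the final simplification for the second identity, where one must check that the $2^{n+1}$-scale terms combine to exactly $2^{n+1} n$ and the constants to exactly $+1$ over the denominator $(n+1)(n+2)$. I would present the integral computation as the main proof for brevity and mention the combinatorial argument as an alternative.
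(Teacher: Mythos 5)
Your proposal is correct, and both of your routes work; they just relate differently to the paper's argument. The paper proves the fact combinatorially: it applies the absorption identity $\binom{n+1}{k+1}=\frac{n+1}{k+1}\binom{n}{k}$ once, reindexes the sum, rewrites $\frac{k+1}{k+2}$ as $1-\frac{1}{k+1}$, and then invokes the first identity at level $n+1$ to finish the second one. Your alternative route is essentially this argument in a slightly different packaging (absorption applied twice to get $\frac{1}{(k+1)(k+2)}\binom{n}{k}=\frac{1}{(n+1)(n+2)}\binom{n+2}{k+2}$, combined with the partial fraction $\frac{1}{k+2}=\frac{1}{k+1}-\frac{1}{(k+1)(k+2)}$ and the first identity at level $n$), so it buys nothing beyond the paper's proof except a mild reorganization. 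Your main route via $\frac{1}{a+1}=\int_0^1 t^a\,dt$ is genuinely different: it replaces the binomial-identity manipulations by evaluating $\int_0^1(1+t)^n\,dt$ and $\int_0^1 t(1+t)^n\,dt$, which is arguably shorter and generalizes immediately to $\frac{1}{k+c}$ for any positive integer $c$, at the cost of importing a (harmless) calculus step into an otherwise purely combinatorial paper; I verified the final algebra, and the numerator indeed collapses to $2^{n+1}n+1$ over $(n+1)(n+2)$ in both of your computations.
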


\begin{proof}[Proof of \cref{clm:alg:neq}]
By reordering the $k$ variables, we can assume that
\begin{align}
    \asg_\sss(x_i) & =
    \begin{cases}
        1 & \text{if } i = k, \\
        0 & \text{otherwise},
    \end{cases} \\
    \asg_\ttt(x_i) & =
    \begin{cases}
        1 & \text{if } i = k-1, \\
        0 & \text{otherwise}.
    \end{cases}
\end{align}
Define $K \defeq k-2$ and
$V_{\leq K} \defeq \{x_1, \ldots, x_K\}$.
Note that $\asg_\sss|_{V_{\leq K}} = \asg_\ttt|_{V_{\leq K}} = 0^K$.
Consider the following case analysis
on $\asgrnd(x_{k-1})$ and $\asgrnd(x_k)$:
\begin{description}
    \item[(Case 1)] $\asgrnd(x_{k-1}) = 0$ and $\asgrnd(x_k) = 0$. \\
        Condition on the number of $1$'s in $\asgrnd|_{V_{\leq K}}$, denoted by $j$,
        which occurs with probability
        $\frac{\binom{K}{j}}{2^K}$.
        Observe that $\sq{\asg_1}$
            satisfies $C$ if and only if it does not flip $x_k$'s assignment at first,
            which happens with probability $1 - \frac{1}{j+1} = \frac{j}{j+1}$.
        Similarly,
            $\sq{\asg_2}$
            satisfies $C$ with probability $\frac{j}{j+1}$.
        Therefore, $\sq{\asg_1} \circ \sq{\asg_2}$
            satisfies $C$
            with probability $\left(\frac{j}{j+1}\right)^2$.
    
    \item[(Case 2)] $\asgrnd(x_{k-1}) = 1$ and $\asgrnd(x_k) = 0$. \\
        Condition on the number of $1$'s in $\asgrnd|_{V_{\leq K}}$, denoted by $j$.
        Then, $\sq{\asg_1}$
            satisfies $C$ if and only if it does not flip $x_k$'s assignment at first,
            which occurs with probability $1 - \frac{1}{j+2} = \frac{j+1}{j+2}$.
        Since $\asg_\ttt(x_{k-1}) = \asgrnd(x_{k-1}) = 1$,
            $\sq{\asg_2}$ satisfies $C$ with probability $1$.
        Therefore, $\sq{\asg_1} \circ \sq{\asg_2}$ satisfies $C$
            with probability $\frac{j+1}{j+2}$.
    
    \item[(Case 3)] $\asgrnd(x_{k-1}) = 0$ and $\asgrnd(x_k) = 1$. \\
        Similarly to (Case~2),
        $\sq{\asg_1} \circ \sq{\asg_2}$ satisfies $C$
        with probability $\frac{j+1}{j+2}$, where
        $j$ is the number of $1$'s in $\asgrnd|_{V_{\leq K}}$.
    
    \item[(Case 4)] $\asgrnd(x_{k-1}) = 1$ and $\asgrnd(x_k) = 1$. \\
        Since $\asg_\sss(x_k) = \asgrnd(x_k) = 1$ and $\asgrnd(x_{k-1}) = \asg_\ttt(x_{k-1}) = 1$,
        we find any irredundant reconfiguration sequence of
            $\stsqasg(\asg_\sss \reco \asgrnd)$ and
            $\stsqasg(\asgrnd \reco \asg_\ttt)$
        to satisfy $C$.
        Therefore, $\sq{\asg_1} \circ \sq{\asg_2}$ satisfies $C$ with probability $1$.
\end{description}
Since each of the above four cases occurs with probability $\frac{1}{4}$,
we derive
\begin{align}
\begin{aligned}
    \Pr_{\asgrnd, \sq{\asg_1}, \sq{\asg_2}}\bigl[
        \sq{\asg_1} \circ \sq{\asg_2} \text{ satisfies } C
    \bigr]
    & = \sum_{0 \leq j \leq K} \frac{1}{4} \cdot \frac{\binom{K}{j}}{2^K} \cdot \left[
        \left(\frac{j}{j+1}\right)^2 + \frac{j+1}{j+2} + \frac{j+1}{j+2} + 1
    \right] \\
    & = 2^{-K} \cdot \sum_{0 \leq j \leq K} \binom{K}{j} \cdot \frac{1}{4} \cdot \left[
        4 - \frac{2}{j+1} - \frac{2}{j+2} + \frac{1}{(j+1)^2}
    \right] \\
    & \geq 2^{-K} \cdot \sum_{0 \leq j \leq K} \binom{K}{j} \cdot \left[
        1 - \frac{1}{2}\cdot\frac{1}{j+1} - \frac{1}{2}\cdot\frac{1}{j+2}
    \right] \\
    & \underbrace{=}_{\text{\cref{fct:alg:binom}}} 2^{-K} \cdot \left[
        2^K - \frac{1}{2}\cdot\frac{2^{K+1} - 1}{K+1} - \frac{1}{2}\cdot\frac{2^{K+1} \cdot K + 1}{(K+1)\cdot(K+2)}
    \right] \\
    & \geq 2^{-K} \cdot \left[
        2^K - \frac{2^K}{K+1} - \frac{2^K}{K+2}
    \right] \\
    & \underbrace{=}_{K = k-2} 1 - \frac{1}{k-1} - \frac{1}{k},
\end{aligned}
\end{align}
which completes the proof.
\end{proof}

\begin{proof}[Proof of \cref{clm:alg:eq}]
By reordering the $k$ variables, we can assume that
\begin{align}
    \asg_\sss(x_i) = \asg_\ttt(x_i) =
    \begin{cases}
        1 & \text{if } i = k, \\
        0 & \text{otherwise}.
    \end{cases}
\end{align}
Define $K \defeq k-1$ and $V_{\leq K} \defeq \{x_1, \ldots, x_K\}$.
Note that $\asg_\sss|_{V_{\leq K}} = \asg_\ttt|_{V_{\leq K}} = 0^K$.
Consider the following case analysis on $\asgrnd(x_k)$:
\begin{description}
    \item[(Case 1)] $\asgrnd(x_k) = 0$. \\
        Condition on the number of $1$'s in $\asgrnd|_{V_{\leq K}}$,
        denoted by $j$, which occurs with probability $\frac{\binom{K}{j}}{2^K}$.
        Observe that $\sq{\asg_1}$
        satisfies $C$ if and only if it does not flip $x_k$'s assignment at first,
        which happens with probability $1 - \frac{1}{j+1} = \frac{j}{j+1}$.
        Similarly, $\sq{\asg_2}$ satisfies $C$ with probability $\frac{j}{j+1}$.
        Therefore, $\sq{\asg_1} \circ \sq{\asg_2}$ satisfies $C$ with probability
        $\left(\frac{j}{j+1}\right)^2$.
    
    \item[(Case 2)] $\asgrnd(x_k) = 1$. \\
        Since $\asg_\sss(x_k) = \asgrnd(x_k) = 1$ and
        $\asgrnd(x_k) = \asg_\ttt(x_k) = 1$,
        we find any irredundant reconfiguration of $\stsqasg(\asg_\sss \reco \asgrnd)$ and
        $\stsqasg(\asgrnd \reco \asg_\ttt)$ to satisfy $C$.
        Therefore, $\sq{\asg_1} \circ \sq{\asg_2}$ satisfies $C$ with probability $1$.
\end{description}
Since each of the above two cases occurs with probability $\frac{1}{2}$, we derive
\begin{align}
\begin{aligned}
    \Pr_{\asgrnd, \sq{\asg_1}, \sq{\asg_2}}\bigl[
        \sq{\asg_1} \circ \sq{\asg_2} \text{ satisfies } C
    \bigr]
    & = \sum_{0 \leq j \leq K} \frac{1}{2} \cdot \frac{\binom{K}{j}}{2^K} \cdot
    \left[\left(\frac{j}{j+1}\right)^2 + 1\right] \\
    & = 2^{-K} \cdot \sum_{0 \leq j \leq K} \binom{K}{j} \cdot \frac{1}{2} \cdot
    \left[ 2 - 2 \cdot \frac{1}{j+1} + \frac{1}{(j+1)^2} \right] \\
    & \geq 2^{-K} \cdot \sum_{0 \leq j \leq K} \binom{K}{j} \cdot
    \left[1 - \frac{1}{j+1}\right] \\
    & \underbrace{=}_{\text{\cref{fct:alg:binom}}} 2^{-K} \cdot \left[2^K - \frac{2^{K+1}-1}{K+1}\right] \\
    & \geq 2^{-K} \cdot \left[2^K - \frac{2^{K+1}}{K+1}\right] \\
    & \underbrace{=}_{K = k-1} 1 - \frac{2}{k},
\end{aligned}    
\end{align}
which completes the proof.
\end{proof}

\section{$\PSPACE$-hardness of \texorpdfstring{$\left(1-\frac{3-\epsilon}{28k}\right)$-factor}{(1-(3-ε)/28k)-factor} Approximation
of \MMkSATReconf for Large $k$}
\label{sec:hard9.333}

In this section, we prove \cref{thm:intro:hard9.333}; i.e.,
\MMkSATReconf is $\PSPACE$-hard to approximate within a factor of $1-\frac{3-\epsilon}{28k}$ for every sufficiently large $k$.

\begin{theorem}
\label{thm:hard9.333}
For any real $\epsilon > 0$,
there exists an integer $k_0(\epsilon) \in \bbN$ such that
for any integer $k \geq k_0(\epsilon)$,
\prb{Gap$_{1,1-\frac{3-\epsilon}{28k}}$ \kSATReconf} is $\PSPACE$-hard.
In particular,
\MMkSATReconf is $\PSPACE$-hard to approximate within a factor of 
$1-\frac{3-\epsilon}{28k}$ for every integer $k \geq k_0(\epsilon)$.
\end{theorem}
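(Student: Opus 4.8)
The plan is to prove the theorem by a gap-preserving reduction from \prb{Gap$_{1,1-\epsilon_0}$ \treSATReconf}, which is $\PSPACE$-hard for some absolute constant $\epsilon_0>0$ by \cref{thm:PCRP} together with the gap-preserving reductions of \citet{ohsaka2023gap}. By \cite[Theorem~3.1]{ohsaka2023gap} I would first arrange that the source E$3$-CNF formula $\phi$ has bounded degree, which is used in the soundness analysis and also yields the bounded-read property of the output formula; after this preprocessing, $\opt_\phi(\asg_\sss\reco\asg_\ttt)<1-\epsilon_0$ implies that every reconfiguration sequence from $\asg_\sss$ to $\asg_\ttt$ contains an assignment violating more than an $\epsilon_0$-fraction of the clauses of $\phi$. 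The reduction then proceeds through three verifiers---a combined verifier $\W$, the Horn verifier $\Vhorn$, and an \scOR-predicate verifier $\X$---and $\X$ is finally written as an E$k$-CNF formula $\psi$ with $\val_\psi(\Pf)=1-\Pr[\X\text{ rejects }\Pf]$. The designated endpoints are $\Pf_\sss\defeq\asg_\sss\circ 1^n$ and $\Pf_\ttt\defeq\asg_\ttt\circ 1^n$, which both satisfy $\psi$.

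Second, I would build the two intermediate verifiers and establish a \emph{non-monotone} soundness guarantee. Proofs have the form $(\asg,\qf)\in\zo^n\times\zo^n$, the first block holding an assignment for $\phi$ and the second a dummy string. The dummy verifier $\A$ examines a constant-size local view of $\qf$ and accepts $1^n$ with a carefully tuned probability. The combined verifier $\W$, with probability $\theta=\Theta(1/k)$, checks a uniformly random clause of $\phi$ against $\asg$, and otherwise runs $\A$ on $\qf$, so $\Pr[\W\text{ rejects }(\asg,\qf)]=\theta\cdot(\text{fraction of clauses of }\phi\text{ violated by }\asg)+(1-\theta)\cdot\Pr[\A\text{ rejects }\qf]$. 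The Horn verifier $\Vhorn$ runs $\W$ once and $\A$ another $\rep-1=\Theta(k)$ times, chosen so that $\Vhorn$ has query complexity exactly $k$, and accepts iff $\W$ accepts or at least one copy of $\A$ rejects; hence $\Pr[\Vhorn\text{ rejects}]=\Pr[\W\text{ rejects}]\cdot\Pr[\A\text{ accepts }\qf]^{\rep-1}$, whose dependence on the amount of corruption of $\qf$ is the non-monotone curve $h(p)=(1-p)\,p^{\rep-1}$, which peaks at height $\Theta(1/k)$. Perfect completeness holds along the lift of a satisfying reconfiguration sequence (keep $\qf\equiv 1^n$). For soundness, fix any reconfiguration sequence for $\Vhorn$; since one bit flip changes $\Pr[\A\text{ accepts }\qf^{(t)}]$ by $\bigO(1/n)$, this quantity moves continuously from $1$ back to $1$, so either it dips to the peak of $h$ at some step---forcing $\Vhorn$ to reject there with probability $\Omega(1/k)$ regardless of $\asg^{(t)}$---or it stays near $1$ throughout, whence $\Pr[\A\text{ accepts }\qf^{(t)}]^{\rep-1}=\Omega(1)$ and, projecting the sequence to the $\asg$-block to obtain a reconfiguration sequence from $\asg_\sss$ to $\asg_\ttt$, at some step more than an $\epsilon_0$-fraction of clauses of $\phi$ is violated, so $\Vhorn$ rejects there with probability $\Omega(\epsilon_0/k)$. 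Either way $\opt_{\Vhorn}(\Pf_\sss\reco\Pf_\ttt)<1-\Omega(1/k)$, and the crucial point is that the first, non-monotone, branch produces a $1/k$-sized gap that does not shrink with $\epsilon_0$---this is exactly what breaks the naive $1-\Theta(1/2^k)$ bound.

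Third, I would emulate $\Vhorn$ and encode it. Because $\A$ and the clause-check each reject on only $\bigO(1)$ local views, on every fixing of its randomness $\Vhorn$ rejects on $\bigO(1)$ local views over its $k$ queried positions; the \scOR-predicate verifier $\X$ samples one such pattern uniformly and forbids it, giving $\Pr[\X\text{ rejects }\Pf]=\Theta(\Pr[\Vhorn\text{ rejects }\Pf])$ once the dummy indices are sampled without replacement so that no two queried positions collide. Turning each pair consisting of a random string and a forbidden pattern into a width-$k$ clause produces $\psi$, with completeness ($\opt_\psi(\Pf_\sss\reco\Pf_\ttt)=1$ when $\opt_\phi(\asg_\sss\reco\asg_\ttt)=1$) and soundness ($\opt_\psi(\Pf_\sss\reco\Pf_\ttt)<1-\Omega(1/k)$ when $\opt_\phi(\asg_\sss\reco\asg_\ttt)<1-\epsilon_0$), where the randomness is balanced by the standard clause-duplication trick using the bounded degree of $\phi$. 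This already yields $\PSPACE$-hardness of \prb{Gap$_{1,1-\Omega(1/k)}$ \kSATReconf}; tracking the precise $\bigO(1)$ factors---the number of forbidden patterns, $\A$'s acceptance probability, the weights $\theta$ and $\rep$, and the padding to width exactly $k$---and choosing the parameters optimally sharpens the constant to $\frac{3-\epsilon}{28}$, valid once $k\geq k_0(\epsilon)$, where $k_0(\epsilon)$ is taken large enough that the $\epsilon_0$-dependent branch no longer dominates and the remaining finite-$k$ slack is below $\epsilon$.

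The main obstacle is the joint design of $\W$ and $\Vhorn$ so that two opposing requirements hold at once: $\Vhorn$ must have only $\bigO(1)$ rejecting local views---in contrast to the preliminary Horn verifier, whose $7^{\rep-1}$ rejecting views cost a $1.913^k$ factor in the emulation and cap the inapproximability at $1-\Omega(1/1.913^k)$---while its non-monotone rejection curve must still peak at height $\Theta(1/k)$ and the clause-check weight $\theta=\Theta(1/k)$ must be large enough that, once $k$ passes the threshold, the non-monotone branch (which is independent of $\epsilon_0$) dominates the $\epsilon_0$-dependent one. Pinning the leading constant to $\frac{3}{28}$, rather than leaving an unspecified $\Omega(1/k)$, then requires optimizing every $\bigO(1)$ factor introduced by the emulation, the width-$k$ encoding, and the randomness balancing, and this bookkeeping is the most delicate part of the argument.
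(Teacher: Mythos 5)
Your proposal is correct and follows essentially the same route as the paper: the same combined-verifier/Horn-verifier/\scOR-predicate pipeline with an all-ones dummy block, a clause-check weight of order $\frac{1}{\gap\cdot k}$, and a final E$k$-CNF encoding that loses only the at most $2^q-1=7$ rejecting views of the realized circuit, giving $\frac{3-\epsilon}{28k}$ for $k \geq k_0(\epsilon)$. The only substantive deviations are presentational: you run the dichotomy on the dummy block's acceptance probability (using its $\bigO(1/\ell)$ per-flip continuity), whereas the paper sandwiches $\W$'s acceptance probability near $1-\frac{\bal}{k}$ and then bounds $\A_q$'s rejection there (\cref{lem:hard9.333:W}, \cref{clm:hard9.333:Horn:A}); and the $\bigO(1)$ count of rejecting local views of $\Vhorn$ should be justified by each outer dummy run accepting \emph{only} the all-ones view (not by $\A$ rejecting few views), a property your all-ones dummy design does satisfy.
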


An an immediate corollary of \cref{thm:hard9.333}, we obtain the $\PSPACE$-hardness of
$\left(1-\Omega\left(\frac{1}{k}\right)\right)$-factor approximation for every $k \geq 3$,
which is proved in \cref{app} for the sake of completeness.

\begin{corollary}[$*$]
\label{cor:hard0}
    There exists a universal constant $\delta_0 > 0$ such that
    for any integer $k \geq 3$,
    \prb{Gap$_{1,1-\frac{\delta_0}{k}}$ \kSATReconf} is $\PSPACE$-hard.
    In particular,
    \MMkSATReconf is $\PSPACE$-hard to approximate
    within a factor of $1-\frac{\delta_0}{k}$ for every integer $k \geq 3$.
\end{corollary}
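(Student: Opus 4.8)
The plan is to carry out the chain of reductions sketched in the overview: from a bounded-degree version of \prb{Gap$_{1,1-\epsilon_0}$ \treSATReconf}, through a combined verifier $\W$, a Horn verifier $\Vhorn$, and an \scOR-predicate verifier $\X$, to an E$k$-CNF formula $\psi$. First I would fix the starting point: by the PCRP theorem (\cref{thm:PCRP}) and the gap-preserving reductions of \cite{ohsaka2023gap,hirahara2024probabilistically,karthikc.s.2023inapproximability}, there is an absolute constant $\epsilon_0>0$ with \prb{Gap$_{1,1-\epsilon_0}$ \treSATReconf} being $\PSPACE$-hard, and by first applying \cite[Theorem~3.1]{ohsaka2023gap} I may assume that every variable of the E3-CNF instance $\phi$ (on $n$ variables and $m$ clauses) occurs in at most an absolute constant number $d$ of clauses. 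Given $(\phi,\asg_\sss,\asg_\ttt)$, I adjoin a fresh block of $n$ dummy proof bits, pick $\rep=\Theta(k)$ and $\theta=\Theta(1/k)$, and define: the $3$-query combined verifier $\W$ which, on a proof $(\asg,\qf)$, with probability $\theta$ checks a uniformly random clause of $\phi$ against $\asg$ and otherwise runs a dummy verifier $\A$ on $\qf$, where $\A$ is engineered so that it accepts exactly the all-ones dummy string; the $k$-query Horn verifier $\Vhorn$ which runs $\W$ once and $\A$ independently $\rep-1$ times and accepts iff $\W$ accepts or some run of $\A$ rejects; and the $k$-query \scOR-predicate verifier $\X$ which emulates $\Vhorn$ by sampling one rejecting local view uniformly at random. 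The starting and ending proofs are $\Pf_\sss\defeq\asg_\sss\circ 1^n$ and $\Pf_\ttt\defeq\asg_\ttt\circ 1^n$.

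Completeness is immediate: if $\opt_\phi(\asg_\sss\reco\asg_\ttt)=1$, appending the fixed string $1^n$ to a reconfiguration sequence of satisfying assignments for $\phi$ makes $\W$, and hence $\Vhorn$ and $\X$, accept every proof with probability $1$. The heart is soundness. Assume $\opt_\phi(\asg_\sss\reco\asg_\ttt)<1-\epsilon_0$ and let $\sq{\Pf}$, with $\Pf^{(t)}=(\asg^{(t)},\qf^{(t)})$, be any reconfiguration sequence from $\Pf_\sss$ to $\Pf_\ttt$. Writing $v_t$ for the fraction of clauses of $\phi$ violated by $\asg^{(t)}$ and $q_t$ for $\A$'s rejection probability on $\qf^{(t)}$, one checks that $\Vhorn$ rejects $\Pf^{(t)}$ with probability exactly $\bigl(\theta v_t+(1-\theta)q_t\bigr)(1-q_t)^{\rep-1}$. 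Since the $\asg$-part of $\sq{\Pf}$ is a reconfiguration sequence for $\phi$, the hypothesis forces $v_{t^*}>\epsilon_0$ for some $t^*$. If $q_{t^*}\le\frac1\rep$, then $(1-q_{t^*})^{\rep-1}=\Omega(1)$ and $\Vhorn$ rejects $\Pf^{(t^*)}$ with probability $\ge\theta v_{t^*}(1-q_{t^*})^{\rep-1}=\Omega(\epsilon_0/k)$. Otherwise $q_{t^*}>\frac1\rep$; since $q_1=0$ ($\A$ accepts $1^n$) and a single-bit flip changes $q_t$ by at most $\bigO(1/n)\ll\frac1\rep$ (as $n$ is large while $\rep$ is a fixed constant), some step $t'$ has $q_{t'}\in[\frac{1}{2\rep},\frac1\rep]$, and there $\Vhorn$ rejects $\Pf^{(t')}$ with probability $\ge(1-\theta)q_{t'}(1-q_{t'})^{\rep-1}=\Omega(1/k)$. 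In both cases some proof of $\sq{\Pf}$ is rejected with probability $\Omega(\epsilon_0/k)$, so $\opt_{\Vhorn}(\Pf_\sss\reco\Pf_\ttt)<1-\Omega(\epsilon_0/k)$. This is exactly where a monotone test fails: its rejection probability is large only when $v$ is large, which lets the adversary hide a violated clause behind a badly corrupted dummy proof; the non-monotonicity in $q$ of the Horn test, together with the fact that a reconfiguration path cannot jump over its peak near $q=\frac1\rep$, is what rules this out.

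To transfer the gap to an E$k$-CNF formula, I would note that for each random string the event ``$\Vhorn$ rejects'' pins down its $k$ queried bits up to $\bigO(1)$ patterns: falsifying one clause of $\phi$ is a single pattern, and ``$\W$ rejects in its dummy branch'' and ``all $\rep-1$ runs of $\A$ accept'' each leave $\bigO(1)$ patterns. Hence $\X$ rejects every proof with probability at least a constant multiple of $\Vhorn$'s, preserving completeness $1$ and soundness $1-\Omega(\epsilon_0/k)$. Since $\X$ is a $k$-query \scOR-predicate verifier with logarithmic randomness, writing one clause (the negation of the sampled pattern---a width-$k$ disjunction over the proof bits) per random string yields an E$k$-CNF formula $\psi$, after padding the $o(1)$-fraction of random strings whose queried clauses share a variable; $\psi$ has size $m^{\Theta(k)}=\poly(m)$ for fixed $k$, both $\Pf_\sss,\Pf_\ttt$ satisfy it, and $\val_\psi$ equals $\X$'s acceptance probability, so $\opt_\psi(\Pf_\sss\reco\Pf_\ttt)=\opt_\X(\Pf_\sss\reco\Pf_\ttt)$. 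This makes \prb{Gap$_{1,1-\Omega(\epsilon_0/k)}$ \kSATReconf} $\PSPACE$-hard.

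The main obstacle is twofold. The conceptual part is the soundness analysis above: one must show that the adversary's freedom to corrupt the dummy proof $\qf$ is itself costly, which works only because the Horn rejection probability $(\theta v+(1-\theta)q)(1-q)^{\rep-1}$ is non-monotone in $q$. The technical part is the quantitative bookkeeping needed to replace the anonymous $\Omega(\epsilon_0/k)$ by the explicit $\frac{3-\epsilon}{28k}$: one balances the two soundness cases by tuning $\theta$ so that $\theta\epsilon_0$ meets the peak value $\tfrac1\rep(1-\tfrac1\rep)^{\rep-1}$, fixes the query budget so that $\rep$ and the number of rejecting local views of $\Vhorn$ are as small as possible, and absorbs the remaining lower-order losses---the gap between $(1-\tfrac1\rep)^{\rep-1}$ and $e^{-1}$, the factor from $\epsilon_0$, and the padding---into the additive slack ``$\epsilon$'', which is what forces $k\ge k_0(\epsilon)$. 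Finally, \cref{cor:hard0} follows by combining this large-$k$ bound with the hardness for small $k$ obtained through the elementary width-blowup reduction from \prb{Gap$_{1,1-\epsilon}$ \treSATReconf}.
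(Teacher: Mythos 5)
Your proposal is correct in substance, but it reaches \cref{cor:hard0} by a partly different route than the paper. For the large-$k$ regime you essentially re-derive \cref{thm:hard9.333} with the same verifier chain ($\W$, $\Vhorn$, the \scOR-predicate emulation, and the clause-per-rejecting-local-view CNF); the one real difference is in the soundness bookkeeping. You do a case analysis on the dummy-part corruption $q_t$ at the moment the $\phi$-part is badly violated, using that $q_1=0$ and that a single bit flip moves $q_t$ by $\bigO(1/n)$, so either $q_{t^*}\leq \frac{1}{\rep}$ (and the clause-check branch already gives $\Omega(\epsilon_0/k)$ rejection) or the sequence must pass through $q_{t'}\in[\frac{1}{2\rep},\frac{1}{\rep}]$, where the non-monotone factor $(1-q)^{\rep-1}$ is still $\Omega(1)$ and the dummy branch alone gives $\Omega(1/k)$. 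The paper instead localizes a proof whose $\W$-acceptance lies in a narrow window (the ``moreover'' part of \cref{lem:hard9.333:W}) and then bounds $\A_q$'s rejection there by \cref{clm:hard9.333:Horn:A}; the two arguments are interchangeable, and yours is arguably a cleaner way to see why non-monotonicity saves the day, though the paper's window argument is what supports the explicit constant $\frac{3-\epsilon}{28k}$. For the small-$k$ regime ($3\leq k<k_0$) you diverge more clearly: you pad \treSATReconf up to width $k$ with $k-3$ fresh variables, accepting the $2^{k-3}$ loss, which is a constant since $k<k_0$; the paper instead splits the width-$k_0$ hard instances \emph{downward} via \cref{clm:hard:gamma} (chained auxiliary variables \`a la Gopalan et al.), losing only a factor $\Gamma\leq k_0$. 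Both yield a universal $\delta_0$, and your variant is more elementary, at the price of a worse (but still constant) $\delta_0$ and of not giving a reusable width-reduction gadget. Two minor imprecisions worth fixing: your dummy verifier does not accept ``exactly'' the all-ones string---it must be a constant-query sampler like $\A_q$ that accepts $1^{\ell}$ with probability $1$ and is $\bigO(1/n)$-Lipschitz, which is what your continuity step actually uses; and $\val_\psi$ equals $\X$'s acceptance probability only up to the factor $2^q-1$ (the number of rejecting local views per random string varies), which is harmless for the $\Omega(\epsilon_0/k)$ bound but not an exact equality unless you pad the clause multiset.
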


\subsection{Outline of the Proof of \texorpdfstring{\cref{thm:hard9.333}}{Theorem~\ref{thm:hard9.333}}}

We present an outline of the proof of \cref{thm:hard9.333}.
Starting from a PCRP system for $\PSPACE$ whose query complexity is $q$,
we reduce it to \prb{Maxmin E$k$-SAT Reconfiguration}
for any sufficiently large integer $k \geq q \cdot \rep_0$,
where $\rep_0$ depends only on the parameters of the PCRP system,
with the following properties.

\begin{lemma}
\label{lem:hard9.333:main}
Suppose that there exists a PCRP system $(\V_L, \pf_\sss, \pf_\ttt)$
for a $\PSPACE$-complete language $L \subseteq \zo^*$, where 
$\V_L$ is a verifier with
randomness complexity $r(n) = \Theta(\log n)$,
query complexity $q(n) = q \geq 3$,
perfect completeness $c(n) = 1$,
soundness $s(n) = s \in (0,1)$, and
degree $\Delta(n) = \Delta \in \bbN$, and
$\pf_\sss,\pf_\ttt \colon \allowbreak \zo^* \to \zo^*$ are polynomial-time computable proofs.
Then, for any real $\epsilon \in (0,1)$,
there exists an integer $\rep_0(\epsilon,s,q) \in \bbN$ such that
for any integer $k \geq q \cdot \rep_0(\epsilon,s,q)$,
there exists a polynomial-time reduction that
takes an input $x \in \zo^*$ for $L$ and
returns an instance $(\phi, \asg_\sss, \asg_\ttt)$ of
\prb{Maxmin E$k$-SAT Reconfiguration} such that the following hold\textup{:}
\begin{itemize}
    \item \textup{(}Completeness\textup{)}
        If $x \in L$, then $\opt_\phi\bigl(\asg_\sss \reco \asg_\ttt\bigr) = 1$.
    \item \textup{(}Soundness\textup{)}
        If $x \notin L$, then $\opt_\phi\bigl(\asg_\sss \reco \asg_\ttt\bigr) < 1-\zeta$, where
    \begin{align}
        \zeta \defeq
        \frac{1}{(2^q-1) \cdot k} \cdot \left(\frac{q}{4}-\epsilon\right).
    \end{align}
\end{itemize}
In particular,  \prb{Gap$_{1,1-\zeta}$ E$k$-SAT Reconfiguration} is $\PSPACE$-hard.
\end{lemma}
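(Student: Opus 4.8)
The plan is to build, from the given PCRP verifier $\V_L$, a $k$-query Horn verifier $\Vhorn$ that has perfect completeness, a soundness gap of order $1/k$, and only $\bigO(1)$ rejecting local views per random string, and then to read an E$k$-CNF formula off of it. Throughout I augment the PCRP proof $\pf$ with a fresh ``dummy'' block $\qf \in \zo^{n'}$ for a large polynomial $n'$, set $\asg_\sss \defeq \pf_\sss \circ 1^{n'}$ and $\asg_\ttt \defeq \pf_\ttt \circ 1^{n'}$, take $\rep \defeq \lfloor k/q \rfloor \geq \rep_0$ (the at most $q-1$ leftover query slots are padded with extra dummy checks), and fix a dilution parameter $p = \Theta(1/k)$ whose exact value makes $p(1-s)$ equal a suitable constant multiple of $q/k$. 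Write $\delta_{\V_L}(\pf)$ for $\V_L$'s rejection probability on $\pf$ and $\rho(\qf)$ for the fraction of $0$'s in $\qf$.

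First I introduce the $q$-query combined verifier $\W$: with probability $p$ it runs $\V_L$ on $\pf$, and with probability $1-p$ it queries $q$ uniformly random coordinates of $\qf$ and accepts iff all are $1$. Thus $\W$ has perfect completeness, at most $2^q-1$ rejecting local views per random string, and rejection probability $\delta_\W = p\,\delta_{\V_L}(\pf) + (1-p)\bigl(1-(1-\rho(\qf))^q\bigr) \geq \max\{\,p\,\delta_{\V_L}(\pf),\ (1-p)\rho(\qf)\,\}$. Next, $\Vhorn$ runs $\W$ once and $\rep-1$ further independent dummy checks, each querying $q$ random coordinates of $\qf$ and accepting iff all are $1$, and accepts iff $\W$ accepts or some dummy check rejects. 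This still has perfect completeness, still has $\leq 2^q-1$ rejecting local views per random string (each dummy check admits only the all-$1$ configuration, so it does not multiply the count), and has rejection probability $\Pr[\Vhorn \text{ rejects}] = \delta_\W\cdot(1-\rho(\qf))^{q(\rep-1)}$. I then read off $\phi$: one Boolean variable per coordinate of $(\pf,\qf)$, and for each random string $R$ of $\Vhorn$ one width-$k$ clause ``$\asg|_{I_R} \neq v$'' for each of the $\leq 2^q-1$ local views $v$ rejected by $\Vhorn(R)$ --- this is the \scOR-predicate-verifier step of the overview, made explicit. Consequently the fraction of clauses of $\phi$ violated by an assignment $\asg$ is at least $\tfrac1{2^q-1}\Pr[\Vhorn \text{ rejects } \asg]$, while along the good sequence --- follow the PCRP reconfiguration on the $\pf$-block and keep $\qf\equiv 1^{n'}$ --- every clause stays satisfied, so if $x\in L$ then $\opt_\phi(\asg_\sss\reco\asg_\ttt)=1$.

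The heart of the argument, and the step I expect to be the main obstacle, is the soundness of $\Vhorn$ --- this is precisely where the non-monotonicity of $\rho\mapsto(1-\rho)^{q(\rep-1)}$ pays off. Assume $x\notin L$ and let $\sq{\asg}$ be any reconfiguration sequence from $\asg_\sss$ to $\asg_\ttt$. Its $\pf$-block is a reconfiguration sequence from $\pf_\sss$ to $\pf_\ttt$, so by PCRP soundness some step $t_0$ has $\delta_{\V_L}(\pf^{(t_0)})>1-s$, while $\rho$ starts and ends at $0$ and changes by at most $1/n'$ per step. Put $m\defeq q(\rep-1)$ and $g(\rho)\defeq\bigl(p(1-s)+(1-p)(1-(1-\rho)^q)\bigr)(1-\rho)^m$. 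If $\rho(\qf^{(t_0)})\leq\tfrac1{m+1}$, then $\Pr[\Vhorn\text{ rejects }\asg^{(t_0)}]\geq g(\rho(\qf^{(t_0)}))$, and since $g$ has a single interior maximum it is at least the smaller of its endpoint values on $[0,\tfrac1{m+1}]$; the left endpoint is $p(1-s)=\tfrac{q/4}{k}$ and the right endpoint is $\geq\tfrac{q}{\rme(m+1)}(1-o(1))>\tfrac{q/4}{k}$ (here using $\tfrac1\rme>\tfrac14$). Otherwise $\rho$ crosses $\tfrac1{m+1}$ at some earlier step $t_1$, where $\Pr[\Vhorn\text{ rejects }\asg^{(t_1)}]\geq(1-p)(1-(1-\rho)^q)(1-\rho)^m\geq\tfrac{q}{\rme(m+1)}(1-o(1))>\tfrac{q/4}{k}$ regardless of $\delta_{\V_L}$. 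In all cases some proof in $\sq{\asg}$ is rejected by $\Vhorn$ with probability $\geq\tfrac{q/4}{k}-o(\tfrac1k)$. The delicate part is forcing every slack term --- the residual dependence on $s$, the $1/n'$ and $1/2^{r}$ discretization errors in the intermediate-value arguments, the rare collisions among the $k$ queried coordinates, and the approximations $1-(1-\rho)^q\approx q\rho$ and $(1-\tfrac1{m+1})^m\approx\rme^{-1}$ --- to be uniformly $o(1)$, which is exactly what taking $\rep_0=\rep_0(\epsilon,s,q)$ large buys. It is essential here that $\Vhorn$ has $\bigO(1)$ rejecting local views independently of $\rep$: this is what makes the reduction lose only the constant factor $2^q-1$ rather than the $7^{\rep-1}$-type factor of the naive Horn test.

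Combining the bounds: if $x\notin L$ then the fraction of violated clauses at the bad proof is at least $\tfrac1{2^q-1}\bigl(\tfrac{q/4}{k}-o(\tfrac1k)\bigr)=\tfrac1{(2^q-1)k}\bigl(\tfrac q4-\epsilon\bigr)=\zeta$ once $\rep_0$ is large, so $\opt_\phi(\asg_\sss\reco\asg_\ttt)<1-\zeta$; since $L$ is $\PSPACE$-complete, this shows \prb{Gap$_{1,1-\zeta}$ E$k$-SAT Reconfiguration} is $\PSPACE$-hard. Finally, two routine points: each clause must have exactly $k$ distinct literals, but as $k=\bigO(1)$ the queried coordinates collide with probability only $\bigO(1/\poly(n))$, so one may have $\Vhorn$ auto-accept on those random strings at a cost of $\bigO(1/\poly(n))\ll\zeta$ in every probability; and applying a standard preprocessing that makes the base verifier smooth and of bounded degree $\Delta$ (e.g.\ \cite[Theorem~3.1]{ohsaka2023gap}) additionally ensures every variable of $\phi$ is read $o(|\phi|)$ times, as needed in \cref{thm:hard9.333}. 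The reduction is polynomial-time with $|\phi|=2^{\Theta(\log n)}=\poly(n)$ clauses because $r(n)=\Theta(\log n)$, $q=\bigO(1)$, and $\rep_0=\bigO(1)$.
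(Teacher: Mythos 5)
Your construction is the same as the paper's: a fresh all-ones dummy block, a combined verifier that calls $\V_L$ with probability $\Theta(1/k)$ and otherwise runs an all-one check, a Horn verifier that accepts iff the combined verifier accepts or one of the $\rep-1$ extra all-one checks rejects, and an E$k$-CNF with at most $2^q-1$ clauses per random string, losing only the factor $2^q-1$; the completeness and the final value-vs-rejection-probability accounting also coincide. Where you genuinely diverge is the soundness analysis of the Horn verifier. The paper first proves a sandwich lemma for the combined verifier $\W$ (\cref{lem:hard9.333:W}, ``moreover'' part): using the degree bound to argue that $\Pr[\W \text{ accepts}]$ changes by at most $\Delta/2^{r}+q/\ell$ per bit flip, it locates a proof $\Pf^\circ$ with $1-\frac{\bal}{k}\le\Pr[\W\text{ accepts }\Pf^\circ]\le 1-\frac{\bal-\delta}{k}$, and then deduces (\cref{clm:hard9.333:Horn:A}) that the all-one check cannot reject $\qf^\circ$ too often, so a union bound over the $\rep$ dummy runs gives the $\frac{1}{k}(\frac{q}{4}-\epsilon)$ rejection. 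You instead track the zero-fraction $\rho$ of the dummy block directly: at the PCRP-bad step either $\rho\le\frac1{m+1}$ and the explicit non-monotone function $g(\rho)$ is bounded below by its endpoint values via unimodality, or $\rho$ has already crossed $\frac1{m+1}$ at an earlier step, where the dummy part alone forces rejection probability about $\frac{q}{\rme k}>\frac{q}{4k}$. Both routes are valid and give the same constant (your choice $p(1-s)=\frac{q/4}{k}$ plays the role of $\bal=\frac q2$ combined with the factor $1-\frac{\bal+\delta}{q}$); your version makes the non-monotonicity and the $\frac1\rme$-versus-$\frac14$ slack explicit and, notably, does not need the degree-based continuity of $\val_\W$ at all (only the trivial $1/n'$ continuity of $\rho$), whereas the paper's version packages the argument into two clean lemmas with all discretization, collision, and hypergeometric-versus-$(1-\rho)^q$ errors already quantified---the bookkeeping you correctly identify but defer to ``$\rep_0$ and $n$ large'' is exactly what remains to be written out in your route.
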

\noindent
By using \cref{lem:hard9.333:main}, we can prove \cref{thm:hard9.333}.

\begin{proof}[Proof of \cref{thm:hard9.333}]
By the PCRP theorem \cite{hirahara2024probabilistically,karthikc.s.2023inapproximability}
and gap-preserving reductions of \cite[Theorem~3.1]{ohsaka2023gap},
\prb{Gap$_{1,s}$ \treSATReconf} is $\PSPACE$-complete
for some real $s \in (0,1)$
even when each variable appears in at most $\Delta$ clauses for some integer $\Delta \in \bbN$.
Let $q \defeq 3$, and
$(\V, \pf_\sss, \pf_\ttt)$ be a PCRP system corresponding to \prb{Gap$_{1,s}$ \treSATReconf},
where $\V$ has
randomness complexity $r(n) = \Theta(\log n)$,
query complexity $q(n) = q$,
perfect completeness $c(n) = 1$,
soundness $s(n) = s$, and
degree $\Delta(n) = \Delta$.
For any real $\epsilon > 0$, let
\begin{align}
    \bar{\epsilon} & \defeq \frac{\epsilon}{100}, \\
    k_0(\epsilon) & \defeq q \cdot \rep_0(\bar{\epsilon},s,q),
\end{align}
where $\rep_0(\bar{\epsilon},s,q)$ is as defined in \cref{lem:hard9.333:main}.
For any integer $k \geq k_0(\epsilon)$,
we apply \cref{lem:hard9.333:main} to $\V$
and deduce that
\prb{Gap$_{1,1-\zeta}$ E$k$-SAT Reconfiguration} is $\PSPACE$-hard,
where $\zeta$ is calculated as
\begin{align}
    \zeta \defeq
    \frac{1}{(2^q-1) \cdot k} \cdot \left(\frac{q}{4}-\bar{\epsilon}\right)
    \underbrace{=}_{q=3} \frac{1}{7 \cdot k} \cdot \left(\frac{3}{4}-\bar{\epsilon}\right).
    \underbrace{\geq}_{\bar{\epsilon} = \frac{\epsilon}{100}}
    \frac{3-\epsilon}{28\cdot k}
\end{align}
which accomplishes the proof.
\end{proof}

The remainder of this section is devoted to the proof of \cref{lem:hard9.333:main}.

\subsection{Proof of \texorpdfstring{\cref{lem:hard9.333:main}}{Lemma~\ref{lem:hard9.333:main}}}
Let $(\V_L, \pf_\sss, \pf_\ttt)$ be a PCRP system for a $\PSPACE$-complete language $L \subseteq \zo^*$,
where $\V_L$ is a verifier with
randomness complexity $r(n) = \Theta(\log n)$,
query complexity $q(n) = q \geq 3$,
perfect completeness $c(n) = 1$,
soundness $s(n) = s \in (0,1)$, and
degree $\Delta(n) = \Delta \in \bbN$, and
$\pf_\sss,\pf_\ttt \colon \zo^* \to \zo^*$ are
polynomial-time computable proofs.
We can safely assume that 
any possible query sequence generated by $\V_L$ contains \emph{exactly} $q$ locations.
Let $\gap \defeq 1-s \in (0,1)$ and $\bal \defeq \frac{q}{2}$.
For any real $\epsilon \in (0,1)$,
we define $\delta \defeq \frac{\epsilon}{4}$ and 
\begin{align}
\label{eq:hard9.333:rep_0}
    \rep_0(\epsilon,s,q) \defeq
    \left\lceil \frac{\bal \cdot (\bal+\delta)}{\delta} \cdot \frac{1}{\gap \cdot q} \right\rceil,
\end{align}
which depends only on $\epsilon$, $s$, and $q$.\footnote{
The choice of $\rep_0(\epsilon,s,q)$ will be crucial in the proof of \cref{lem:hard9.333:Horn}.
}
For any integer $k \geq q \cdot \rep_0(\epsilon,s,q)$,
we define $\rep \defeq \left\lfloor \frac{k}{q} \right\rfloor$.
By definition, $q\rep \leq k \leq q\rep + q-1$.
Let $x \in \zo^n$ be an input for $L$.
The proof length for $\V_L(x)$ is denoted by $\ell(n)$, which is polynomially bounded in $n$.
Let $\pf_\sss \defeq \pf_\sss(x)$ and $\pf_\ttt \defeq \pf_\ttt(x)$ be
the starting and ending proofs in $\zo^{\ell(n)}$ associated with $\V_L(x)$, respectively.
Note that $\V_L(x)$ accepts both $\pf_\sss$ and $\pf_\ttt$ with probability $1$.
Moreover, the following hold:
\begin{itemize}
    \item (Completeness) If $x \in L$, then $\opt_{\V_L(x)}\bigl(\pf_\sss \reco \pf_\ttt\bigr) = 1$.
    \item (Soundness) If $x \notin L$, then $\opt_{\V_L(x)}\bigl(\pf_\sss \reco \pf_\ttt\bigr) < 1-\gap$.
\end{itemize}
Hereafter, we will assume without loss of generality that
the input length $n$ is sufficiently large so that\footnote{
Such an integer $n$ always exists since 
the left-hand sides of \cref{eq:hard9.333:n} decrease as $n$ increases, while
the right-hand sides are constants.
}
\begin{align}
\label{eq:hard9.333:n}
\begin{aligned}
    \frac{\Delta}{2^{r(n)}} + \frac{q}{\ell(n)}
    & \leq \frac{\delta}{k}, \\
    q\cdot(\rep+1)^2 \cdot \left(\frac{\Delta}{2^{r(n)}} + \frac{q}{\ell(n)}\right)
    & \leq \frac{\delta}{k} \cdot \left(1-\frac{\bal+\delta}{q}\right).
\end{aligned}
\end{align}
In the subsequent sections, we introduce several verifiers using $\V_L$ and analyze their completeness and soundness.

\subsubsection{All-One Verifier}
The first verifier is the \emph{all-one verifier} $\A_p$.
Given an integer $p \in \bbN$ and oracle access to a proof $\qf \in \zo^{\ell(n)}$,
$\A_p$ samples a query sequence $I$ of $p$ distinct locations from $[\ell(n)]$ and
accepts if $\qf(i) = 1$ for every location $i \in I$
(i.e., $\qf|_I = 1^p$),
as described below.
\begin{itembox}[l]{\textbf{$p$-query all-one verifier $\A_p$}}
\begin{algorithmic}[1]
    \item[\textbf{Input:}]
        an integer $p \in \bbN$.
    \item[\textbf{Oracle access:}]
        a proof $\qf \in \zo^{\ell(n)}$.
    \State sample a query sequence $I$ from $\binom{[\ell(n)]}{p}$.
    \If{$\qf(i) = 1$ for every $i \in I$}
        \State \Return $1$.
    \Else
        \State \Return $0$.
    \EndIf
\end{algorithmic}
\end{itembox}

\noindent
Observe that
$\A_p$ has the randomness complexity at most $p \cdot \log \ell(n) = \Theta(\log n)$, and
$\A_p$ always generates a fixed circuit $D \colon \zo^p \to \zo$
that accepts only $1^p$ (i.e., $D(f) \defeq \llbracket f = 1^p \rrbracket$).
Note also that $\A_p$'s rejection probability is monotonically increasing in $p$; namely,
\begin{align}
    \Pr\bigl[\A_{p+1} \text{ rejects } \qf\bigr] \geq
    \Pr\bigl[\A_p \text{ rejects } \qf\bigr].
\end{align}

\subsubsection{Combined Verifier}
The second verifier is the $q$-query \emph{combined verifier} $\W$.
Given oracle access to a pair of proofs, denoted by $\Pf \defeq \pf \circ \qf \in \zo^{2\ell(n)}$,
$\W$
calls $\V_L(x)$ on $\pf$ with probability $\frac{\bal}{\gap \cdot k}$ and
calls $\A_q$ on $\qf$ with probability $1-\frac{\bal}{\gap \cdot k}$,
as described below.

\begin{itembox}[l]{\textbf{$q$-query combined verifier $\W$}}
\begin{algorithmic}[1]
    \item[\textbf{Input:}]
        the PCRP verifier $\V_L$,
        the all-one verifier $\A_p$, and
        an input $x \in \zo^n$.
    \item[\textbf{Oracle access:}]
        a proof $\Pf = \pf \circ \qf \in \zo^{2\ell(n)}$.
    \State uniformly sample a real $r \sim (0,1)$.
    \If{$r < \frac{\bal}{\gap \cdot k}$} \Comment{with probability $\frac{\bal}{\gap \cdot k}$}
        \State run $\V_L(x)$ on $\pf$.
        \State \Return $\V_L(x)$'s return value.
    \Else \Comment{with probability $1-\frac{\bal}{\gap \cdot k}$}
        \State run $\A_q$ on $\qf$.
        \State \Return $\A_q$'s return value.
    \EndIf
\end{algorithmic}
\end{itembox}

\noindent
Since $\frac{\bal}{\gap \cdot k} \in (0,1)$ due to \cref{eq:hard9.333:rep_0}, the probabilistic behavior of $\W$ is well defined.
Observe that the randomness complexity of $\W$ is bounded by those of $\V_L$ and $\A_q$; i.e., $\Theta(\log n)$.
The starting and ending proofs $\Pf_\sss, \Pf_\ttt \in \zo^{2\ell(n)}$ are defined as
$\Pf_\sss \defeq \pf_\sss \circ 1^{\ell(n)}$ and
$\Pf_\ttt \defeq \pf_\ttt \circ 1^{\ell(n)}$,
respectively.
Since
$\V_L$ accepts $\pf_\sss$ and $\pf_\ttt$ with probability $1$ and
$\A_q$ accepts $1^{\ell(n)}$ with probability $1$,
$\W$ accepts $\Pf_\sss$ and $\Pf_\ttt$ with probability $1$.
We show the following completeness and soundness.

\begin{lemma}
\label{lem:hard9.333:W}
The following hold\textup{:}
\begin{itemize}
    \item \textup{(}Completeness\textup{)}
        If $\opt_{\V_L(x)}\bigl(\pf_\sss \reco \pf_\ttt\bigr) = 1$,
        then $\opt_{\W}\bigl(\Pf_\sss \reco \Pf_\ttt\bigr) = 1$.
    \item \textup{(}Soundness\textup{)}
        If $\opt_{\V_L(x)}\bigl(\pf_\sss \reco \pf_\ttt\bigr) < 1-\gap$,
        then $\opt_{\W}\bigl(\Pf_\sss \reco \Pf_\ttt\bigr)  < 1-\frac{\bal}{k}$.
    Moreover, for any reconfiguration sequence
    $\sq{\Pf} = (\Pf^{(1)}, \ldots, \Pf^{(T)})$ from $\Pf_\sss$ to $\Pf_\ttt$,
    there exists a proof $\Pf^{(t)}$ in $\sq{\Pf}$ such that
    \begin{align}
        1-\frac{\bal}{k}
        \leq \Pr\bigl[\W \text{ accepts } \Pf^{(t)}\bigr]
        \leq 1 - \frac{\bal-\delta}{k}.
    \end{align}
\end{itemize}
\end{lemma}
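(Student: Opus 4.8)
The plan is to reduce $\W$'s behaviour on a proof $\Pf=\pf\circ\qf$ to that of $\V_L$ on the $\pf$-block and of $\A_q$ on the $\qf$-block, exploiting that the $\qf$-blocks of both $\Pf_\sss$ and $\Pf_\ttt$ equal the constant string $1^{\ell(n)}$, and then to read off the ``moreover'' clause from a discrete intermediate-value sweep.

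For completeness, I would take a reconfiguration sequence $\sq\pf=(\pf^{(1)},\dots,\pf^{(T)})$ from $\pf_\sss$ to $\pf_\ttt$ achieving $\opt_{\V_L(x)}(\pf_\sss\reco\pf_\ttt)=1$ and pad it by $\Pf^{(t)}\defeq\pf^{(t)}\circ 1^{\ell(n)}$; consecutive terms still differ in at most one bit, and for every $t$ we get $\Pr[\W\text{ accepts }\Pf^{(t)}]=\frac{\bal}{\gap\cdot k}\cdot\Pr[\V_L(x)\text{ accepts }\pf^{(t)}]+\bigl(1-\frac{\bal}{\gap\cdot k}\bigr)\cdot\Pr[\A_q\text{ accepts }1^{\ell(n)}]=1$, so $\opt_{\W}(\Pf_\sss\reco\Pf_\ttt)=1$. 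For the soundness bound, I would let $\sq\Pf=(\Pf^{(1)},\dots,\Pf^{(T)})$ be an arbitrary reconfiguration sequence from $\Pf_\sss$ to $\Pf_\ttt$, write $\Pf^{(t)}=\pf^{(t)}\circ\qf^{(t)}$, and observe that $(\pf^{(1)},\dots,\pf^{(T)})$ is itself a reconfiguration sequence from $\pf_\sss$ to $\pf_\ttt$, since consecutive terms agree on all but at most one coordinate. Hence the hypothesis $\opt_{\V_L(x)}(\pf_\sss\reco\pf_\ttt)<1-\gap$ yields an index $t_0$ with $\Pr[\V_L(x)\text{ rejects }\pf^{(t_0)}]>\gap$, and therefore $\Pr[\W\text{ rejects }\Pf^{(t_0)}]\geq\frac{\bal}{\gap\cdot k}\cdot\gap=\frac{\bal}{k}$, i.e.\ $\Pr[\W\text{ accepts }\Pf^{(t_0)}]<1-\frac{\bal}{k}$; as $\sq\Pf$ was arbitrary, $\opt_{\W}(\Pf_\sss\reco\Pf_\ttt)<1-\frac{\bal}{k}$.

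For the ``moreover'' clause, set $g(t)\defeq\Pr[\W\text{ accepts }\Pf^{(t)}]$ along the same $\sq\Pf$. The degree bound of $\V_L$ gives $\Pr_{(I,D)\sim\V_L(x)}[i\in I]\leq\Delta/2^{r(n)}$ for each location $i$, while a direct count gives $\Pr[i\in I]=q/\ell(n)$ for $\A_q$; combining these with $\frac{\bal}{\gap\cdot k}\in(0,1)$ (noted above) and the first inequality of \cref{eq:hard9.333:n}, flipping a single bit of $\Pf$ changes $g$ by at most $\frac{\delta}{k}$ — at most $\frac{\bal}{\gap\cdot k}\cdot\frac{\Delta}{2^{r(n)}}$ if the bit lies in the $\pf$-block and at most $\bigl(1-\frac{\bal}{\gap\cdot k}\bigr)\cdot\frac{q}{\ell(n)}$ if it lies in the $\qf$-block. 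Thus $|g(t+1)-g(t)|\leq\frac{\delta}{k}$ for every $t$, while $g(1)=g(T)=1$ and $g(t_0)<1-\frac{\bal}{k}$; moreover $\bal=\frac{q}{2}\geq\frac{3}{2}>\frac{1}{4}>\frac{\epsilon}{4}=\delta$, so $g(1)=1>1-\frac{\bal-\delta}{k}$ and in particular $t_0>1$. Taking $t'$ to be the largest index below $t_0$ with $g(t')>1-\frac{\bal-\delta}{k}$, maximality forces $g(t'+1)\leq 1-\frac{\bal-\delta}{k}$ (either $t'+1\in\{1,\dots,t_0-1\}$, or $t'+1=t_0$ and then $g(t'+1)=g(t_0)<1-\frac{\bal}{k}$), while the step bound forces $g(t'+1)\geq g(t')-\frac{\delta}{k}>1-\frac{\bal-\delta}{k}-\frac{\delta}{k}=1-\frac{\bal}{k}$; hence $\Pf^{(t'+1)}$ lies in the claimed band $\bigl[1-\frac{\bal}{k},\,1-\frac{\bal-\delta}{k}\bigr]$.

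I do not anticipate a genuine obstacle; the argument is a reduction to the PCRP system $(\V_L,\pf_\sss,\pf_\ttt)$ together with a mean-value sweep. The only delicate points are the bookkeeping against the parameter constraints \cref{eq:hard9.333:rep_0,eq:hard9.333:n} — in particular checking that the per-step change of $g$ is genuinely at most $\frac{\delta}{k}$ in both the $\pf$- and $\qf$-block cases — and the (immediate) remark that the $\pf$-projection of an arbitrary reconfiguration sequence for $\W$ literally meets the definition of a reconfiguration sequence for $\V_L$, so that the soundness of $(\V_L,\pf_\sss,\pf_\ttt)$ applies verbatim.
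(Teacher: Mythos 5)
Your proof is correct and follows essentially the same route as the paper's: completeness by padding with $1^{\ell(n)}$, soundness by projecting a $\W$-sequence onto its $\pf$-block, and the ``moreover'' clause via the per-bit-flip bound $\frac{\Delta}{2^{r(n)}}+\frac{q}{\ell(n)}\leq\frac{\delta}{k}$ combined with a discrete intermediate-value argument (the paper straddles the threshold $1-\frac{\bal}{k}$ and bounds the upper neighbor, you straddle $1-\frac{\bal-\delta}{k}$ and bound the lower one --- the same idea). The only nitpick is that your soundness chain should retain the strict inequality $\Pr\bigl[\W \text{ rejects } \Pf^{(t_0)}\bigr] > \frac{\bal}{k}$, which follows since $\Pr\bigl[\V_L(x) \text{ rejects } \pf^{(t_0)}\bigr] > \gap$ is strict, and which you implicitly use when concluding $\Pr\bigl[\W \text{ accepts } \Pf^{(t_0)}\bigr] < 1-\frac{\bal}{k}$.
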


To prove \cref{lem:hard9.333:W}, we use the following claim.
\begin{claim}
\label{clm:hard9.333:W}
    Each proof location $\qf(i)$ is queried by $\A_p$ with probability $\frac{p}{\ell(n)}$.
    Each proof location $\Pf(i)$ is queried by $\W$ with probability at most
    \begin{align}
        \frac{\Delta}{2^{r(n)}} + \frac{q}{\ell(n)}.
    \end{align}
\end{claim}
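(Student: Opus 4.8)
The plan is to prove the two assertions separately, each by a direct computation with no structural input beyond the definitions.

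For the first assertion, recall that $\A_p$ draws its query sequence $I$ uniformly from $\binom{[\ell(n)]}{p}$ and queries location $\qf(i)$ precisely when $i \in I$. So for a fixed $i \in [\ell(n)]$, a symmetry argument (or the count $\binom{\ell(n)-1}{p-1}\big/\binom{\ell(n)}{p} = \frac{p}{\ell(n)}$) gives $\Pr[i \in I] = \frac{p}{\ell(n)}$, which is exactly what is claimed. For the second assertion, I would split on whether the location $\Pf(i)$ lies in the $\pf$-block or the $\qf$-block of $\Pf = \pf \circ \qf$, using that $\W$ tosses one independent coin to pick a branch and then runs either $\V_L(x)$ (reading only the $\pf$-block) or $\A_q$ (reading only the $\qf$-block). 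If $\Pf(i)$ is in the $\pf$-block, it is read only in the branch where $\W$ runs $\V_L(x)$, which has probability $\frac{\bal}{\gap \cdot k}$; conditioned on that branch, the probability that $\V_L(x)$ queries $i$ equals $\Pr_{(I,D)\sim\V_L(x)}[i \in I]$, which is at most $\frac{\Delta}{2^{r(n)}}$ by the degree bound (the degree of $i$ is exactly $\Pr_{(I,D)\sim\V_L(x)}[i\in I]\cdot 2^{r(n)}$, by the definition in \cref{sec:pre}, and it is $\le\Delta$). Since $\frac{\bal}{\gap\cdot k}\le 1$ (established from \cref{eq:hard9.333:rep_0}), this yields $\Pr[\W \text{ queries } \Pf(i)] \le \frac{\Delta}{2^{r(n)}}$. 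If instead $\Pf(i)$ is in the $\qf$-block, it is read only in the branch where $\W$ runs $\A_q$, of probability $1-\frac{\bal}{\gap\cdot k}\le 1$; conditioned on that branch, by the first assertion with $p=q$ the probability of querying $\qf(i)$ is $\frac{q}{\ell(n)}$, so $\Pr[\W \text{ queries } \Pf(i)] \le \frac{q}{\ell(n)}$. In either case the probability is at most $\frac{\Delta}{2^{r(n)}} + \frac{q}{\ell(n)}$, which completes the argument.

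There is no real obstacle here; the statement is a bookkeeping computation. The only points I would be careful about are: invoking the precise definition of degree from the Preliminaries and the identity relating it to the per-location query probability; keeping the two proof blocks of $\Pf$ cleanly separated, so that each block's location is touched only in the corresponding branch of $\W$; and noting that both branch probabilities $\frac{\bal}{\gap\cdot k}$ and $1-\frac{\bal}{\gap\cdot k}$ lie in $(0,1)$ so that they can be dropped from the final bound. No concentration inequality or reconfiguration-specific reasoning is needed for this claim; those enter only when \cref{clm:hard9.333:W} is later used to prove the soundness part of \cref{lem:hard9.333:W}.
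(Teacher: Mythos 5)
Your proof is correct and follows essentially the same route as the paper: the first assertion is immediate from the uniform choice of $I \in \binom{[\ell(n)]}{p}$, and the second splits on whether the location lies in the $\pf$-block or the $\qf$-block, bounding the former via the degree bound $\Pr[i \in I] \leq \frac{\Delta}{2^{r(n)}}$ for $\V_L(x)$ and the latter via the first assertion with $p = q$, then dropping the branch probabilities and taking the sum. The only cosmetic difference is that the paper bounds the two cases by a maximum before relaxing to the sum, which changes nothing.
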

\begin{proof} 
The former statement holds by the definition of $\A_p$.
Since $\pf(i)$ is queried by $\W$ only if $\V_L$ is called, we have
\begin{align}
    \Pr\bigl[ \W \text{ queries } \pf(i) \bigr]
    \leq \frac{1}{\gap \cdot k} \cdot \frac{\Delta}{2^{r(n)}}
    \leq \frac{\Delta}{2^{r(n)}}.
\end{align}
Since $\qf(i)$ is queried by $\W$ only if $\A_q$ is called,
we have
\begin{align}
    \Pr\bigl[ \W \text{ queries } \qf(i) \bigr]
    \leq \left(1-\frac{1}{\gap \cdot k}\right) \cdot \frac{q}{\ell(n)}
    \leq \frac{q}{\ell(n)}.
\end{align}
Consequently, any location of $\Pf$ is queried by $\W$ with probability at most
\begin{align}
    \max\left\{ \frac{\Delta}{2^{r(n)}}, \frac{q}{\ell(n)} \right\}
    \leq \frac{\Delta}{2^{r(n)}} + \frac{q}{\ell(n)},
\end{align}
as desired.
\end{proof}

\begin{proof}[Proof of \cref{lem:hard9.333:W}]
We first show the completeness.
Suppose that $\opt_{\V_L(x)}(\pf_\sss \reco \pf_\ttt) = 1$.
Let $\sq{\pf} = (\pf^{(1)}, \ldots, \pf^{(T)})$ be a reconfiguration
sequence from $\pf_\sss$ to $\pf_\ttt$ such that
$\val_{\V_L(x)}(\sq{\pf}) = 1$.
Constructing a reconfiguration sequence
$\sq{\Pf} = (\Pf^{(1)}, \ldots, \Pf^{(T)})$ from $\Pf_\sss$ to $\Pf_\ttt$ 
such that $\Pf^{(t)} \defeq \pf^{(t)} \circ 1^{\ell(n)}$ for every $t \in [T]$,
we find $\W$ to accept every proof $\Pf^{(t)}$ with probability $1$,
implying that
$\opt_{\W}(\Pf_\sss \reco \Pf_\ttt) = 1$, as desired.

We next show the soundness.
Suppose that $\opt_{\V_L(x)}(\pf_\sss \reco \pf_\ttt) < 1-\gap$.
Let $\sq{\Pf} = (\pf^{(1)} \circ \qf^{(1)}, \ldots, \allowbreak \pf^{(T)} \circ \qf^{(T)})$
be any reconfiguration sequence from $\Pf_\sss$ to $\Pf_\ttt$ such that
$\val_{\W}(\sq{\Pf}) = \opt_{\W}(\Pf_\sss \reco \Pf_\ttt)$.
Since
$\sq{\pf} = (\pf^{(1)}, \ldots, \pf^{(T)})$
is a reconfiguration sequence from $\pf_\sss$ to $\pf_\ttt$,
we have $\val_{\V_L(x)}(\sq{\pf}) < 1-\gap$ by assumption; in particular,
there exists a proof $\pf^{(t)}$ in $\sq{\pf}$ such that
$\Pr[\V_L(x) \text{ rejects } \pf^{(t)}] > \gap$.
Since $\W$ calls $\V_L(x)$ with probability $\frac{\bal}{\gap \cdot k}$, we have
$\Pr[\W \text{ rejects } \Pf^{(t)}] > \frac{\bal}{k}$, implying that
$\opt_{\W}(\Pf_\sss \reco \Pf_\ttt) < 1-\frac{\bal}{k}$, as desired.

We finally show the ``moreover'' part.
Let $\sq{\Pf}$ be any reconfiguration sequence from $\Pf_\sss$ to $\Pf_\ttt$.
By the soundness shown above,
$\sq{\Pf}$ contains an adjacent pair of proofs,
denoted by $\Pf^\circ$ and $\Pf'$, such that
$\val_{\W}(\Pf^\circ) \geq 1-\frac{\bal}{k}$ and
$\val_{\W}(\Pf') < 1-\frac{\bal}{k}$.
Since $\Pf^\circ$ and $\Pf'$ differ in a single location,
which is queried by $\W$ with probability at most $\frac{\Delta}{2^{r(n)}} + \frac{q}{\ell(n)}$
due to \cref{clm:hard9.333:W},
we have
\begin{align}
    \bigl| \val_{\W}(\Pf^\circ) - \val_{\W}(\Pf') \bigr|
    \leq \frac{\Delta}{2^{r(n)}} + \frac{q}{\ell(n)},
\end{align}
Consequently, we derive
\begin{align}
    & \val_{\W}(\Pf^\circ)
    \leq \val_{\W}(\Pf') + \frac{\Delta}{2^{r(n)}} + \frac{q}{\ell(n)}
    \leq 1-\frac{\bal}{k} +  \frac{\Delta}{2^{r(n)}} + \frac{q}{\ell(n)}
    \underbrace{\leq}_{\text{\cref{eq:hard9.333:n}}} 1 - \frac{\bal-\delta}{k} \\
    & \implies 
    1-\frac{\bal}{k}
    \leq \Pr\bigl[\W \text{ accepts } \Pf^\circ\bigr]
    \leq 1-\frac{\bal-\delta}{k},
\end{align}
which completes the proof.
\end{proof}

\subsubsection{Horn Verifier}

Consider now the $k$-query \emph{Horn verifier} $\Vhorn$ described below.
Given oracle access to a proof $\Pf \in \zo^{2\ell(n)}$,
$\Vhorn$ generates
$(I_1, D_1)$ from $\W$,
$(I_2,D_2), \ldots, (I_{\rep},D_{\rep})$ from $\A_q$, and
$(I_{\rep+1},D_{\rep+1})$ from $\A_{k-q\rep}$, and
accepts if $I_1, \ldots, I_{\rep+1}$ are \emph{not} pairwise disjoint or
the following Horn-like condition holds:
\begin{align}
\label{eq:hard9.333:Horn:accept}
    \bigl( D_1(\Pf|_{I_1}) = 1 \bigr) \vee
    \bigl( D_2(\Pf|_{I_2}) = 0 \bigr)
    \vee \cdots \vee
    \bigl( D_{\rep+1}(\Pf|_{I_{\rep+1}}) = 0 \bigr).
\end{align}
\begin{itembox}[l]{\textbf{$k$-query Horn verifier $\Vhorn$}}
\begin{algorithmic}[1]
    \item[\textbf{Input:}]
        the all-one verifier $\A_p$,
        the combined verifier $\W$, and
        an input $x \in \zo^n$.
    \item[\textbf{Oracle access:}]
        a proof $\Pf = \pf\circ\qf \in \zo^{2\ell(n)}$.
    \State sample a random bit string $R_1 \sim \zo^{\Theta(\log n)}$ used by $\W$ uniformly at random.
    \State run $\W$ on $R_1$ to generate
    a query sequence $I_1$ and
    a circuit $D_1 \colon \zo^q \to \zo$.
    \For{\textbf{each} $2 \leq i \leq \rep+1$}
        \If{$2 \leq i \leq \rep$}
            \State sample a random bit string $R_i \sim \zo^{\Theta(\log n)}$ used by $\A_q$ uniformly at random.
            \State run $\A_q$ on $R_i$ to generate
                a query sequence $I'_i$ and
                a circuit $D_i \colon \zo^q \to \zo$.
        \Else
            \State sample a random bit string $R_i \sim \zo^{\Theta(\log n)}$ used by $\A_{k-q\rep}$ uniformly at random.
            \State run $\A_{k-q\rep}$ on $R_i$ to generate
            a query sequence $I'_i$ and
            a circuit $D_i \colon \zo^{k-q\rep} \to \zo$.
        \EndIf
        \State let $I_i$ be a query sequence
            obtained by shifting $I'_i$ by $\ell(n)$ locations so that
            $\Pf|_{I_i} = \qf|_{I'_i}$.
            \LComment{this step is required since
            $I'_i \subseteq [\ell(n)]$ while
            $\Pf|_{[\ell(n)]} = \pf$.
            }
    \EndFor
    \If{$I_1, \ldots, I_{\rep+1}$ are \emph{not} pairwise disjoint}
        \State \Return $1$.
    \ElsIf{$(D_1(\Pf|_{I_1}) = 1) \vee (D_2(\Pf|_{I_2}) = 0) \vee \cdots \vee (D_{\rep+1}(\Pf|_{I_{\rep+1}}) = 0)$}
        \State \Return $1$.
    \Else
        \State \Return $0$.
    \EndIf
\end{algorithmic}
\end{itembox}

\noindent
The randomness complexity of $\Vhorn$ is at most
$(\rep+1) \cdot \Theta(\log n) = \Theta(\log n)$, and
$\Vhorn$ queries \emph{exactly} $k$ locations of $\Pf$ whenever
$I_1, \ldots, I_{\rep+1}$ are pairwise disjoint because
$|I_1| = \cdots = |I_{\rep}| = q$ and $|I_{\rep+1}| = k-q\rep$.
We show the following completeness and soundness.

\begin{lemma}
\label{lem:hard9.333:Horn}
The following hold\textup{:}
\begin{itemize}
    \item \textup{(}Completeness\textup{)}
    If $\opt_{\V_L(x)}\bigl(\pf_\sss \reco \pf_\ttt\bigr) = 1$, then
    $\opt_{\Vhorn}\bigl(\Pf_\sss \reco \Pf_\ttt\bigr) = 1$.
    \item \textup{(}Soundness\textup{)}
    If $\opt_{\V_L(x)}\bigl(\pf_\sss \reco \pf_\ttt\bigr) < 1-\gap$, then
    $\opt_{\Vhorn}\bigl(\Pf_\sss \reco \Pf_\ttt\bigr) < 1 - \frac{1}{k} \cdot \left(\frac{q}{4}-\epsilon\right)$.
\end{itemize}
\end{lemma}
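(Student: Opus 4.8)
The plan is to prove completeness and soundness separately; soundness is where all the work lies. \textbf{Completeness} is immediate from \cref{lem:hard9.333:W}: if $\opt_{\V_L(x)}(\pf_\sss\reco\pf_\ttt)=1$, there is a reconfiguration sequence from $\Pf_\sss$ to $\Pf_\ttt$ along which $\W$ accepts every proof with probability $1$, so on each such proof the disjunct $D_1(\Pf|_{I_1})=1$ of \cref{eq:hard9.333:Horn:accept} holds with probability $1$ and $\Vhorn$ accepts with probability $1$; hence $\opt_{\Vhorn}(\Pf_\sss\reco\Pf_\ttt)=1$. For \textbf{soundness}, assume $\opt_{\V_L(x)}(\pf_\sss\reco\pf_\ttt)<1-\gap$ and fix an arbitrary reconfiguration sequence $\sq{\Pf}$ from $\Pf_\sss$ to $\Pf_\ttt$; it suffices to locate a single proof $\Pf^\circ=\pf^\circ\circ\qf^\circ$ in $\sq{\Pf}$ that $\Vhorn$ rejects with probability more than $\frac{1}{k}\bigl(\frac{q}{4}-\epsilon\bigr)$, for then $\val_{\Vhorn}(\sq{\Pf})<1-\frac{1}{k}\bigl(\frac{q}{4}-\epsilon\bigr)$ and, $\sq{\Pf}$ being arbitrary, $\opt_{\Vhorn}(\Pf_\sss\reco\Pf_\ttt)<1-\frac{1}{k}\bigl(\frac{q}{4}-\epsilon\bigr)$. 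The ``moreover'' part of \cref{lem:hard9.333:W} hands us the candidate: a proof $\Pf^\circ$ with $\rho^\circ\defeq\Pr[\W\text{ rejects }\Pf^\circ]\in\bigl[\tfrac{\bal-\delta}{k},\tfrac{\bal}{k}\bigr]$.

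Next I would unpack the rejection event. Negating \cref{eq:hard9.333:Horn:accept}, $\Vhorn$ rejects $\Pf^\circ$ exactly when $I_1,\ldots,I_{\rep+1}$ are pairwise disjoint, $D_1(\Pf^\circ|_{I_1})=0$ (that is, $\W$ rejects), and $D_i(\Pf^\circ|_{I_i})=1$ for every $i\in\{2,\ldots,\rep+1\}$ (that is, each of the $\rep$ all-one verifiers---the $\rep-1$ copies of $\A_q$ and the one copy of $\A_{k-q\rep}$---accepts $\qf^\circ$). These $\rep+1$ events are determined by mutually independent random strings $R_1,\ldots,R_{\rep+1}$, so dropping the pairwise-disjointness requirement makes the joint probability factorize. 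Writing $\eta\defeq\Pr[\A_q\text{ rejects }\qf^\circ]$ and using that $\A_{k-q\rep}$ issues at most $q-1<q$ queries---hence, by monotonicity of the all-one verifier's rejection probability in its query count, rejects $\qf^\circ$ with probability at most $\eta$---one obtains
\begin{align}
    \Pr\bigl[\Vhorn\text{ rejects }\Pf^\circ\bigr]
    \;\geq\; \rho^\circ\,(1-\eta)^{\rep}\;-\;\Pr\bigl[I_1,\ldots,I_{\rep+1}\text{ not pairwise disjoint}\bigr].
\end{align}

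It then remains to bound the two terms. Because $\W$ runs $\A_q$ on $\qf^\circ$ with probability $1-\frac{\bal}{\gap\cdot k}$, we have $\rho^\circ\geq\bigl(1-\frac{\bal}{\gap\cdot k}\bigr)\eta$, which with $\rho^\circ\leq\frac{\bal}{k}$ gives $\eta\leq\frac{\bal}{k-\bal/\gap}$; the definition of $\rep_0(\epsilon,s,q)$ in \cref{eq:hard9.333:rep_0} is calibrated precisely so that $k\geq q\cdot\rep_0$ forces both $\frac{\bal}{\gap\cdot k}<1$ (so that $\W$ is well defined) and $\eta\leq\frac{\bal+\delta}{k}$. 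Since $\rep=\lfloor k/q\rfloor\leq k/q$, Bernoulli's inequality then gives $(1-\eta)^{\rep}\geq1-\rep\eta\geq1-\frac{\bal+\delta}{q}$. For the disjointness term, \cref{clm:hard9.333:W} (together with the definition of $\A_p$) bounds the probability that any fixed location lies in a prescribed one of the query sequences $I_1,\ldots,I_{\rep+1}$ by $\frac{\Delta}{2^{r(n)}}+\frac{q}{\ell(n)}$, so a union bound over the $\binom{\rep+1}{2}$ pairs combined with the second line of \cref{eq:hard9.333:n} makes the term at most $\frac{\delta}{k}\bigl(1-\frac{\bal+\delta}{q}\bigr)$. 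Substituting $\rho^\circ\geq\frac{\bal-\delta}{k}$,
\begin{align}
    \Pr\bigl[\Vhorn\text{ rejects }\Pf^\circ\bigr]
    \;\geq\;\frac{\bal-\delta}{k}\Bigl(1-\frac{\bal+\delta}{q}\Bigr)-\frac{\delta}{k}\Bigl(1-\frac{\bal+\delta}{q}\Bigr)
    \;=\;\frac{\bal-2\delta}{k}\Bigl(1-\frac{\bal+\delta}{q}\Bigr),
\end{align}
and with $\bal=\frac{q}{2}$, $\delta=\frac{\epsilon}{4}$ the target $\frac{\bal-2\delta}{k}\bigl(1-\frac{\bal+\delta}{q}\bigr)>\frac{1}{k}\bigl(\frac{q}{4}-\epsilon\bigr)$ reduces to the elementary estimate $\epsilon>\frac{3\delta}{2}-\frac{2\delta^2}{q}$, true because $\frac{3\delta}{2}=\frac{3\epsilon}{8}<\epsilon$. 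This finishes the plan.

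The main obstacle I anticipate is the factorization step: one must verify carefully that the $\ell(n)$-shift applied to the $\A$-blocks' query sequences and the internal randomized branch of $\W$ (querying the $\pf^\circ$-half versus the $\qf^\circ$-half of $\Pf^\circ$) really leave the events $\{D_1(\Pf^\circ|_{I_1})=0\}$ and $\{D_i(\Pf^\circ|_{I_i})=1\}$ as functions of disjoint blocks of randomness, so that the factorization into $\rho^\circ(1-\eta)^{\rep}$ is legitimate after the (negligible) non-disjointness event has been discarded, and that \cref{clm:hard9.333:W}'s per-location bound is the right quantity to union-bound over inside $\Vhorn$. The remaining ingredients---extracting $\eta\leq\frac{\bal+\delta}{k}$ from \cref{eq:hard9.333:rep_0} (the ``crucial'' use of $\rep_0$ foretold in the footnote), bounding the disjointness loss via \cref{eq:hard9.333:n}, and the closing elementary inequality---are routine once $\bal=q/2$, $\delta=\epsilon/4$, and the lower bound on the input length $n$ are fixed consistently.
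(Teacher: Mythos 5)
Your proposal is correct and follows essentially the same route as the paper's proof: locate $\Pf^\circ$ via the ``moreover'' part of \cref{lem:hard9.333:W}, factor the negated acceptance condition using the independence of $R_1,\ldots,R_{\rep+1}$, bound $\Pr[\A_q \text{ rejects } \qf^\circ]$ by $\frac{\bal}{k-\bal/\gap}\leq\frac{\bal+\delta}{k}$ via \cref{eq:hard9.333:rep_0}, absorb the non-disjointness event using \cref{clm:hard9.333:W} and \cref{eq:hard9.333:n}, and finish with the same elementary computation at $\bal=\frac{q}{2}$, $\delta=\frac{\epsilon}{4}$. The only (immaterial) differences are that you derive the $\A_q$ bound directly from $\rho^\circ\geq(1-\frac{\bal}{\gap k})\eta$ where the paper argues by contradiction (\cref{clm:hard9.333:Horn:A}), and you use the product $(1-\eta)^{\rep}$ with Bernoulli where the paper uses a union bound, both yielding $1-\rep\eta$.
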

\begin{proof} 
We first show the completeness.
Suppose that $\opt_{\V_L(x)}(\pf_\sss \reco \pf_\ttt) = 1$.
By \cref{lem:hard9.333:W}, we have
$\opt_{\W}(\Pf_\sss \reco \Pf_\ttt) = 1$.
By the definition of $\Vhorn$,
for any reconfiguration sequence $\sq{\Pf}$ from $\Pf_\sss$ to $\Pf_\ttt$,
it holds that $\val_{\Vhorn}(\sq{\Pf}) \geq \val_{\W}(\sq{\Pf})$, which implies
$\opt_{\Vhorn}(\Pf_\sss \reco \Pf_\ttt) = 1$, as desired.

We next show the soundness.
Suppose that $\opt_{\V_L(x)}(\pf_\sss \reco \pf_\ttt) < 1-\gap$.
Let $\sq{\Pf}$ be any reconfiguration sequence from $\Pf_\sss$ to $\Pf_\ttt$ such that
$\val_{\Vhorn}(\sq{\Pf}) = \opt_{\Vhorn}(\Pf_\sss \reco \Pf_\ttt)$.
By \cref{lem:hard9.333:W}, $\sq{\Pf}$ contains a proof
$\Pf^\circ = \pf^\circ \circ \qf^\circ$ such that
\begin{align}
\label{eq:hard9.333:Horn:Pfcirc}
    1-\frac{\bal}{k}
    \leq \Pr\bigl[ \W \text{ accepts } \Pf^\circ \bigr]
    \leq 1-\frac{\bal-\delta}{k}.
\end{align}
We shall estimate $\Vhorn$'s rejection probability on $\Pf^\circ$.
Since $R_1, \ldots, R_{\rep+1}$ are mutually independent,
we derive the probability that \cref{eq:hard9.333:Horn:accept} does not hold as follows:
\begin{align}
\label{eq:hard9.333:Horn:vee}
\begin{aligned}
    & \Pr_{R_1,\ldots,R_{\rep+1}}\Bigl[
        \bigl( D_1(\Pf^\circ|_{I_1}) = 1 \bigr) \vee
        \bigl( D_2(\Pf^\circ|_{I_2}) = 0 \bigr)
        \vee \cdots \vee
        \bigl( D_{\rep+1}(\Pf^\circ|_{I_{\rep+1}}) = 0 \bigr)
        \text{ is \emph{not} true}
    \Bigr] \\
    & = \Pr_{R_1,\ldots,R_{\rep+1}}\Bigl[
        \bigl( D_1(\Pf^\circ|_{I_1}) = 0 \bigr) \wedge
        \bigl( D_2(\Pf^\circ|_{I_2}) = 1 \bigr)
        \wedge \cdots \wedge
        \bigl( D_{\rep+1}(\Pf^\circ|_{I_{\rep+1}}) = 1 \bigr)
    \Bigr] \\
    & = \Pr_{R_1}\bigl[ D_1(\Pf^\circ|_{I_1}) = 0 \bigr] \cdot
    \left(
        1 - \Pr_{R_2, \ldots, R_{\rep+1}}\Bigl[
            \bigl( D_2(\Pf^\circ|_{I_2}) = 0 \bigr)
            \vee \cdots \vee
            \bigl( D_{\rep+1}(\Pf^\circ|_{I_{\rep+1}}) = 0 \bigr)
        \Bigr]
    \right) \\
    & \geq \Pr_{R_1}\bigl[ D_1(\Pf^\circ|_{I_1}) = 0 \bigr] \cdot
    \left(1 - \sum_{2 \leq i \leq \rep+1}\Pr_{R_i}\bigl[ D_i(\Pf^\circ|_{I_i}) = 0 \bigr] \right) \\
    & \geq \Pr\bigl[ \W \text{ rejects } \Pf^\circ \bigr] \cdot 
        \Bigl(
            1 - \rep \cdot \Pr\bigl[\A_q \text{ rejects } \qf^\circ \bigr]
        \Bigr),
\end{aligned}
\end{align}
where the last inequality used the fact that
\begin{align}
    \Pr_{R_{\rep+1}}\bigl[ D_{\rep+1}(\Pf^\circ|_{I_{\rep+1}}) = 0 \bigr]
    = \Pr\bigl[\A_{k-q\rep} \text{ rejects } \qf^\circ \bigr]
    \underbrace{\leq}_{k-q\rep \leq q} \Pr\bigl[\A_q \text{ rejects } \qf^\circ \bigr].
\end{align}

In the last line of \cref{eq:hard9.333:Horn:vee},
$\W$'s rejection probability is bounded from below by \cref{eq:hard9.333:Horn:Pfcirc}, whereas
$\A_q$'s rejection probability is bounded from above by the following claim.
\begin{claim}
\label{clm:hard9.333:Horn:A}
    It holds that
    \begin{align}
        \Pr\bigl[\A_q \text{ rejects } \qf^\circ\bigr]
        \leq \frac{\bal}{k-\frac{\bal}{\gap}}.
    \end{align}
\end{claim}
\begin{proof} 
Suppose that $\Pr[\A_q \text{ rejects } \qf^\circ] > \frac{\bal}{k-\frac{\bal}{\gap}}$
for contradiction.
Then, we have
\begin{align}
\begin{aligned}
    \Pr\bigl[ \W \text{ accepts } \Pf^\circ \bigr]
    & = \frac{\bal}{\gap \cdot k} \cdot \Pr\bigl[\V_L(x) \text{ accepts } \pf^\circ\bigr]
    + \left(1-\frac{\bal}{\gap \cdot k}\right) \cdot \Pr\bigl[\A_q \text{ accepts } \qf^\circ\bigr] \\
    & < \frac{\bal}{\gap \cdot k} \cdot 1
    + \left(1-\frac{\bal}{\gap \cdot k}\right) \cdot \left(1-\frac{\bal}{k-\frac{\bal}{\gap}}\right) \\
    & = 1 - \left(1-\frac{\bal}{\gap \cdot k}\right) \cdot \frac{\bal}{k-\frac{\bal}{\gap}} \\
    & = 1-\frac{\bal}{k}.
\end{aligned}
\end{align}
On the other hand,
$\Pr[\W \text{ accepts } \Pf^\circ] \geq 1-\frac{\bal}{k}$ by \cref{eq:hard9.333:Horn:Pfcirc},
which is a contradiction.
\end{proof}

By the definition of $\rep_0(\epsilon,s,q)$ in \cref{eq:hard9.333:rep_0}, we have
\begin{align}
    & k \geq q \cdot \rep_0(\epsilon,s,q)
    \underbrace{\geq}_{\text{\cref{eq:hard9.333:rep_0}}} \frac{\bal\cdot(\bal+\delta)}{\delta} \cdot \frac{1}{\gap} \\
    & \implies \frac{\bal}{k-\frac{\bal}{\gap}} \leq \frac{\bal+\delta}{k}.
    \label{eq:hard9.333:Horn:k}
\end{align}
Combining \cref{eq:hard9.333:Horn:Pfcirc,eq:hard9.333:Horn:vee,clm:hard9.333:Horn:A,eq:hard9.333:Horn:k},
we obtain
\begin{align}
\begin{aligned}
    & \Pr_{R_1,\ldots,R_{\rep+1}}\Bigl[
        \bigl( D_1(\Pf^\circ|_{I_1}) = 1 \bigr) \vee
        \bigl( D_2(\Pf^\circ|_{I_2}) = 0 \bigr)
        \vee \cdots \vee
        \bigl( D_{\rep+1}(\Pf^\circ|_{I_{\rep+1}}) = 0 \bigr)
    \Bigr] \\
    & = 1 - \Pr_{R_1,\ldots,R_{\rep+1}}\Bigl[
        \bigl( D_1(\Pf^\circ|_{I_1}) = 1 \bigr) \vee
        \bigl( D_2(\Pf^\circ|_{I_2}) = 0 \bigr)
        \vee \cdots \vee
        \bigl( D_{\rep+1}(\Pf^\circ|_{I_{\rep+1}}) = 0 \bigr)
        \text{ is \emph{not} true}
    \Bigr] \\
    & \leq 1 - \underbrace{\Pr\bigl[ \W \text{ rejects } \Pf^\circ \bigr]}_{\geq \frac{\bal-\delta}{k}} \cdot 
        \Bigl(
            1 - \rep \cdot \underbrace{\Pr\bigl[\A_q \text{ rejects } \qf^\circ \bigr]}_{\leq \frac{\bal}{k-\frac{\bal}{\gap}}}
        \Bigr) \\
    & \leq 1 - \frac{\bal-\delta}{k} \cdot \Biggl(
        1-\rep \cdot \underbrace{\frac{\bal}{k-\frac{\bal}{\gap}}}_{\leq \frac{\bal+\delta}{k}}
    \Biggr) \\
    & \underbrace{\leq}_{k \geq q\rep} 1 - \frac{\bal-\delta}{k} \cdot \left(1-\frac{\bal+\delta}{q}\right).
\end{aligned}
\end{align}

Observe also that
$I_1, \ldots, I_{\rep+1}$ are \emph{not} pairwise disjoint with probability
\begin{align}
\begin{aligned}
    \Pr_{R_1,\ldots,R_{\rep+1}}\bigl[
        I_1, \ldots, I_{\rep+1} \text{ are not pairwise disjoint}
    \bigr]
    & \leq \sum_{i \neq j}
    \Pr_{R_i,R_j}\bigl[
        I_i \text{ and } I_j \text{ are not disjoint}
    \bigr] \\
    & \leq \sum_{i \neq j} q \cdot \left(\frac{\Delta}{2^{r(n)}} + \frac{q}{\ell(n)}\right) \\
    & \leq q\cdot(\rep+1)^2 \cdot \left(\frac{\Delta}{2^{r(n)}} + \frac{q}{\ell(n)}\right) \\
    & \underbrace{\leq}_{\text{\cref{eq:hard9.333:n}}}
        \frac{\delta}{k} \cdot \left(1-\frac{\bal+\delta}{q}\right),
\end{aligned}
\end{align}
where the second inequality holds
because each proof location is queried by $\W$, $\A_q$, and $\A_{k-q\rep}$ with probability
at most $\frac{\Delta}{2^{r(n)}} + \frac{q}{\ell(n)}$
owing to \cref{clm:hard9.333:W}.

Consequently, we evaluate $\Vhorn$'s rejection probability on $\Pf^\circ$ as follows:
\begin{align}
\label{eq:hard9.333:Horn:reject}
\begin{aligned}
    & \Pr\bigl[ \Vhorn \text{ rejects } \Pf^\circ \bigr] \\
    & \geq 1 - \Pr_{R_1,\ldots,R_{\rep+1}}\bigl[
        I_1, \ldots, I_{\rep+1} \text{ are not pairwise disjoint}
    \bigr] \\
    & \quad\quad - \Pr_{R_1,\ldots,R_{\rep+1}}\Bigl[
        \bigl( D_1(\Pf^\circ|_{I_1}) = 1 \bigr) \vee
        \bigl( D_2(\Pf^\circ|_{I_2}) = 0 \bigr)
        \vee \cdots \vee
        \bigl( D_{\rep+1}(\Pf^\circ|_{I_{\rep+1}}) = 0 \bigr)
    \Bigr] \\
    & \geq 1
    - \frac{\delta}{k} \cdot \left(1-\frac{\bal+\delta}{q}\right)
    - \left(1 - \frac{\bal-\delta}{k} \cdot \left(1-\frac{\bal+\delta}{q}\right)\right) \\
    & = \frac{\bal-2\delta}{k} \cdot \left(1-\frac{\bal+\delta}{q}\right) \\
    & > \frac{1}{k} \cdot \left(\frac{4}{q}-\epsilon\right),
\end{aligned}
\end{align}
where the last inequality can be shown as follows:
\begin{align}
    (\bal-2\delta) \cdot \left(1-\frac{\bal+\delta}{q}\right)
    \underbrace{=}_{\bal = \frac{q}{2} \text{ and } \delta = \frac{\epsilon}{4}}
        \left(\frac{q}{2} - \frac{\epsilon}{2}\right) \cdot \left(\frac{1}{2} - \frac{\epsilon}{4q}\right)
    = \frac{q}{4} - \frac{\epsilon}{4} - \frac{\epsilon}{8} + \frac{\epsilon^2}{8 \cdot q}
    > \frac{q}{4}-\epsilon,
\end{align}
which completes the proof.
\end{proof}

\begin{remark}
The choice of $\bal$ comes from the fact that
assuming that $\delta = 0$,
the second-to-last line of \cref{eq:hard9.333:Horn:reject}
is maximized when $\bal = \frac{q}{2}$\textup{;} namely,
\begin{align}
\begin{aligned}
    \frac{\partial}{\partial \bal} \left( \frac{\bal}{k} \cdot \left(1-\frac{\bal}{q}\right) \right) = 0
    \implies \bal = \frac{q}{2}.
\end{aligned}
\end{align}
\end{remark}

\subsubsection{Emulating the Horn Verifier}

Here, we emulate the Horn verifier $\Vhorn$ by an E$k$-CNF formula.
Recall that $\Vhorn$'s acceptance condition is the following:
\begin{align}
\label{eq:hard9.333:SAT:pos}
    \bigl( D_1(\Pf|_{I_1}) = 1 \bigr) \vee
    \bigl( D_2(\Pf|_{I_2}) = 0 \bigr)
    \vee \cdots \vee
    \bigl( D_{\rep+1}(\Pf|_{I_{\rep+1}}) = 0 \bigr).
\end{align}
Consider first the following $k$-query \emph{\scOR-predicate} verifier $\X$ obtained by modifying $\Vhorn$.
\begin{itembox}[l]{\textbf{$k$-query \scOR-predicate verifier $\X$ emulating $\Vhorn$}}
\begin{algorithmic}[1]
    \item[\textbf{Input:}]
        the $k$-query Horn verifier $\Vhorn$ and
        an input $x \in \zo^n$.
    \item[\textbf{Oracle access:}]
        a proof $\Pf \in \zo^{2\ell(n)}$.
    \State run $\Vhorn$ to generate
        $\rep+1$ query sequences $I_1, \ldots, I_{\rep+1}$ and
        $\rep+1$ circuits $D_1, \ldots, D_{\rep+1}$.
    \If{$I_1, \ldots, I_{\rep+1}$ are not pairwise disjoint}
        \State \Return $1$.
    \EndIf
    \State let $I \defeq \bigcup_{1 \leq i \leq {\rep+1}} I_i$.  \Comment{$|I| = k$.}
    \State sample a partial proof
    $\tilde{\Pf} \in \zo^I$
    that violates \cref{eq:hard9.333:SAT:pos} uniformly at random; namely,
    \begin{align}
    \label{eq:hard9.333:SAT:neg}
        \bigl( D_1(\tilde{\Pf}|_{I_1}) = 0 \bigr) \wedge
        \bigl( D_2(\tilde{\Pf}|_{I_2}) = 1 \bigr) \wedge \cdots \wedge
        \bigl( D_{\rep+1}(\tilde{\Pf}|_{I_{\rep+1}}) = 1 \bigr).\vspace{1cm}
    \end{align}
    \If{$\Pf|_I \neq \tilde{\Pf}$}  
        \State \Return $1$.
    \Else
        \State \Return $0$.
    \EndIf
\end{algorithmic}
\end{itembox}

\noindent
Note that $\X$ queries \emph{exactly} $k$ locations of the proof $\Pf$.
Since $D_1$ rejects at most $2^q-1$ strings,
$D_2, \ldots, D_{\rep}$ accept only $1^q$, and
$D_{\rep+1}$ accepts only $1^{k-q\rep}$,
the number of partial proofs $\tilde{\Pf} \in \zo^I$
such that \cref{eq:hard9.333:SAT:neg} holds is 
\begin{align}
    \bigl| D_1^{-1}(0) \times D_2^{-1}(1) \times \cdots \times D_{\rep+1}^{-1}(1) \bigr|
    = \bigl|D_1^{-1}(0)\bigr| \cdot \underbrace{1 \cdots 1}_{\rep \text{ times}}
    \leq 2^q-1,
\end{align}
where $D^{-1}(b) \defeq \{f \in \zo^q \mid D(f) = b\}$ for a circuit $D \colon \zo^q \to \zo$.
Conditioned on the event that \cref{eq:hard9.333:SAT:pos} does not hold,
$\X$ rejects $\Pf$ with probability at least $\frac{1}{2^q-1}$.
One can emulate $\X$ by an E$k$-CNF formula $\phi$ generated by the following procedure.

\begin{itembox}[l]{\textbf{Construction of an E$k$-CNF formula $\phi$ emulating $\X$}}
\begin{algorithmic}[1]
    \item[\textbf{Input:}]
        the $k$-query Horn verifier $\Vhorn$ and
        an input $x \in \zo^n$.
    \State let $\phi$ be an empty formula over $2\ell(n)$ variables, denoted by $x_1, \ldots, x_{2\ell(n)}$.
    \For{\textbf{each} random bit string $R \in \zo^{\Theta(\log n)}$ used by $\Vhorn$}
        \State run $\Vhorn$ on $R$ to generate
        $\rep+1$ query sequences $I_1, \ldots, I_{\rep+1}$ and
        $\rep+1$ circuits $D_1, \ldots, D_{\rep+1}$.
        \If{$I_1, \ldots, I_{\rep+1}$ are pairwise disjoint}
            \State let $I \defeq \bigcup_{1 \leq i \leq {\rep+1}} I_i$.  \Comment{$|I| = k$.}
            \For{\textbf{each} partial proof $\tilde{\Pf} \in \zo^I$
                such that \cref{eq:hard9.333:SAT:neg} holds}
                \label{linum:hard9.333:SAT:start}
                \LComment{there are at most $2^q-1$ partial proofs $\tilde{\Pf}$ in total.}
                \State generate a clause $C_{\tilde{\Pf}}$ that enforces $(x_i)_{i \in I} \neq \tilde{\Pf}$; namely,
                \begin{align}
                    C_{\tilde{\Pf}} \defeq \bigvee_{i \in I} \bigl\llbracket x_i \neq \tilde{\Pf}(i) \bigr\rrbracket,
                    \text{ where }
                    \bigl\llbracket x_i \neq \tilde{\Pf}(i) \bigr\rrbracket \defeq
                    \begin{cases}
                        x_i & \text{if } \tilde{\Pf}(i) = 0, \\
                        \bar{x_i} & \text{if } \tilde{\Pf}(i) = 1.
                    \end{cases}
                \end{align}
                \State add $C_{\tilde{\Pf}}$ into $\phi$.
                \label{linum:hard9.333:SAT:end}
            \EndFor
        \EndIf
    \EndFor
    \State \textbf{return} $\phi$.
\end{algorithmic}
\end{itembox}

\noindent
The above construction of $\phi$ runs in polynomial time in $n$.

We are now ready to complete the proof of \cref{lem:hard9.333:main}.
\begin{proof}[Proof of \cref{lem:hard9.333:main}]
We first show the completeness.
Suppose that $x \in L$; i.e., $\opt_{\V_L(x)}(\pf_\sss \reco \pf_\ttt) = 1$.
By \cref{lem:hard9.333:Horn},
we have $\opt_{\Vhorn}(\Pf_\sss \reco \Pf_\ttt) = 1$,
which implies that
$\opt_\phi(\Pf_\sss \reco \Pf_\ttt) = 1$
due to the construction of $\phi$.

We next show the soundness.
Suppose that $x \notin L$; i.e., $\opt_{\V_L(x)}(\pf_\sss \reco \pf_\ttt) < 1-\gap$.
Let $\sq{\Pf}$ be any reconfiguration sequence from $\Pf_\sss$ to $\Pf_\ttt$
such that $\val_\phi(\sq{\Pf}) = \opt_\phi(\Pf_\sss \reco \Pf_\ttt)$.
By \cref{lem:hard9.333:Horn},
$\sq{\Pf}$ contains a proof $\Pf^\circ$ such that
\begin{align}
    \Pr\bigl[\Vhorn \text{ rejects } \Pf^\circ\bigr]
    > \frac{1}{k} \cdot \left(\frac{q}{4}-\epsilon\right).
\end{align}
Conditioned on $I_1, \ldots, I_{\rep+1}$ and $D_1, \ldots, D_{\rep+1}$
such that \cref{eq:hard9.333:SAT:pos} does not hold on $\Pf^\circ$,
we have
\begin{align}
    \Pr_{\tilde{\Pf}}\bigl[
        \Pf^\circ|_I = \tilde{\Pf} \bigm|
        I_1, \ldots, I_{\rep+1}, D_1, \ldots, D_{\rep+1},\;
        \text{and \cref{eq:hard9.333:SAT:pos} does not hold}
    \bigr]
    \geq \frac{1}{2^q-1}.
\end{align}
Therefore,
exactly one of the (at most) $2^q-1$ clauses
generated in lines~\ref{linum:hard9.333:SAT:start}--\ref{linum:hard9.333:SAT:end}
of the construction of $\phi$
must be violated by $\Pf^\circ$.
Consequently, we derive
\begin{align}
    & 1 - \val_\phi(\Pf^\circ)
    \geq \Pr\bigl[\Vhorn \text{ rejects } \Pf^\circ\bigr] \cdot \frac{1}{2^q-1}
    > \frac{1}{(2^q-1) \cdot k} \cdot \left(\frac{q}{4}-\epsilon\right) \\
    & \implies \opt_\phi\bigl(\Pf_\sss \reco \Pf_\ttt\bigr)
    < 1 - \frac{1}{(2^q-1) \cdot k} \cdot \left(\frac{q}{4}-\epsilon\right),
\end{align}
which accomplishes the proof.
\end{proof}

\appendix

\section{$\NP$-hardness of \texorpdfstring{$\left(1-\frac{1}{8k}\right)$-factor}{(1-1/8k)-factor} Approximation}
\label{sec:NP}

Here, we give a simple proof that \MMkSATReconf is $\NP$-hard to approximate within a factor of $1-\frac{1}{8k}$ for every $k \geq 3$.

\begin{theorem}
\label{thm:NP}
For any integer $k \geq 3$,
\prb{Gap$_{1,1-\frac{1}{8k}}$ \kSATReconf} is $\NP$-hard.
In particular,
\MMkSATReconf is $\NP$-hard to approximate within a factor of $1-\frac{1}{8k}$ for every integer $k \geq 3$.
\end{theorem}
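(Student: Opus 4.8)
The plan is to reuse the verifier-to-E$k$-CNF pipeline of \cref{sec:hard9.333}, but starting from an \NP-complete source in place of a \PSPACE-complete one; the $\Theta(1/k)$ loss will again come purely from the Horn amplification, and the constant in front will be H\r{a}stad's optimal $\tfrac18$ for \prb{Max E3-SAT} rather than the small constant delivered by the PCRP theorem. The key reason the \NP\ setting is easier is that \prb{Max E3-SAT} hardness gives \emph{per-proof} soundness, not merely path-wise soundness.

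The source is fixed as follows. By \cite[Theorems~6.5 and~6.14]{hastad2001some}, together with a routine occurrence-reduction (replace each variable by an expander of copies glued with equality clauses; perfect completeness is preserved and, with a large enough bounded-degree expander, the gap shrinks only by a $(1-o(1))$ factor), there is an \NP-complete language $L$ with a $3$-query PCP verifier $\V_0$ of randomness $\Theta(\log n)$, perfect completeness, soundness $1-\gamma_0$ with $\gamma_0$ as close to $\tfrac18$ as we wish, and constant degree $\Delta$. What matters is that if $x\notin L$ then \emph{every} proof is rejected by $\V_0$ with probability $>\gamma_0$, so the work done in \cref{lem:hard9.333:W} and \cref{lem:hard9.333:Horn} to produce a suitably-rejecting proof is, here, just an application of $\V_0$'s soundness to one assignment.

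Next I would run the \emph{combined verifier $\to$ Horn verifier $\to$ \scOR-predicate verifier $\to$ E$k$-CNF} construction of \cref{sec:hard9.333} on $\V_0$, with one addition: a proof is split into an ``assignment'' part $\asg$ (carrying a $\V_0$-proof) and a \emph{dummy gadget}; the combined verifier $\W$ runs $\V_0(x)$ on $\asg$ with probability $\tfrac{\bal}{\gamma_0 k}$ and otherwise probes the gadget; the Horn verifier amplifies with $\rep=\lfloor k/q\rfloor$ repetitions, so it has only $\bigO(1)$ rejecting local views and hence the encoded E$k$-CNF $\phi$ does not pay the exponential-in-$k$ price that the naive ``append all $2^{k-3}$ masks to each clause'' route would. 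The gadget is built so that: (i) it has two designated states, taken as $\asg_\sss$ and $\asg_\ttt$, at which the Horn disjunction fires unconditionally, so $\phi$ is satisfied there and the instance is legitimate regardless of whether $x\in L$; (ii) if $x\in L$, one first moves $\asg$ to a valid $\V_0$-proof $\asg^\ast$ under the protection of the gadget, sweeps the gadget from its $\asg_\sss$-state to its $\asg_\ttt$-state (whenever it stops protecting, the live $\V_0$-check sees $\asg^\ast$ and accepts), then resets $\asg$, giving $\opt_\phi(\asg_\sss\reco\asg_\ttt)=1$; (iii) if $x\notin L$, any reconfiguration from $\asg_\sss$ to $\asg_\ttt$ is forced through an assignment where the gadget no longer protects and the $\V_0$-check runs on an essentially arbitrary $\asg$, and---using the degree bound $\Delta$ to control per-step changes exactly as in the ``moreover'' part of \cref{lem:hard9.333:W}---one can locate such an assignment on which $\W$'s rejection probability is pinned near $\tfrac{\bal}{k}$. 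Since the $\rep-1$ gadget-probes in the Horn verifier use independent coins, the Horn rejection probability there is exactly $\Pr[\W\text{ rejects}]\cdot(1-a)^{\rep-1}$ with $a=\bigO(\tfrac{\bal}{k})$, hence $\gtrsim\tfrac{\bal}{k}e^{-\bal/q}$; optimizing over $\bal$ gives $\bal\approx q$, a Horn rejection probability $\gtrsim\tfrac{q}{ek}$, and a violated fraction of $\phi$ that is $\gtrsim\tfrac{q}{e(2^q-1)k}$, which for $q=3$ exceeds $\tfrac{1}{8k}$ since $\tfrac{3}{7e}>\tfrac18$. This uses the \emph{exact} product bound $(1-a)^{\rep-1}$ rather than the union bound of \cref{lem:hard9.333:Horn}, which is what upgrades the $\tfrac{q}{4k}$ there to the $\tfrac{q}{ek}$ needed here.

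The crux is the dummy gadget, which must do three things that pull against each other: guarantee an always-firing local view (so $\phi$ is unconditionally satisfiable, making the instance legitimate), guarantee that no reconfiguration from $\asg_\sss$ to $\asg_\ttt$ can keep it ``protecting'' the whole way (so the soundness cut is unavoidable), and sit at just the right rejection rate at that cut so that the \emph{non-monotone} Horn estimate still pays---a gadget that ``over-rejects'' at the cut would make the Horn verifier accept too often there and kill the bound. Once such a gadget is in place the rest is the bookkeeping of \cref{sec:hard9.333}, the value $\tfrac18$ is inherited verbatim from H\r{a}stad, and the finitely many small widths $k\in\{3,4\}$---where the combining probability $\tfrac{\bal}{\gamma_0 k}$ cannot be kept below $1$ at the optimal $\bal$---are disposed of separately by a direct argument specific to those fixed clause widths.
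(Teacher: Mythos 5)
Your plan takes a genuinely different route from the paper (which proves \cref{thm:NP} by a direct, verifier-free reduction), but as written it has two concrete gaps. The first is that the ``dummy gadget'' is where the entire proof lives, and it is left unconstructed; worse, the natural instantiation inherited from \cref{sec:hard9.333} (the all-one verifier with $1^{\ell(n)}$ as the live configuration) provably cannot work in the \NP{} setting. In the \PSPACE{} reduction the endpoints $\Pf_\sss=\pf_\sss\circ 1^{\ell(n)}$ and $\Pf_\ttt=\pf_\ttt\circ 1^{\ell(n)}$ satisfy the constructed formula because the PCRP proofs $\pf_\sss,\pf_\ttt$ are accepted with probability $1$ even when $x\notin L$; with an \NP{} source no proof is accepted with probability $1$ on \No{} inputs, so (as you note) both endpoints must sit in gadget states that fire the Horn disjunction unconditionally. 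But then soundness needs every path between the two protected endpoint states to leave the protected region, and for an all-one-type gadget the protected region (strings on which the dummy check rejects with large probability, i.e.\ strings with many zeros) is connected: a reconfiguration sequence can park the gadget at $0^{\ell(n)}$, morph the $\V_0$-part arbitrarily, and never violate a clause, killing soundness. A gadget with two protected states separated by an unavoidable barrier whose rejection rate at the barrier is automatically pinned at the right level is exactly the Horn-clause gadget $H_1,\ldots,H_K$ on $k-3$ fresh variables in the paper's own proof of \cref{lem:NP:kSATReconf} --- and once you have that gadget, the whole pipeline (degree reduction, $\W$, $\Vhorn$, $\X$, and the pinning argument of \cref{lem:hard9.333:W}, which in any case does not transfer verbatim because your endpoints no longer have $\W$-acceptance $1$) is superfluous: one can simply append $C_j\vee H_i$ to every clause, which is what the paper does.

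The second gap is the coverage of small $k$. With $\gamma_0\approx\frac18$ and your optimized $\bal\approx q=3$, the combining probability is $\frac{\bal}{\gamma_0 k}\approx\frac{24}{k}$, so the construction is vacuous for all $k<24$, not merely $k\in\{3,4\}$; on top of that, the analogue of the requirement $k\geq q\cdot\rep_0$ from \cref{eq:hard9.333:rep_0} and the slack needed so that $\frac{3}{7\rme k}\approx\frac{0.158}{k}$ survives the $\delta$-pinning, disjointness, and degree-reduction losses (the expander sparsification costs a constant factor in the gap, not $1-o(1)$) push the threshold to a much larger constant. Since \cref{thm:NP} claims hardness for every $k\geq 3$, you owe a uniform argument for the whole intermediate range of widths, which your proposal does not supply; the paper handles it with the single elementary reduction of \cref{lem:NP:kSATReconf} for all $k\geq 5$ (yielding exactly $1-\frac{1}{8k}$ after setting $\epsilon=\frac{3}{k}$), plus the bespoke gadgets of \cref{lem:NP:3SATReconf,lem:NP:4SATReconf} for $k=3,4$.
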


To prove \cref{thm:NP},
we first present a gap-preserving reduction from
\prb{Max E3-SAT} to \MMkSATReconf for every $k \geq 5$,
which is based on \cite[Theorem~5]{ito2011complexity}.

\begin{lemma}
\label{lem:NP:kSATReconf}
For any integer $k \geq 5$ and
any real $\delta > 0$,
there exists a polynomial-time reduction from
\prb{Gap$_{1,1-\delta}$ E3-SAT} to
\prb{Gap$_{1,1-\frac{\delta}{k-3}}$ \kSATReconf}.
Therefore,
\prb{Gap$_{1,1-\frac{1-\epsilon}{8(k-3)}}$ \kSATReconf} is $\NP$-hard
for any real $\epsilon > 0$.
\end{lemma}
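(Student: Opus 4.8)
The plan is to adapt the reduction of \citet[Theorem~5]{ito2011complexity} to the E$k$-SAT setting by attaching a small ``control gadget'' on $k-3$ fresh variables to every clause. Given an instance $\varphi = C_1 \wedge \cdots \wedge C_m$ of \prb{Gap$_{1,1-\delta}$ E3-SAT} over $x_1,\ldots,x_n$, I would introduce new variables $z_1,\ldots,z_{k-3}$ (there are at least two of them since $k \geq 5$), set
\[
  D_{j,i} \defeq C_j \vee z_i \vee \bigvee_{i' \in [k-3] \setminus \{i\}} \bar{z_{i'}}
  \qquad (j \in [m],\ i \in [k-3]),
\]
and let $\psi \defeq \bigwedge_{j,i} D_{j,i}$, which is an E$k$-CNF formula because each $D_{j,i}$ has $3 + 1 + (k-4) = k$ literals over $k$ distinct variables. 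The endpoints would be $\asg_\sss \defeq 0^n \circ 0^{k-3}$ and $\asg_\ttt \defeq 0^n \circ 1^{k-3}$; only the $z$-parts will matter. This is clearly a polynomial-time construction, and $\asg_\sss,\asg_\ttt$ satisfy $\psi$: the control part $z_i \vee \bigvee_{i'\neq i}\bar{z_{i'}}$ of $D_{j,i}$ is true whenever the $z$-vector is $1^{k-3}$ (then $z_i = 1$) or has at least two $0$'s (then some $\bar{z_{i'}}$ with $i' \neq i$ is true), and both $0^{k-3}$ and $1^{k-3}$ fall into one of these cases precisely because $k-3 \geq 2$.

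The two facts I would extract from the gadget are: the control part of $D_{j,i}$ is \emph{false} exactly when the $z$-vector has a single $0$, located in coordinate $i$; hence an assignment whose $z$-part has its unique $0$ in coordinate $i$ violates precisely the clauses $\{D_{j,i} : \alpha \text{ does not satisfy } C_j\}$, where $\alpha$ is its $x$-part, while every assignment whose $z$-part equals $1^{k-3}$ or has $\geq 2$ zeros satisfies all of $\psi$. For completeness, assuming $\varphi$ has a satisfying assignment $\alpha^*$, I would route the reconfiguration as $\asg_\sss \to (\alpha^*, 0^{k-3}) \to (\alpha^*, 1^{k-3}) \to \asg_\ttt$: the first and last legs only flip $x$-variables while the $z$-part stays $0^{k-3}$ or $1^{k-3}$ (so $\psi$ is satisfied throughout), and the middle leg flips the $z_i$'s one at a time; whenever the $z$-part momentarily has a single $0$, in coordinate $i$ say, the only clauses that could be violated are the $D_{j,i}$ with $\alpha^* \not\models C_j$, of which there are none. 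This gives $\opt_\psi(\asg_\sss \reco \asg_\ttt) = 1$.

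For soundness, I would assume $\opt_\varphi < 1-\delta$, so every $x$-assignment $\alpha$ violates more than $\delta m$ clauses of $\varphi$. For any reconfiguration sequence $\sq{\asg}$ from $\asg_\sss$ to $\asg_\ttt$, each step changes at most one bit, so the Hamming weight of the $z$-part changes by at most $1$ per step; since it starts at $0$ and ends at $k-3$, at some step it equals $k-4$, i.e.\ the $z$-part has a single $0$. At that step the assignment violates $|\{j : \alpha \not\models C_j\}| > \delta m$ of the $m(k-3)$ clauses of $\psi$, hence more than a $\frac{\delta}{k-3}$-fraction, so $\val_\psi(\sq{\asg}) < 1-\frac{\delta}{k-3}$; taking the maximum over $\sq{\asg}$ gives $\opt_\psi(\asg_\sss \reco \asg_\ttt) < 1-\frac{\delta}{k-3}$. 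This establishes the gap-preserving reduction. The ``therefore'' statement then follows by plugging in H\r{a}stad's theorem \cite{hastad2001some}: \prb{Gap$_{1,\,1-\frac{1-\epsilon}{8}}$ E3-SAT} (equivalently, \prb{Gap$_{1,\,7/8+\epsilon/8}$ E3-SAT}) is $\NP$-hard for every $\epsilon \in (0,1)$, so taking $\delta \defeq \frac{1-\epsilon}{8}$ makes \prb{Gap$_{1,1-\frac{1-\epsilon}{8(k-3)}}$ \kSATReconf} $\NP$-hard.

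The step I expect to be the main obstacle is getting the control gadget exactly right so that all the requirements hold simultaneously: the two endpoint $z$-vectors must be satisfying assignments, they must lie in different connected components of the ``satisfying part'' of the $z$-hypercube (so that a $z$-vector with a single $0$ is unavoidable along any reconfiguration), such a $z$-vector must cost a $\Theta(\delta)$-fraction of the clauses in the NO case while being harmless over $\alpha^*$ in the YES case, and every clause must end up with exactly $k$ literals. It is the interplay of these constraints — in particular the requirement $k-3 \geq 2$ (i.e.\ $k \geq 5$) needed for $0^{k-3}$ itself to be ``good'' — that forces the choice of gadget; the leftover small cases $k \in \{3,4\}$ are handled separately in \cref{thm:NP}.
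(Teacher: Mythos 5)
Your proposal is correct and is essentially the paper's own reduction: you attach exactly the same gadget (the $k-3$ single-positive-literal Horn clauses over fresh variables, one appended to each clause of $\varphi$), and the completeness/soundness arguments (route through $\alpha^*$ while the control part stays safe; any sequence must pass through a control vector with a single $0$, which costs more than a $\frac{\delta}{k-3}$-fraction of the $m(k-3)$ clauses) match the paper's, with $\delta = \frac{1-\epsilon}{8}$ from H\r{a}stad giving the $\NP$-hardness claim. The only cosmetic difference is the choice of endpoints (the paper uses $1^{n+K}$ and $0^{n+K}$, flipping the $x$-part as well, whereas you keep the $x$-part fixed and flip only the control variables), which does not change the argument.
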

\begin{proof}
Let $k \geq 5$ be an integer and
$\phi$ be an E$3$-CNF formula 
consisting of $m$ clauses $C_1, \ldots, C_m$ over $n$ variables $x_1, \ldots, x_n$.
We construct an instance $(\psi, \asg_\sss, \asg_\ttt)$ of \MMkSATReconf
as follows.
Define $K \defeq k-3 \geq 2$.
Create $K$ fresh variables $y_1, \ldots, y_K$.
Let $H_1, \ldots, H_K$ denote the $K$ possible Horn clauses (with a single positive literal)
over $y_1, \ldots, y_K$; namely,
\begin{align}
    H_i \defeq \bar{y_1} \vee \cdots \vee \bar{y_{i-1}} \vee y_i \vee \bar{y_{i+1}} \vee \cdots \vee \bar{y_K}.
\end{align}
Starting from an empty formula $\psi$ over $x_1, \ldots, x_n, y_1, \ldots, y_K$,
for each clause $C_j$ of $\phi$ and
each Horn clause $H_i$,
we add $C_j \vee H_i$ to $\psi$.
Note that $\psi$ contains $Km$ clauses.
The starting and ending assignments are defined as
$\asg_\sss \defeq 1^{n+K}$ and
$\asg_\ttt \defeq 0^{n+K}$, respectively.
Since every clause of $\psi$ contains both positive and negative literals,
both $\asg_\sss$ and $\asg_\ttt$ satisfy $\psi$,
completing the description of the reduction.

We first show the completeness;
i.e., $\exists \asg, \val_\phi(\asg) = 1$ implies
$\opt_{\psi}(\asg_\sss \reco \asg_\ttt) = 1$.
Consider a reconfiguration sequence $\sq{\asg}$ 
from $\asg_\sss$ to $\asg_\ttt$ obtained by the following procedure.
\begin{itembox}[l]{\textbf{Reconfiguration sequence $\sq{\asg}$ from $\asg_\sss$ to $\asg_\ttt$}}
\begin{algorithmic}[1]
    \State let $\asg^* \colon \{x_1, \ldots, x_n\} \to \zo$ be a satisfying assignment of $\phi$.
    \LComment{start with $\asg_\sss$.}
    \For{\textbf{each} variable $x_i$}
        \If{$\asg_\sss(x_i) \neq \asg^*(x_i)$}
            \State flip $x_i$'s current assignment from $\asg_\sss(x_i)$ to $\asg^*(x_i)$.
        \EndIf
    \EndFor
    \LComment{the current assignment to $\{x_1, \ldots, x_n\}$ is equal to $\asg^*$.}
    \For{\textbf{each} variable $y_i$}
        \State flip $y_i$'s current assignment from $1$ to $0$.
    \EndFor
    \For{\textbf{each} variable $x_i$}
        \If{$\asg^*(x_i) \neq \asg_\ttt(x_i)$}
            \State flip $x_i$'s current assignment from $\asg^*(x_i)$ to $\asg_\ttt(x_i)$.
        \EndIf
    \EndFor
    \LComment{end with $\asg_\ttt$.}
\end{algorithmic}
\end{itembox}

\noindent
For any intermediate assignment $\asg^\circ$ of $\sq{\asg}$,
it holds that either
$\asg^\circ|_{\{x_1, \ldots, x_n\}} = \asg^*$,
$\asg^\circ|_{\{y_1, \ldots, y_K\}} = 1^K$, or
$\asg^\circ|_{\{y_1, \ldots, y_K\}} = 0^K$;
thus, $\asg^\circ$ satisfies $\psi$, implying that
$\opt_{\psi}(\asg_\sss \reco \asg_\ttt) \geq \val_{\psi}(\sq{\asg}) = 1$, as desired.

We then show the soundness; i.e.,
$\forall \asg, \val_\phi(\asg) < 1-\delta$ implies
$\opt_{\psi}(\asg_\sss \reco \asg_\ttt) < 1 - \frac{\delta}{K}$.
Let $\sq{\asg} = (\asg^{(1)}, \ldots, \asg^{(T)})$ be
any reconfiguration sequence from $\asg_\sss$ to $\asg_\ttt$.
There must exist an assignment $\asg^\circ$ in $\sq{\asg}$ such that
$\asg^\circ|_{\{y_1, \ldots, y_K\}}$ contains a single $0$.
Let $i^\star \in [K]$ be a unique index such that
$\asg^\circ(y_{i^\star}) = 0$ and
$\asg^\circ(y_i) = 1$ for every $i \neq i^\star$.
By construction,
$\asg^\circ$ may not satisfy a clause $C_j \vee H_{i^\star}$
whenever $\asg^\circ|_{\{x_1, \ldots, x_n\}}$ does not satisfy a clause $C_j$.
Consequently, 
$\asg^\circ$ violates more than $\delta m$ clauses of $\psi$, implying that
\begin{align}
\val_{\psi}(\sq{\asg})
\leq \val_\psi(\asg^\circ)
< \frac{K m - \delta m}{K m}
= 1 - \frac{\delta}{k-3},
\end{align}
as desired.
The $\NP$-hardness of 
\prb{Gap$_{1,1-\frac{1-\epsilon}{8(k-3)}}$ E$k$-SAT Reconfiguration}
follows from
that of 
\prb{Gap$_{1,1-\frac{1-\epsilon}{8}}$ E3-SAT}
for any real $\epsilon > 0$ due to \cite[Theorem~6.5]{hastad2001some}.
\end{proof}

Since the above reduction does not work when $k \leq 4$,
the subsequent lemmas separately give a gap-preserving reduction from
\prb{Max E3-SAT} to \MMtreSATReconf and \prb{Maxmin E4-SAT Reconfiguration}.

\begin{lemma}
\label{lem:NP:3SATReconf}
    \prb{Gap$_{1,\frac{19}{20} + \epsilon}$ \treSATReconf}
    is $\NP$-hard
    for any real $\epsilon > 0$.
\end{lemma}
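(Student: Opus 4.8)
The plan is to reduce from H\r{a}stad's optimal hardness of \prb{Max E3-SAT}, i.e., from \prb{Gap$_{1,\frac{7}{8}+\epsilon}$ E3-SAT}, which is $\NP$-hard by \cite[Theorem~6.5]{hastad2001some}. Given an E$3$-CNF formula $\phi$ with $m$ clauses $C_1,\ldots,C_m$ over $x_1,\ldots,x_n$, I would build an E$3$-CNF formula $\psi$ over the $x_i$ together with a \emph{constant} number of fresh ``barrier'' variables $\vec z$ and one auxiliary variable per clause (used only to split a width-$4$ or width-$5$ clause into width-$3$ clauses). For each $C_j$, the formula $\psi$ contains a small \emph{block} of width-$3$ clauses logically equivalent to ``$C_j \vee B(\vec z)$,'' where $B$ is a fixed predicate on $\vec z$ chosen so that (i) two designated settings $\vec z_\sss$ and $\vec z_\ttt$ satisfy $B$, making every block trivially satisfiable---so $\psi$ is \emph{always} satisfiable---and (ii) inside the Boolean cube, $\vec z_\sss$ and $\vec z_\ttt$ lie in different connected components of $\{\vec z : B(\vec z)=1\}$, so one cannot move the barrier variables from $\vec z_\sss$ to $\vec z_\ttt$ without passing through a configuration on which $B$ is violated. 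The starting and ending assignments are $\asg_\sss$ (barrier part $\vec z_\sss$, $x$-part $1^n$) and $\asg_\ttt$ (barrier part $\vec z_\ttt$, $x$-part $0^n$), both of which satisfy $\psi$, so $(\psi,\asg_\sss,\asg_\ttt)$ is a legal instance of \treSATReconf.

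For completeness, if $\phi$ has a satisfying assignment $\asg^*$, then I would route from $\asg_\sss$ to $\asg_\ttt$ in three phases: move the $x$-part from $1^n$ to $\asg^*$ while holding the barrier variables at $\vec z_\sss$ (every block satisfied through its $B$-part); move the barrier variables from $\vec z_\sss$ to $\vec z_\ttt$ one bit at a time while holding $x=\asg^*$ (every block satisfied through $C_j$ itself); move the $x$-part from $\asg^*$ to $0^n$ while holding the barrier variables at $\vec z_\ttt$. Every intermediate assignment satisfies $\psi$, so $\opt_\psi(\asg_\sss \reco \asg_\ttt)=1$. For soundness, suppose no assignment satisfies more than a $(1-\delta)$-fraction of $C_1,\ldots,C_m$ with $\delta = \tfrac18-o(1)$. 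Given any reconfiguration sequence $\sq{\asg}$ from $\asg_\sss$ to $\asg_\ttt$, the barrier variables must leave $\vec z_\sss$ for $\vec z_\ttt$, so by (ii) some assignment $\asg^\circ$ in $\sq{\asg}$ has its barrier part in a configuration violating $B$; there every block collapses to requiring $C_j$ on $\asg^\circ|_x$, so more than a $\delta$-fraction of the blocks contain a violated clause. Choosing $B$ so that every such bad configuration exposes \emph{all} clauses of $\phi$ and counting the number of width-$3$ clauses in a block against the number of them forced to be violated then yields $\val_\psi(\sq{\asg}) < \tfrac{19}{20}+\epsilon$, establishing $\NP$-hardness of \prb{Gap$_{1,\frac{19}{20}+\epsilon}$ \treSATReconf}.

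The main obstacle is the gadget design forced by the exact-width-$3$ requirement. In the $k \geq 5$ case of \cref{lem:NP:kSATReconf} one literally appends a width-$(k-3)$ Horn clause to each $C_j$, but a width-$3$ clause has no spare literal slot, so the barrier predicate must be grafted onto each clause via auxiliary variables, and one must verify that this grafting does not let an adversarial reconfiguration sequence cheaply ``repair'' the exposed clauses---for instance by flipping a shared auxiliary variable, or by routing the barrier variables through an unintended part of the cube. Getting the constant down to exactly $\tfrac{19}{20}$ (rather than a weaker bound) forces a tight balance: the per-clause block must be as small as possible, the bad barrier configurations must expose all of $\phi$ and force as many clauses of each affected block to be violated as possible, the barrier must be geometrically unavoidable, and $\psi$ must remain always satisfiable with both endpoints ``safe.'' Reconciling these competing requirements---essentially, designing $B$ and the width-$3$ encoding of ``$C_j\vee B$'' simultaneously---is the delicate part; the remaining completeness and soundness bookkeeping is routine.
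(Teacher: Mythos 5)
There is a genuine gap: the part of your plan that would constitute the actual proof---the choice of the barrier predicate $B$, its width-$3$ encoding, and the counting that is supposed to land exactly at $\frac{19}{20}$---is never carried out, and your own architecture appears quantitatively incapable of reaching that constant. If $B(\vec z)$ is to have a disconnected satisfying region inside the subcube of the barrier variables, a block of only two width-$3$ clauses cannot encode ``$C_j \vee B(\vec z)$'': two clauses leave room for at most a single barrier literal next to the three literals of $C_j$ and the linking auxiliary variable, and a single literal's satisfying set is a subcube, hence connected. The smallest natural instantiation ($B \equiv (z_1 \leftrightarrow z_2)$) turns each $C_j$ into a block of four width-$3$ clauses, and since at the barrier-crossing moment only one clause per violated block is guaranteed to be falsified, your soundness bound degrades to roughly $1-\frac{\delta}{4} \approx \frac{31}{32}$ with $\delta \approx \frac18$, which is strictly weaker than the claimed $\frac{19}{20}+\epsilon$. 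Moreover, your requirement that the barrier be \emph{geometrically unavoidable} and that every bad configuration \emph{expose all of $\phi$} is exactly what prevents the constant from being improved; making both conditions hold simultaneously in width $3$ is the step you defer, and it is not clear it can be done.

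The paper's proof takes a different route that sidesteps this tension. It appends a single escape-hatch variable $y$ to every clause (giving width-$4$ clauses $C_j \vee y$), adds $m_1=m_2=\delta m$ copies of the two padding clauses $\bar{y} \vee z_1 \vee \bar{z_2}$ and $\bar{y} \vee \bar{z_1} \vee z_2$, and sets both endpoints with $y=1$ but $(z_1,z_2)=(1,1)$ versus $(0,0)$. The barrier is deliberately \emph{bypassable}: either the sequence sets $y=0$ at some step, exposing all of $\phi$ and hence $\geq \delta m$ violated clauses, or $y$ stays $1$ and the $(z_1,z_2)$ transition must pass through $(0,1)$ or $(1,0)$, violating $m_1$ or $m_2$ padding clauses. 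Balancing the two costs gives gap $\frac{\delta}{1+2\delta} = \frac{1}{10}$ (with $\delta = \frac{1-\epsilon}{8}$ from \cite[Theorem~6.5]{hastad2001some}) on a $\{3,4\}$-width instance, and the final step is not a bespoke gadget at all but the known gap-preserving width reduction of \cite{ohsaka2023gap} from \prb{Maxmin $\{3,4\}$-SAT Reconfiguration} to \treSATReconf, which halves the gap to $\frac{1}{20}$ and yields exactly $\frac{19}{20}+\epsilon$. To repair your proposal you would essentially have to reintroduce this asymmetry (a shared escape-hatch variable plus padding clauses with tuned multiplicities, followed by the cited width reduction), at which point it coincides with the paper's construction.
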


\begin{lemma}
\label{lem:NP:4SATReconf}
    \prb{Gap$_{1,\frac{10}{11} + \epsilon}$ E4-SAT Reconfiguration}
    is $\NP$-hard
    for any real $\epsilon > 0$.
\end{lemma}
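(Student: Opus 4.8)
The plan is to prove \cref{lem:NP:4SATReconf} by a gap-preserving reduction from \prb{Max E3-SAT} — concretely from \prb{Gap$_{1,1-\frac{1-\epsilon}{8}}$ E3-SAT}, which is $\NP$-hard by H\r{a}stad \cite[Theorem~6.5]{hastad2001some} — following the template behind \cref{lem:NP:kSATReconf} and \cite[Theorem~5]{ito2011complexity}: produce an E4-CNF $\psi$ together with two \emph{efficiently writable} satisfying assignments $\asg_\sss$ (all ones) and $\asg_\ttt$ (all zeros) such that every reconfiguration between them is forced to pass through a configuration in which a block of clauses of $\psi$ collapses to a copy of $\phi$ and thereby ``tests'' it. In \cref{lem:NP:kSATReconf} this is achieved by appending to each width-$3$ clause $C_j$ one of the $K=k-3$ Horn clauses over $K$ fresh variables; the two features that make it work are that for $K\geq 2$ each Horn clause contains both a positive and a negative literal (so the appended clause is satisfied by $\asg_\sss$ and $\asg_\ttt$ regardless of $\phi$) and that the all-ones and all-zeros configurations of the fresh block lie in different components of the region where all Horn clauses hold (so some crossing is forced). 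For $k=4$ we have $K=1$, and both features collapse: the lone ``Horn clause'' is a single literal $y$, $\psi=\bigwedge_j (C_j\vee y)$ is satisfied by $\asg_\ttt$ only if $\phi$ is satisfied by the all-zeros assignment, and rewriting the $x$-part with $y$ held at $1$ keeps $\psi$ satisfied throughout, so nothing is forced.

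To repair this with the tight width-$4$ budget, I would not append a gadget to the clauses of $\phi$ but rather: (a) ``double'' the variables, replacing $x_i$ by a fresh $x_i^T$ and $\bar{x_i}$ by a fresh $x_i^F$, so $\phi$ becomes an all-positive width-$3$ formula $\phi^+$ automatically satisfied by $\asg_\sss$; (b) spend the one free literal per clause to gate $\phi^+$'s clauses by a toggle $y$, forming $C_j^+\vee\bar{y}$ (width $4$, satisfied by $\asg_\sss$ via $C_j^+$ and by $\asg_\ttt$ via $\bar{y}$); (c) add padded width-$4$ ``consistency'' clauses expressing $x_i^T\neq x_i^F$; and (d) add a small padded width-$4$ Horn-pair control gadget on a constant number of fresh variables whose only role is to put $\asg_\sss$ and $\asg_\ttt$ into different components of the control sub-cube, forcing a crossing. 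For completeness, if $\phi$ has a (non-constructive) satisfying assignment $\asg^*$, one routes $\asg_\sss$ to the consistent doubled image of $\asg^*$ with the control block parked so that every clause is vacuous or trivial, then toggles the control block across the forced crossing while the doubled assignment stays at $\asg^*$ (so every reimposed $C_j^+$ and every consistency clause holds), then routes down to $\asg_\ttt$, giving value $1$. For soundness, in a NO instance, at the forced crossing any reconfiguration has a doubled assignment that is either far from consistent (violating many consistency clauses) or corresponds to a genuine assignment of $\phi$ violating more than a $\delta$-fraction of its clauses (violating that many gated clauses); assuming WLOG that $\phi$ has bounded variable occurrences, so the number of clauses is $\Theta(n)$, dividing by the total number of clauses of $\psi$ then gives an objective value below $1-\Omega(\delta)$, and with $\delta=\frac18-\epsilon$ and the precise clause bookkeeping one gets $\frac{10}{11}+\epsilon$. \cref{lem:NP:3SATReconf} would follow from an analogous but even more constrained gadget, which accounts for its weaker constant $\frac{19}{20}$.

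The main obstacle I anticipate is the design and verification of the width-$4$ control-and-consistency gadget: every auxiliary clause must have width \emph{exactly} $4$ after padding yet never become a tautology; perfect completeness must survive, so the collapse to $\phi^+$ at the crossing is exact; and — most delicately — the gadget must genuinely disconnect $\asg_\sss$ from $\asg_\ttt$ within its feasible region \emph{without} leaving the adversary a way to flip an auxiliary variable and satisfy a control or consistency clause ``for free'' at the crossing (a purely soft-consistency construction leaks in exactly this way, collapsing to value $1$). This is the analogue of the $K=2$ fact that $\{z_1\vee\bar{z_2},\ \bar{z_1}\vee z_2\}$ isolates the two all-equal configurations of $(z_1,z_2)$, and making it coexist with width-$4$ padding while tracking clause counts tightly enough to land on $\frac{10}{11}$ rather than a weaker bound is where the real work lies.
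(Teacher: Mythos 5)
Your proposal is a plan rather than a proof, and the plan has a genuine gap exactly where you locate ``the real work'': the control-and-consistency gadget of step (d) is never constructed, and the surrounding architecture (a)--(c) creates a tension you do not resolve. Concretely: with doubled variables $x_i^T,x_i^F$ and consistency clauses expressing $x_i^T\neq x_i^F$, perfect completeness requires routing the doubled block between two \emph{different} consistent assignments (e.g.\ from the image of $1^n$ to the image of a satisfying $\asg^*$), and any such route must pass through inconsistent intermediate states since only one bit flips at a time. If the consistency clauses are ungated, this already destroys completeness $1$; if they are gated (or ``soft''), the adversary can exploit the same slack at the forced crossing in the soundness direction --- the leak you yourself flag. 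Moreover, your stated endpoints (all ones, all zeros) do not even satisfy ungated consistency clauses, so the endpoints would have to be redesigned as well. Finally, the claimed constant $\frac{10}{11}$ is asserted (``with $\delta=\frac18-\epsilon$ and the precise clause bookkeeping'') but cannot be checked without the missing gadget, and the appeal to bounded occurrences to compare $\Theta(n)$ consistency clauses with $m$ gated clauses is a further unverified ingredient that the actual proof never needs.

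The paper's construction is much lighter and avoids doubling entirely; the point you miss is that the reduction need not force the $\phi$-test to happen \emph{at} the crossing --- it suffices that the adversary pays in one of two ways. Introduce four fresh variables $y,z_1,z_2,z_3$. For each clause $C_j$ of $\phi$ add $C_j\vee y$ (width $4$), and add $\delta m$ copies of each of the three clauses $\bar{y}\vee H_i$, where $H_1,H_2,H_3$ are the width-$3$ Horn clauses with a single positive literal over $z_1,z_2,z_3$. The endpoints set all $x_i$ to $1$ (resp.\ $0$) with $(y,z_1,z_2,z_3)=(1,1,1,1)$ (resp.\ $(1,0,0,0)$); both satisfy $\psi$ unconditionally. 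Completeness: move the $x$-part to $\asg^*$, flip $y$ to $0$ (the $\bar y\vee H_i$ clauses become vacuous, the $C_j\vee y$ clauses are saved by $\asg^*$), sweep $z_1,z_2,z_3$ to $0$, flip $y$ back to $1$, then move the $x$-part to the endpoint. Soundness is a disjunction: if some step has $y=0$, more than $\delta m$ clauses $C_j\vee y$ fail there; if $y=1$ throughout, the $z$-block must leave $(1,1,1)$ toward $(0,0,0)$ and its first flip produces a pattern with exactly one zero, violating all $\delta m$ copies of one $H_i$. Either way the value is at most $1-\frac{\delta}{1+3\delta}$, which with $\delta=\frac{1-\epsilon}{8}$ gives the $\frac{10}{11}+\epsilon$ bound. (For \cref{lem:NP:3SATReconf} the paper likewise does not use a ``more constrained'' version of your gadget but the analogous two-variable Horn pair followed by a width-reduction step, which is where the extra factor in $\frac{19}{20}$ comes from.)
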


The proof of \cref{thm:NP} is now immediate from
\cref{lem:NP:kSATReconf,lem:NP:3SATReconf,lem:NP:4SATReconf}.
\begin{proof}[Proof of \cref{thm:NP}]
The following hold, as desired:
\begin{itemize}
\item 
    By \cref{lem:NP:3SATReconf},
    \prb{Gap$_{1,1-\frac{1}{8\cdot 3}}$ \treSATReconf} is $\NP$-hard.
\item
    By \cref{lem:NP:4SATReconf},
    \prb{Gap$_{1,1-\frac{1}{8\cdot 4}}$ E4-SAT Reconfiguration} is $\NP$-hard.
\item
    Substituting $\epsilon$ of \cref{lem:NP:kSATReconf} by $\frac{3}{k}$ derives that
    \prb{Gap$_{1,1-\frac{1}{8k}}$ \kSATReconf} is $\NP$-hard for each integer $k \geq 5$.
    \qedhere
\end{itemize}
\end{proof}

\begin{proof}[Proof of \cref{lem:NP:3SATReconf}]
We first demonstrate a gap-preserving reduction from
\prb{Gap$_{1,1-\delta}$ E3-SAT}
to
\prb{Gap$_{1,\frac{\delta}{1+2\delta}}$ $\{3,4\}$-SAT Reconfiguration},
where ``$\{3,4\}$-SAT'' means that each clause has width $3$ or $4$.
Let $\phi$ be an E$3$-CNF formula consisting of
$m$ clauses $C_1, \ldots, C_m$ over $n$ variables $x_1, \ldots, x_n$.
We construct an instance $(\psi, \asg_\sss, \asg_\ttt)$
of \prb{Maxmin $\{3,4\}$-SAT Reconfiguration} as follows.
Create a CNF formula $\psi$ by the following procedure,
which is parameterized by $m_1$ and $m_2$.
\begin{itembox}[l]{\textbf{Construction of $\psi$}}
\begin{algorithmic}[1]
    \State introduce three fresh variables, denoted by $y$, $z_1$, and $z_2$.
    \State let $\psi$ be an empty formula over $n+3$ variables $x_1, \ldots, x_n, y, z_1, z_2$.
    \For{\textbf{each} $1 \leq j \leq m$}
        \State add a new clause $C_j \vee y$ to $\psi$.
    \EndFor
    \State add $m_1$ copies of a clause $\bar{y} \vee z_1 \vee \bar{z_2}$ to $\psi$.
    \State add $m_2$ copies of a clause $\bar{y} \vee \bar{z_1} \vee z_2$ to $\psi$.
\end{algorithmic}
\end{itembox}

\noindent
Note that $\psi$ consists of $m+m_1+m_2$ clauses, each of which has width $3$ or $4$.
The starting and ending assignments,
denoted by $\asg_\sss, \asg_\ttt \colon \{x_1, \ldots, x_n, y, z_1, z_2\} \to \zo$,
are defined as follows:
\begin{itemize}
    \item $\asg_\sss(x_i) \defeq 1$ for every $i \in [n]$ and
    $\asg_\sss(y, z_1, z_2) \defeq (1,1,1)$;
    \item $\asg_\ttt(x_i) \defeq 0$ for every $i \in [n]$ and
    $\asg_\ttt(y, z_1, z_2) \defeq (1,0,0)$.
\end{itemize}
Since $\asg_\sss$ and $\asg_\ttt$ satisfy $\psi$,
this completes the description of the reduction.

We first show the completeness; i.e.,
$\exists \asg, \val_\phi(\asg) = 1$ implies $\opt_{\psi}(\asg_\sss \reco \asg_\ttt) = 1$.
Consider a reconfiguration sequence $\sq{\asg}$ from $\asg_\sss$ to $\asg_\ttt$
obtained by the following procedure.
\begin{itembox}[l]{\textbf{Reconfiguration sequence $\sq{\asg}$ from $\asg_\sss$ to $\asg_\ttt$}}
\begin{algorithmic}[1]
    \State let $\asg^* \colon \{x_1, \ldots, x_n\} \to \{0,1\}$
    be a satisfying assignment of $\phi$.
    \LComment{start with $\asg_\sss$.}
    \For{\textbf{each} variable $x_i$}
        \If{$\asg_\sss(x_i) \neq \asg^*(x_i)$}
            \State flip $x_i$'s current assignment from $\asg_\sss(x_i)$ to $\asg^*(x_i)$.
        \EndIf
    \EndFor
    \State
    flip the assignment to $y$, $z_1$, $z_2$, and $y$ in this order.
    \LComment{the above step gives rise to the following reconfiguration sequence of assignments to $(y,z_1,z_2)$\textup{:}
    $((1,1,1), (0,1,1), (0,0,1), (0,0,0), (1,0,0))$.}
    \For{\textbf{each} variable $x_i$}
        \If{$\asg^*(x_i) \neq \asg_\ttt(x_i)$}
            \State flip $x_i$'s current assignment from $\asg^*(x_i)$ to $\asg_\ttt(x_i)$.
        \EndIf
    \EndFor
    \LComment{end with $\asg_\ttt$.}
\end{algorithmic}
\end{itembox}

\noindent
For any intermediate assignment $\asg^\circ$ of $\sq{\asg}$, the following hold:
\begin{itemize}
    \item Since $\asg^\circ(y) = 1$ or $\asg^\circ|_{\{x_1, \ldots, x_n\}} = \asg^*$,
        each clause $C_j \vee y$ is satisfied.
    \item Since $\asg^\circ(y,z_1,z_2) \neq (1,0,1)$,
         a clause $\bar{y} \vee z_1 \vee \bar{z_2}$ is satisfied.
    \item Since $\asg^\circ(y,z_1,z_2) \neq (1,1,0)$,
         a clause $\bar{y} \vee \bar{z_1} \vee z_2$ is satisfied.
\end{itemize}
Therefore, $\sq{\asg}$ satisfies $\psi$; i.e., $\opt_\psi(\asg_\sss \reco \asg_\ttt) = 1$.

We then show the soundness; i.e.,
$\forall \asg, \val_\phi(\asg) \leq 1-\delta$ implies
$\opt_{\psi}(\asg_\sss \reco \asg_\ttt) \leq 1-\frac{\delta}{1+2\delta}$.
Let $\sq{\asg} = (\asg^{(1)}, \ldots, \asg^{(T)})$
be any reconfiguration sequence from $\asg_\sss$ to $\asg_\ttt$.
We bound its value by the following case analysis:
\begin{description}
    \item[(Case 1)] $\exists t, \asg^{(t)}(y) = 0$. \\
        Each clause $C_j \vee y$ of $\psi$ is satisfied by $\asg^{(t)}$
        if and only if $C_j$ is satisfied by $\asg^{(t)}|_{\{x_1, \ldots, x_n\}}$.
        By assumption, at least $\delta m$ clauses of $\psi$
        must be unsatisfied by such $\asg^{(t)}$.
    \item[(Case 2)] $\forall t, \asg^{(t)}(y) = 1$. \\
        Since
        $\asg^{(1)}(y,z_1,z_2) = (1,1,1)$ and
        $\asg^{(T)}(y,z_1,z_2) = (1,0,0)$,
        there is some assignment $\asg^\circ$ in $\sq{\asg}$ such that
        $\asg^\circ(y,z_1,z_2)$ is
        $(1,0,1)$ or $(1,1,0)$.
        In the former case,
        $\asg^\circ$ violates
        $m_1$ clauses in the form of $\bar{y} \vee z_1 \vee \bar{z_2}$;
        in the latter case,
        $\asg^\circ$ violates
        $m_2$ clauses in the form of $\bar{y} \vee \bar{z_1} \vee z_2$.
\end{description}
In either case,
the maximum number of clauses violated by $\sq{\asg}$ must be at least
\begin{align}
    \min\bigl\{ \delta m, m_1, m_2 \bigr\}.
\end{align}
Letting $m_1 \defeq \delta m$ and
$m_2 \defeq \delta m$, we have
\begin{align}
\begin{aligned}
    \val_{\psi}(\sq{\asg})
    \leq 1 - \frac{\min\bigl\{ \delta m, m_1, m_2 \bigr\}}{m+m_1+m_2}
    = 1 - \frac{\delta}{1+2 \delta},
\end{aligned}
\end{align}
as desired.

Since \prb{Gap$_{1,1-\frac{1-\epsilon}{8}}$ E3-SAT} is $\NP$-hard
for any real $\epsilon > 0$ \cite[Theorem~6.5]{hastad2001some},
we let $\delta \defeq \frac{1-\epsilon}{8}$ to have that
\prb{Gap$_{1,1-\frac{\delta}{1+2\delta}}$ $\{3,4\}$-SAT Reconfiguration} is $\NP$-hard,
where $1-\frac{\delta}{1+2\delta}$ is bounded as follows:
\begin{align}
    1 - \frac{\delta}{1+2\delta}
    = 1 - \frac{1-\epsilon}{10-2\epsilon}
    \leq 1 - \frac{1-\epsilon}{10}.
\end{align}
By \cite{ohsaka2023gap},
\prb{Gap$_{1,1-\frac{1-\epsilon}{10}}$ $\{3,4\}$-SAT Reconfiguration}
is further reduced to
\prb{Gap$_{1,1-\frac{1-\epsilon}{20}}$ \treSATReconf}
in polynomial time,
which completes the proof.
\end{proof}

\begin{proof}[Proof of \cref{lem:NP:4SATReconf}]
We demonstrate a gap-preserving reduction from
\prb{Gap$_{1,1-\delta}$ E$3$-SAT}
to
\prb{Gap$_{1,\frac{\delta}{1+3\delta}}$ E4-SAT Reconfiguration}.
Let $\phi$ be an E$3$-CNF formula
consisting of $m$ clauses $C_1, \ldots, C_m$ over $n$ variables $x_1, \ldots, x_n$.
We construct an instance
$(\psi, \asg_\sss, \asg_\ttt)$
of \prb{Maxmin E4-SAT Reconfiguration} as follows.
Create a CNF formula $\psi$ by the following procedure,
which is parameterized by $m_1$, $m_2$, and $m_3$.
\begin{itembox}[l]{\textbf{Construction of $\psi$}}
\begin{algorithmic}[1]
    \State introduce four fresh variables, denoted by $y$, $z_1$, $z_2$, and $z_3$.
    \State let $\psi$ be an empty formula over $n+4$ variables $x_1, \ldots, x_n, y, z_1, z_2, z_3$.
    \For{\textbf{each} $1 \leq j \leq m$}
        \State add a new clause $C_j \vee y$ to $\psi$.
    \EndFor
    \State add $m_1$ copies of a clause $\bar{y} \vee z_1 \vee \bar{z_2} \vee \bar{z_3}$ to $\psi$.
    \State add $m_2$ copies of a clause $\bar{y} \vee \bar{z_1} \vee z_2 \vee \bar{z_2}$ to $\psi$.
    \State add $m_3$ copies of a clause $\bar{y} \vee \bar{z_1} \vee \bar{z_2} \vee z_3$ to $\psi$.
\end{algorithmic}
\end{itembox}

\noindent
Note that $\psi$ consists of $m+m_1+m_2+m_3$ clauses of width $4$.
The starting and ending assignments, 
denoted by $\asg_\sss, \asg_\ttt \colon \{x_1, \ldots, x_n, y, z_1, z_2, z_3\} \to \zo$,
are defined as follows:
\begin{itemize}
    \item $\asg_\sss(x_i) \defeq 1$ for every $i \in [n]$ and
    $\asg_\sss(y, z_1, z_2, z_3) \defeq (1,1,1,1)$;
    \item $\asg_\ttt(x_i) \defeq 0$ for every $i \in [n]$ and
    $\asg_\ttt(y, z_1, z_2, z_3) \defeq (1,0,0,0)$.
\end{itemize}
Since $\asg_\sss$ and $\asg_\ttt$ satisfy $\psi$,
this completes the description of the reduction.

We first show the completeness; i.e.,
$\exists \asg, \val_\phi(\asg) = 1$ implies $\opt_{\psi}(\asg_\sss \reco \asg_\ttt) = 1$.
Consider a reconfiguration sequence $\sq{\asg}$ from $\asg_\sss$ to $\asg_\ttt$ obtained by the following procedure.
\begin{itembox}[l]{\textbf{Reconfiguration sequence $\sq{\asg}$ from $\asg_\sss$ to $\asg_\ttt$}}
\begin{algorithmic}[1]
    \State let $\asg^* \colon \{x_1, \ldots, x_n\} \to \{0,1\}$
    be a satisfying assignment of $\phi$.
    \LComment{start with $\asg_\sss$.}
    \For{\textbf{each} variable $x_i$}
        \If{$\asg_\sss(x_i) \neq \asg^*(x_i)$}
            \State flip $x_i$'s current assignment from $\asg_\sss(x_i)$ to $\asg^*(x_i)$.
        \EndIf
    \EndFor
    \State
    flip the assignment to $y$, $z_1$, $z_2$, $z_3$, and $y$ in this order.
    \LComment{the above step gives rise to the following reconfiguration sequence of assignments to $(y,z_1,z_2,z_3)$\textup{:}
    $((1,1,1,1), (0,1,1,1), (0,0,1,1), (0,0,0,1), (0,0,0,0), (1,0,0,0))$.}
    \For{\textbf{each} variable $x_i$}
        \If{$\asg^*(x_i) \neq \asg_\ttt(x_i)$}
            \State flip $x_i$'s current assignment from $\asg^*(x_i)$ to $\asg_\ttt(x_i)$.
        \EndIf
    \EndFor
    \LComment{end with $\asg_\ttt$.}
\end{algorithmic}
\end{itembox}

\noindent
For any intermediate assignment $\asg^\circ$ of $\sq{\asg}$, the following hold:
\begin{itemize}
    \item Since $\asg^\circ(y) = 1$ or $\asg^\circ|_{\{x_1, \ldots, x_n\}} = \asg^*$,
        each clause $C_j \vee y$ is satisfied.
    \item Since $\asg^\circ(y,z_1,z_2,z_3) \neq (1,0,1,1)$,
         a clause $\bar{y} \vee z_1 \vee \bar{z_2} \vee \bar{z_3}$ is satisfied.
    \item Since $\asg^\circ(y,z_1,z_2,z_3) \neq (1,1,0,1)$,
         a clause $\bar{y} \vee \bar{z_1} \vee z_2 \vee \bar{z_3}$ is satisfied.
    \item Since $\asg^\circ(y,z_1,z_2,z_3) \neq (1,1,1,0)$,
         a clause $\bar{y} \vee \bar{z_1} \vee \bar{z_2} \vee z_3$ is satisfied.
\end{itemize}
Therefore, $\sq{\asg}$ satisfies $\psi$; i.e., $\opt_\psi(\asg_\sss \reco \asg_\ttt) = 1$.

We then show the soundness; i.e.,
$\forall \asg, \val_\phi(\asg) \leq 1-\delta$ implies
$\opt_{\psi}(\asg_\sss \reco \asg_\ttt) \leq 1-\frac{\delta}{1+3\delta}$.
Let $\sq{\asg} = (\asg^{(1)}, \ldots, \asg^{(T)})$
be any reconfiguration sequence from $\asg_\sss$ to $\asg_\ttt$.
We bound its value by the following case analysis:
\begin{description}
    \item[(Case 1)] $\exists t, \asg^{(t)}(y) = 0$. \\
        Each clause $C_j \vee y$ of $\psi$ is satisfied by $\asg^{(t)}$
        if and only if $C_j$ is satisfied by $\asg^{(t)}|_{\{x_1, \ldots, x_n\}}$.
        By assumption, at least $\delta m$ clauses of $\psi$
        must be unsatisfied by such $\asg^{(t)}$.
    \item[(Case 2)] $\forall t, \asg^{(t)}(y) = 1$. \\
        Since
        $\asg^{(1)}(y,z_1,z_2,z_3) = (1,1,1,1)$ and
        $\asg^{(T)}(y,z_1,z_2,z_3) = (1,0,0,0)$,
        there is some assignment $\asg^\circ$ in $\sq{\asg}$ such that
        $\asg^\circ(y,z_1,z_2,z_3)$ is 
        $(1,0,1,1)$, $(1,1,0,1)$, or $(1,1,1,0)$.
        In the first case,
        $\asg^\circ$ violates
        $m_1$ clauses in the form of $\bar{y} \vee z_1 \vee \bar{z_2} \vee \bar{z_3}$;
        in the second case,
        $\asg^\circ$ violates
        $m_2$ clauses in the form of $\bar{y} \vee \bar{z_1} \vee z_2 \vee \bar{z_3}$;
        in the third case,
        $\asg^\circ$ violates
        $m_3$ clauses in the form of $\bar{y} \vee \bar{z_1} \vee \bar{z_2} \vee z_3$.
\end{description}
In either case,
the maximum number of clauses violated by $\sq{\asg}$ must be at least
\begin{align}
    \min\bigl\{ \delta m, m_1, m_2, m_3 \bigr\}.
\end{align}
Letting
$m_1 \defeq \delta m$,
$m_2 \defeq \delta m$, and
$m_3 \defeq \delta m$, we have
\begin{align}
\begin{aligned}
    \val_{\psi}(\sq{\asg})
    \leq 1 - \frac{\min\bigl\{ \delta m, m_1, m_2, m_3 \bigr\}}{m+m_1+m_2+m_3}
    = 1 - \frac{\delta}{1+3 \delta},
\end{aligned}
\end{align}
as desired.

Since \prb{Gap$_{1,1-\frac{1-\epsilon}{8}}$ E3-SAT} is $\NP$-hard
for any real $\epsilon > 0$ \cite[Theorem~6.5]{hastad2001some},
we let $\delta \defeq \frac{1-\epsilon}{8}$ to have that
\prb{Gap$_{1,1-\frac{\delta}{1+3\delta}}$ E4-SAT Reconfiguration} is $\NP$-hard,
where $1-\frac{\delta}{1+3\delta}$ is bounded as follows:
\begin{align}
    1 - \frac{\delta}{1+3\delta}
    = 1 - \frac{1-\epsilon}{11-3\epsilon}
    \leq 1 - \frac{1-\epsilon}{11},
\end{align}
which completes the proof.
\end{proof}

\section{Omitted Proofs}
\label{app}

\begin{proof}[Proof of \cref{fct:alg:binom}]
Using the fact that
\begin{align}
    \binom{n+1}{k+1} = \frac{n+1}{k+1} \binom{n}{k},
\end{align}
we have
\begin{align}
    \sum_{0 \leq k \leq n} \binom{n}{k} \frac{1}{k+1}
    = \sum_{0 \leq k \leq n} \binom{n+1}{k+1} \frac{1}{n+1}
    = \frac{1}{n+1} \sum_{0 \leq k \leq n} \binom{n+1}{k}
    = \frac{2^{n+1} - 1}{n+1}.
\end{align}
Similarly, we have
\begin{align}
\begin{aligned}
    \sum_{0 \leq k \leq n} \binom{n}{k} \frac{1}{k+2}
    & = \sum_{0 \leq k \leq n} \binom{n+1}{k+1} \frac{k+1}{n+1} \frac{1}{k+2} \\
    & \underbrace{=}_{\text{replace } k \text{ by } k-1} \frac{1}{n+1} \sum_{1 \leq k \leq n+1} \binom{n+1}{k} \left(1-\frac{1}{k+1}\right) \\
    & = \frac{1}{n+1} \Biggl[
        \underbrace{\sum_{0 \leq k \leq n+1} \binom{n+1}{k}}_{= 2^{n+1}}
        - \underbrace{\sum_{0 \leq k \leq n+1} \binom{n+1}{k} \frac{1}{k+1}}_{= \frac{2^{n+2}-1}{n+2}}
    \Biggr] \\
    & = \frac{2^{n+1}\cdot n + 1}{(n+1)(n+2)},
\end{aligned}
\end{align}
as desired.
\end{proof}

\begin{proof}[Proof of \cref{cor:hard0}]
To prove \cref{cor:hard0}, we use the following claim, which will be proven later.

\begin{claim}
\label{clm:hard:gamma}
For any integer $k \geq 3$,
any real $\gamma > 1$ with $\gamma k \in \bbN$, and
any real $\epsilon > 0$,
there exists a gap-preserving reduction from
\prb{Gap$_{1,1-\epsilon}$ E$(\gamma k)$-SAT Reconfiguration}
to
\prb{Gap$_{1,1-\frac{\epsilon}{\Gamma}}$ \kSATReconf},
where $\Gamma \defeq \left\lceil \frac{\gamma k}{k-2} \right\rceil$.
\end{claim}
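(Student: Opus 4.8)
The plan is to apply the classical clause-splitting (``chain'') construction, tailored so that the auxiliary variables can be reconfigured in step with the original ones. Given an instance $(\phi,\asg_\sss,\asg_\ttt)$ of \prb{E$(\gamma k)$-SAT Reconfiguration} with clauses $C_1,\dots,C_m$ over $x_1,\dots,x_n$, I would build an E$k$-CNF formula $\psi$ as follows. Partition the $\gamma k$ literals of each clause $C_j=\ell_{j,1}\vee\cdots\vee\ell_{j,\gamma k}$ into $\Gamma$ consecutive blocks, each of size between $1$ and $k-2$; this is possible since $(k-2)(\Gamma-1)<\gamma k\leq(k-2)\Gamma$ and $\Gamma\leq\gamma k$. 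Introduce $\Gamma-1$ fresh variables $y_{j,1},\dots,y_{j,\Gamma-1}$ and replace $C_j$ by the $\Gamma$ clauses
\[
 C_{j,1}=(\text{block }1)\vee y_{j,1},\quad
 C_{j,i}=\bar{y_{j,i-1}}\vee(\text{block }i)\vee y_{j,i}\ \ (2\leq i\leq\Gamma-1),\quad
 C_{j,\Gamma}=\bar{y_{j,\Gamma-1}}\vee(\text{block }\Gamma),
\]
each padded to width exactly $k$ by repeating its last original literal (which changes neither the width count nor the satisfied-by relation). Then $\psi$ is an E$k$-CNF formula with $\Gamma m$ clauses over $n+m(\Gamma-1)$ variables, produced in polynomial time. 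For an $x$-assignment $\alpha$ satisfying $C_j$, let $b_j(\alpha)$ be the smallest block index holding a literal true under $\alpha$, and let $F_j(\alpha)$ be the ``canonical'' fresh setting with $y_{j,i}=1$ for $i<b_j(\alpha)$ and $y_{j,i}=0$ otherwise; a short case check shows $F_j(\alpha)$ satisfies $C_{j,1},\dots,C_{j,\Gamma}$. I would define $\bsg_\sss$ (resp.\ $\bsg_\ttt$) to equal $\asg_\sss$ (resp.\ $\asg_\ttt$) on the $x$-variables and $F_j(\asg_\sss)$ (resp.\ $F_j(\asg_\ttt)$) on each block of fresh variables, so both satisfy $\psi$.

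The analysis rests on two elementary facts about the gadget, each provable by case analysis on which literal of a block clause is satisfied. \textup{(i)} For every $x$-assignment $\alpha$, some assignment to $y_{j,1},\dots,y_{j,\Gamma-1}$ satisfies all of $C_{j,1},\dots,C_{j,\Gamma}$ \emph{iff} $\alpha$ satisfies $C_j$. \textup{(ii)} \emph{(Sliding.)} If $\alpha$ makes true literals of $C_j$ in blocks $b$ and $b'$, then $F_j(\alpha)$ and the canonical setting pointing at $b'$ are joined, within the satisfying assignments of $C_{j,1},\dots,C_{j,\Gamma}$ with $\alpha$ held fixed, by flipping the $y_{j,\cdot}$ one at a time in the monotone order that moves the ``switch point'' from $b$ to $b'$. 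For completeness, suppose $\opt_\phi(\asg_\sss\reco\asg_\ttt)=1$ and fix a reconfiguration sequence $(\asg^{(1)},\dots,\asg^{(T)})$ of satisfying assignments of $\phi$. Between the images $\asg^{(t)}\circ(F_j(\asg^{(t)}))_j$ and $\asg^{(t+1)}\circ(F_j(\asg^{(t+1)}))_j$, where $\asg^{(t)}$ and $\asg^{(t+1)}$ differ only in $x_v$, I would insert: \textup{(a)} for each $C_j\ni x_v$, a sliding reconfiguration of its fresh variables from $F_j(\asg^{(t)})$ to the canonical setting pointing at a block $\beta_j$ containing a literal of $C_j$ true under $\asg^{(t)}$ and distinct from $x_v,\bar{x_v}$ --- such a literal exists, else $\asg^{(t+1)}$ would violate $C_j$, and this canonical setting keeps all blocks of $C_j$ satisfied for both values of $x_v$; \textup{(b)} the flip of $x_v$; \textup{(c)} for each $C_j\ni x_v$, a sliding reconfiguration to $F_j(\asg^{(t+1)})$. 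Clauses $C_j\not\ni x_v$ are untouched. The concatenation is a reconfiguration sequence from $\bsg_\sss$ to $\bsg_\ttt$ consisting of satisfying assignments of $\psi$, so $\opt_\psi(\bsg_\sss\reco\bsg_\ttt)=1$.

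For soundness I argue the contrapositive: assume $\opt_\psi(\bsg_\sss\reco\bsg_\ttt)\geq 1-\tfrac{\epsilon}{\Gamma}$ and take a reconfiguration sequence $\sq{\bsg}$ attaining it. Projecting every assignment of $\sq{\bsg}$ onto $x_1,\dots,x_n$ and deleting consecutive repeats yields a reconfiguration sequence $\sq{\asg}$ from $\asg_\sss$ to $\asg_\ttt$. For any $\bsg^\circ\in\sq{\bsg}$ with projection $\asg^\circ$: whenever $\asg^\circ$ violates $C_j$, fact \textup{(i)} forces $\bsg^\circ$ to violate some $C_{j,i}$, and these are distinct clauses of $\psi$ for distinct $j$; hence the number of clauses of $\phi$ violated by $\asg^\circ$ is at most the number of clauses of $\psi$ violated by $\bsg^\circ$, which is at most $\tfrac{\epsilon}{\Gamma}\cdot\Gamma m=\epsilon m$. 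Thus $\val_\phi(\asg^\circ)\geq 1-\epsilon$ for every $\asg^\circ$, so $\opt_\phi(\asg_\sss\reco\asg_\ttt)\geq 1-\epsilon$, which is exactly the required soundness.

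The one genuinely delicate step is the sliding lemma \textup{(ii)} together with its use in the completeness direction: I must verify that the monotone, one-bit-at-a-time reconfiguration of the chain variables keeps \emph{every} one of the $\Gamma$ block clauses satisfied throughout, and that step \textup{(a)} can be carried out simultaneously for all clauses containing $x_v$ --- their fresh-variable sets are disjoint, and each chosen canonical setting satisfies the (unique) block holding $x_v$ via a literal insensitive to $x_v$, so the single flip of $x_v$ is legal for all of them at once. The remaining items --- the block-partition arithmetic $(k-2)(\Gamma-1)<\gamma k\leq(k-2)\Gamma$, padding to width exactly $k$, and the routine case checks behind facts \textup{(i)}--\textup{(ii)} --- are bookkeeping.
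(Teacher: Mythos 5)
Your reduction is essentially the paper's: the same per-clause chain gadget with $\Gamma-1$ fresh variables (the paper's clauses $\bar{y_{j,1}}\vee S_1$, $\bar{y_{j,i}}\vee y_{j,i-1}\vee S_i$, $y_{j,\Gamma-1}\vee S_\Gamma$ become yours after negating the fresh variables; the paper makes every block have exactly $k-2$ distinct literals and appends an extra literal of $C_j$ to the two end clauses instead of padding by repetition), the same canonical lifting of assignments, and the same projection argument for soundness, which is fine. The genuine flaw is in your sliding lemma (ii) as stated. With your orientation and canonical settings ($y_{j,i}=1$ iff $i<b$), moving the switch point monotonically from $b$ to $b'$ (say $b<b'$, i.e.\ flipping $y_{j,b},y_{j,b+1},\dots,y_{j,b'-1}$ to $1$ in increasing order) passes through the canonical setting pointing at every intermediate block $c$ with $b<c<b'$. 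If block $c$ contains no literal true under the fixed $\alpha$, the clause $\bar{y_{j,c-1}}\vee(\text{block }c)\vee y_{j,c}$ is violated there, since $y_{j,c-1}=1$, the block is all false, and $y_{j,c}=0$. Concretely, take $\Gamma=3$ with $\alpha$ true only in blocks $1$ and $3$: going from $(y_1,y_2)=(0,0)$ to $(1,1)$ by flipping $y_1$ first yields $(1,0)$, which violates $\bar{y_1}\vee(\text{block }2)\vee y_2$. Hence the intermediate assignments produced in your steps (a) and (c) need not satisfy $\psi$, and the completeness argument breaks as written.

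The repair is to flip in the opposite order: for $b<b'$ set $y_{j,b'-1},y_{j,b'-2},\dots,y_{j,b}$ to $1$ in that (decreasing) order, and symmetrically for $b>b'$. Then every intermediate assignment has ones exactly on $\{1,\dots,b-1\}\cup\{c,\dots,b'-1\}$ for some $b\le c\le b'$, and each chain clause is satisfied either by its $\bar{y}$-guard, by its trailing $y$, or by the true literal in block $b$ or block $b'$ (both true under the fixed $\alpha$ by hypothesis); equivalently, first extend the pointer so that both $b$ and $b'$ are covered, and only then retract. With this corrected schedule your (a)--(b)--(c) interleaving and the soundness direction go through, and the argument then coincides with the paper's proof, which uses the same gadget and defers exactly these reconfiguration details to \cite{gopalan2009connectivity} and \cite{ohsaka2023gap}. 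A minor point: pad each chain clause to width exactly $k$ with additional distinct literals of $C_j$ (as the paper does) rather than by repeating a literal, so that every clause has width exactly $k$ under the paper's definition of E$k$-CNF.
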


Let $\epsilon \defeq 0.2$ and
$k_0(\epsilon) \in \bbN$ be an integer as defined in \cref{thm:hard9.333}.
For any integer $k \geq k_0(\epsilon)$,
\prb{Gap$_{1,1-\frac{1}{10k}}$ \kSATReconf} is $\PSPACE$-hard
by \cref{thm:hard9.333}.
For any integer $k$ with $3 \leq k < k_0(\epsilon)$,
we apply \cref{clm:hard:gamma} to reduce
\prb{Gap$_{1,1-\frac{1}{10\cdot k_0(\epsilon)}}$ E$(k_0(\epsilon))$-SAT Reconfiguration} to
\prb{Gap$_{1,1-\frac{1}{10\cdot k_0(\epsilon) \cdot \Gamma}}$ \kSATReconf}, where
\begin{align}
    \Gamma
    = \left\lceil \frac{k_0(\epsilon)}{k-2} \right\rceil
    \underbrace{\leq}_{k \geq 3} k_0(\epsilon).
\end{align}
Letting
\begin{align}
    \delta_0 \defeq \frac{1}{10\cdot k_0(\epsilon)^2},
\end{align}
we derive that
\prb{Gap$_{1,1-\frac{\delta_0}{k}}$ \kSATReconf}
is $\PSPACE$-hard for every integer $k \geq 3$,
as desired.
\end{proof}

\begin{proof}[Proof of \cref{clm:hard:gamma}]
Let $(\phi,\asg_\sss,\asg_\ttt)$ be an instance of \prb{Maxmin E$(\gamma k)$-SAT Reconfiguration},
where $\phi$ is an E$(\gamma k)$-CNF formula
consisting of $m$ clauses $C_1, \ldots, C_m$ over $n$ variables $x_1, \ldots, x_n$, and
$\asg_\sss,\asg_\ttt$ are satisfying assignments for $\phi$.
We construct an instance $(\psi,\bsg_\sss,\bsg_\ttt)$ of \MMkSATReconf as follows.
Let $\Gamma \defeq \left\lceil \frac{\gamma k}{k-2} \right\rceil$.
Starting from an empty clause $\psi$,
for each clause $C_j = \ell_1 \vee \cdots \vee \ell_{\gamma k}$ of $\phi$,
we add to $\psi$ the $\Gamma$ clauses of width $k$ generated by the following procedure.
\begin{itembox}[l]{\textbf{Construction of $\Gamma$ clauses from a clause $C_j = \ell_1 \vee \cdots \vee \ell_{\gamma k}$ of $\phi$}}
\begin{algorithmic}[1]
    \State create $\Gamma$ sets of literals, denoted by $S_1, \ldots, S_\Gamma$,
    such that the following hold:
    \begin{itemize}
        \item each set $S_i$ contains exactly $k-2$ literals of $C_j$;
        \item $S_1, \ldots, S_\Gamma$ cover the $\gamma k$ literals of $C_j$; namely,
            $S_1 \cup \cdots \cup S_\Gamma = \{\ell_1, \ldots, \ell_{\gamma k}\}$.
    \end{itemize}
    \LComment{
        such a family of $\Gamma$ sets always exists because 
        $\displaystyle \frac{\gamma k}{k-2} \leq \Gamma$.
    }
    \State append a single literal of $C_j$ (say $\ell_1$) to each of $S_1$ and $S_\Gamma$,
    so that $|S_1| = |S_\Gamma| = k-1$ and $|S_2| = \cdots = |S_{\Gamma-1}| = k-2$.
    \State introduce $\Gamma-1$ fresh variables, denoted by $y_{j,1}, \ldots, y_{j,\Gamma-1}$.
    \State generate $\Gamma$ clauses of width $k$ representing the following formulas:
    \begin{align}
    \begin{aligned}
        y_{j,1} & \implies \biggl( \bigvee_{\ell_i \in S_1} \ell_i \biggr), \\
        y_{j,2} & \implies \biggl(y_{j,1} \vee \bigvee_{\ell_i \in S_2} \ell_i \biggr), \\
        & \vdots \\
        y_{j,\Gamma-1} & \implies \biggl(y_{j,\Gamma-2} \vee \bigvee_{\ell_i \in S_{\Gamma-1}} \ell_i \biggr), \\
        1 & \implies \biggl(y_{j,\Gamma-1} \vee \bigvee_{\ell_i \in S_\Gamma} \ell_i \biggr).
    \end{aligned}
    \end{align}
\end{algorithmic}    
\end{itembox}

\noindent
For a satisfying assignment $\asg \colon \{x_1, \ldots, x_n\} \to \zo$ for $\phi$,
we consider an assignment
$\bsg \colon \{x_1, \ldots, x_n, \allowbreak y_{1,1}, \ldots, y_{m,\Gamma-1}\} \to \zo$
for $\psi$ such that
$\bsg|_{\{x_1, \ldots, x_n\}} = \asg_{\{x_1, \ldots, x_n\}}$ and
$\bsg(y_{j,i})$ for each variable $y_{j,i}$ is defined as follows:
\begin{align}
    \bsg(y_{j,i}) \defeq 
    \begin{cases}
        0 & \text{ if } i \leq i_j - 1, \\
        1 & \text{ if } i > i_j - 1,
    \end{cases}
\end{align}
where $i_j \in [\Gamma]$ is an index such that
some literal of $S_{i_j}$ is satisfied by $\asg$.
Construct the starting assignment $\bsg_\sss$ from $\asg_\sss$ and
the ending assignment $\bsg_\ttt$ from $\asg_\ttt$ according to this procedure.
Observe that both $\bsg_\sss$ and $\bsg_\ttt$ satisfy $\psi$,
which completes the description of the reduction.
Similarly to \cite[Lemma~3.5]{gopalan2009connectivity} and \cite[Claim~3.4]{ohsaka2023gap},
we have the following completeness and soundness, as desired:
\begin{itemize}
    \item (Completeness)
        If $\opt_\phi\bigl(\asg_\sss \reco \asg_\ttt\bigr) = 1$,
        then $\opt_\psi\bigl(\bsg_\sss \reco \bsg_\ttt\bigr) = 1$.
    \item (Soundness)
        If $\opt_\phi\bigl(\asg_\sss \reco \asg_\ttt\bigr) < 1 - \epsilon$,
        then $\opt_\psi\bigl(\bsg_\sss \reco \bsg_\ttt\bigr) < 1 - \frac{\epsilon}{\Gamma}$.
        \qedhere
\end{itemize}
\end{proof}

\emergencystretch=1em  
\printbibliography

\end{document}